\def\B{\mathscr B}
\def\C{\mathbb C}
\def\d{\mathrm d}
\def\D{\mathbb D}
\def\Drond{\mathscr D}
\def\dom{\mathcal D}
\def\Dt{\mathfrak D}
\def\f{\mathfrak f}
\def\F{\mathscr F}
\def\G{\mathcal G}
\def\H{\mathcal H}
\def\Hrond{\mathscr H}
\def\HH{\mathfrak H}
\def\h{\mathfrak h}
\def\K{\mathscr K}
\def\Kt{\mathfrak K}
\def\L{\mathcal L}
\def\M{\mathrm M}
\def\N{\mathbb N}
\def\NN{\mathrm N}
\def\O{\mathcal O}
\def\P{\mathcal P}
\def\R{\mathbb R}
\def\S{\mathbb S}
\def\T{\mathcal T}
\def\u{\mathfrak u}
\def\U{\mathcal U}
\def\Ut{\mathfrak U}
\def\v{\mathfrak v}
\def\V{\mathcal V}
\def\Xt{\mathfrak X}
\def\Z{\mathbb Z}
\def\diag{\mathop{\mathrm{diag}}\nolimits}
\def\e{\mathop{\mathrm{e}}\nolimits}
\def\im{\mathop{\mathrm{Im}}\nolimits}
\def\re{\mathop{\mathrm{Re}}\nolimits}
\def\Ran{\mathop{\mathrm{Ran}}\nolimits}
\def\sgn{\mathop{\mathrm{sgn}}\nolimits}
\DeclareMathOperator*{\slim}{s\hspace{0.1pt}-\hspace{0.1pt}lim}
\def\supp{\mathop{\mathrm{supp}}\nolimits}
\def\lone{\mathop{\mathrm{L}^1}\nolimits}
\def\ltwo{\mathop{\mathrm{L}^2}\nolimits}
\def\Oas{\mathop{{\mathcal O}_{\rm as}}\nolimits}
\def\pv{\mathop{\mathrm{p.v.}}\nolimits}
\def\arctanh{\mathop{\mathrm{arctanh}}\nolimits}
\def\csch{\mathop{\mathrm{csch}}\nolimits}
\def\sech{\mathop{\mathrm{sech}}\nolimits}
\def\sp{\mathop{\mathrm{span}}\nolimits}
\newtheorem{Theorem}{Theorem}[section]
\newtheorem{Remark}[Theorem]{Remark}
\newtheorem{Lemma}[Theorem]{Lemma}
\newtheorem{Proposition}[Theorem]{Proposition}
\newtheorem{Example}[Theorem]{Example}
\renewcommand{\theequation}{\arabic{section}.\arabic{equation}}
\begin{document}


\title{Discrete Laplacian in a half-space with a periodic surface potential I\,\!:
Resolvent expansions, scattering matrix, and wave operators}

\author{H.~S. Nguyen$^1$,
~S. Richard$^1$\footnote{Supported by the grant\emph{Topological invariants
through scattering theory and noncommutative geometry} from Nagoya University,
and by JSPS Grant-in-Aid for scientific research C no 18K03328, and on
leave of absence from Univ.~Lyon, Universit\'e Claude Bernard Lyon 1, CNRS UMR 5208,
Institut Camille Jordan, 43 blvd.~du 11 novembre 1918, F-69622 Villeurbanne cedex,
France.},
~R. Tiedra de Aldecoa$^2$\footnote{Partially supported by the Chilean Fondecyt Grant 1170008.}}

\date{\small}
\maketitle
\vspace{-1cm}

\begin{quote}
\emph{
\begin{enumerate}
\item[$^1$] Graduate school of mathematics, Nagoya University,
Chikusa-ku,\\Nagoya 464-8602, Japan
\item[$^2$] Facultad de Matem\'aticas, Pontificia Universidad Cat\'olica de Chile,\\
Av. Vicu\~na Mackenna 4860, Santiago, Chile
\item[]\emph{E-mails:} nguyen.ha.song@b.mbox.nagoya-u.ac.jp,
richard@math.nagoya-u.ac.jp,\\\phantom{\emph{E-mails:}~}rtiedra@mat.uc.cl
\end{enumerate}
}
\end{quote}


\begin{abstract}
We present a detailed study of the scattering system given by the Neumann Laplacian on
the discrete half-space perturbed by a periodic potential at the boundary. We derive
asymptotic resolvent expansions at thresholds and eigenvalues, we prove the continuity
of the scattering matrix, and we establish new formulas for the wave operators. Along
the way, our analysis puts into evidence a surprising relation between some properties
of the potential, like the parity of its period, and the behaviour of the integral
kernel of the wave operators.
\end{abstract}

\textbf{2010 Mathematics Subject Classification:} 81Q10, 47A40

\smallskip

\textbf{Keywords:} resolvent expansions, scattering matrix, wave operators, discrete
Laplacian, thresholds.

\tableofcontents

\section{Introduction and main results}\label{sec_intro}
\setcounter{equation}{0}

For the last 20 years, Schr\"odinger operators with potentials supported on lower
dimensional subspaces have been the subject of an intensive study motivated by both
physical applications and mathematical interest, see for example
\cite{BBP03,Cha00,CS00,FK04,Fra03,Fra04,JL00} and references therein. These systems
exhibit properties that are intermediate between the ones of standard scattering
systems (with potentials decaying in all space directions) and the ones of bulk
systems (with potentials having no specific space decay). A fundamental example of
such property, appearing in discrete and in continuous settings, is the presence of
surface states propagating along the lower dimensional subspace. Our goal is to
present a detailed study of these surface states from a $C^*$-algebraic point of view
for a two-dimensional system on the discrete lattice. In particular, we plan to
establish an index-type theorem relating the surface states to the scattering part of
the system, as it was done in various other contexts \cite{BSB,GP,Ric16,SB16}.
However, before any $C^*$-algebraic construction and prior to any index theorem, a lot
of analysis is needed. This is the subject of this first part of a series of two
papers.

The model that we consider is a simple and natural quantum system exhibiting surface
states. It is given by a Laplace operator on a discrete half-space, subject to a
periodic potential at the boundary. See Figure \ref{fig:model}. Despite its
simplicity, this model requires a non-trivial analysis, and exhibits some unexpected
properties. The model has already been studied, for instance in \cite{BBP03,Cha00},
but our paper contains more extensive results on scattering theory, presented within
an up-to-date framework.

Let us now give a description of our principal results. In the Hilbert space
$\H:=\ell^2(\Z\times\N)\cong\ell^2(\Z)\otimes\ell^2(\N)$, we consider the free
Hamiltonian
$$
H_0:=\Delta_\Z\otimes1+1\otimes\Delta_\NN,
$$
where $\Delta_\Z$ is the adjacency operator in $\ell^2(\Z)$ given by
$$
\big(\Delta_\Z\;\!\varphi\big)(x)
:=\varphi(x+1)+\varphi(x-1),\quad\varphi\in\ell^2(\Z),~x\in\Z,
$$
and where $\Delta_\NN$ is the discrete Neumann adjacency operator in $\ell^2(\N)$
given by
$$
\big(\Delta_\NN\;\!\phi\big)(n)=
\begin{cases}
2^{1/2}\;\!\phi(1) &\hbox{if $n=0$}\\
2^{1/2}\;\!\phi(0)+\phi(2) &\hbox{if $n=1$}\\
\phi(n+1)+\phi(n-1) &\hbox{if $n\ge2$,}
\end{cases}
\quad\phi\in\ell^2(\N),~n\in\N.
$$
As a full Hamiltonian, we consider the operator
$$
H:=H_0+V,
$$
where $V$ is the multiplication operator by a nonzero, periodic, real-valued function
with support on $\Z\times\{0\}$. In other words, we assume that there exists a nonzero
periodic function $v:\Z\to\R$ of period $N\in\N$, $N\ge2$, (the potential) such that
$$
(H\psi)(x,n)=(H_0\psi)(x,n)+\delta_{0,n}\;\!v(x)\;\!\psi(x,0),
\quad\psi\in\H,~x\in\Z,~n\in\N,
$$
with $\delta_{0,n}$ the Kronecker delta function. Note that the multiplication
operator $V$ associated to the potential $v$ is not a compact perturbation of $H_0$.

\begin{figure}\label{fig:model}
\centering
\includegraphics[width=300pt]{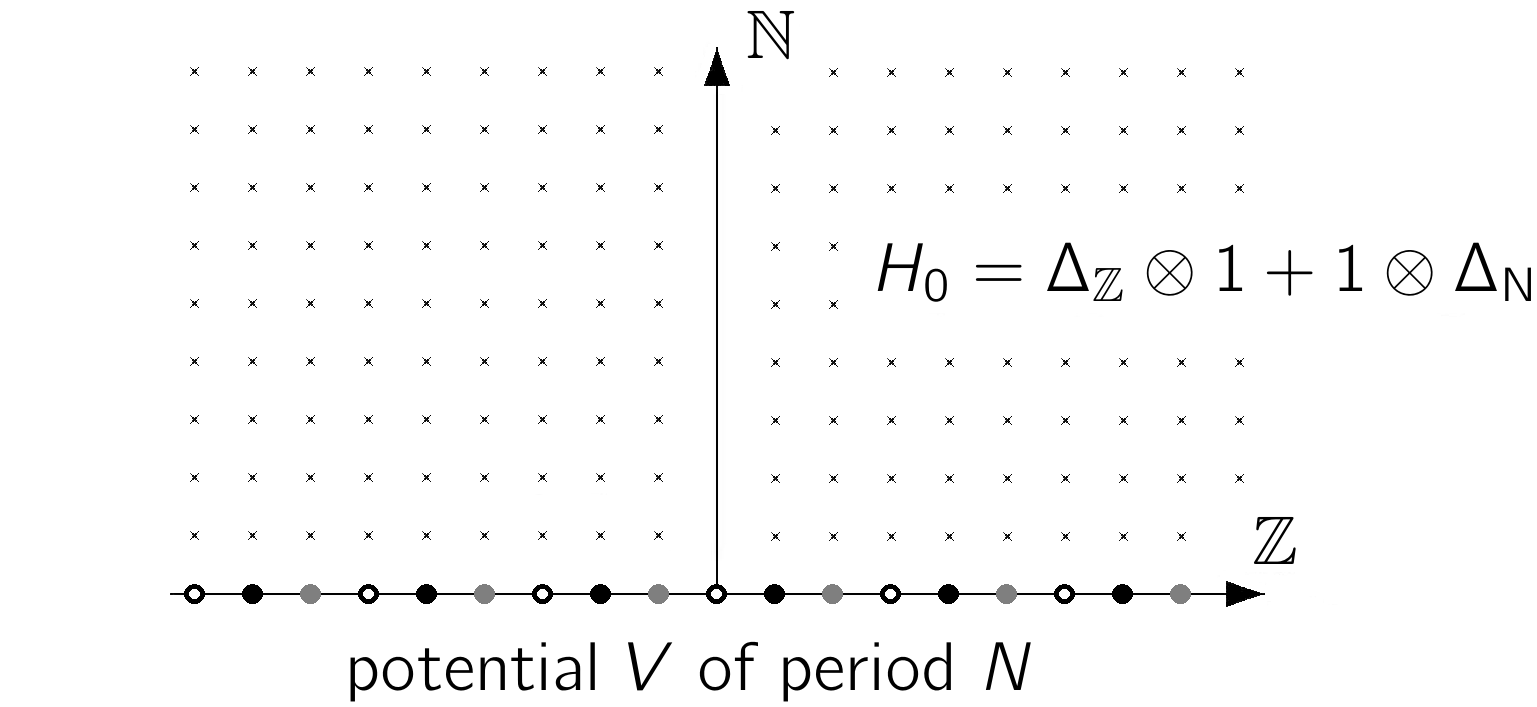}
\caption{Sketch of the two-dimensional discrete model}
\end{figure}

Since the operators $H_0$ and $H$ are $N$-periodic in the $x$-variable, they can be
decomposed using a Bloch-Floquet transformation. Namely, if we set
$\h:=\ltwo\big([0,\pi),\tfrac{\d\omega}\pi;\C^N\big)$ and
$\HH:=\int_{[0,2\pi]}^\oplus\h\;\!\tfrac{\d\theta}{2\pi}$, then it can be shown that
$H_0$ and $H$ are unitarily equivalent to the direct integral operators in $\HH$
\begin{equation}\label{eq_int_H_0}
\int_{[0,2\pi]}^\oplus H^\theta_0\;\!\tfrac{\d\theta}{2\pi}
\quad\hbox{with}\quad
H^\theta_0:=2\cos(\Omega)+A^\theta
\end{equation}
and
\begin{equation}\label{eq_int_H}
\int_{[0,2\pi]}^\oplus H^\theta\;\!\tfrac{\d\theta}{2\pi}
\quad\hbox{with}\quad
H^\theta:=2\cos(\Omega)+A^\theta+\diag(v)P_0
\end{equation}
where $\cos(\Omega)$ is the multiplication operator by the function
$\omega\to\cos(\omega)$ in $\h$, $A^\theta$ is the $N\times N$ hermitian matrix
\begin{equation}\label{eq_matrix}
A^\theta:=
\left(\begin{smallmatrix}
0 & 1 & 0 &\cdots & 0 &\e^{-i\theta}\\
1 & 0 & 1 &\ddots &  & 0\\
0 & 1 &\ddots &\ddots &\ddots &\vdots\\
\vdots &\ddots &\ddots &\ddots & 1 & 0\\
0 & &\ddots & 1 & 0 & 1\\
\e^{i\theta} & 0 &\cdots & 0 & 1 & 0
\end{smallmatrix}\right),
\end{equation}
and
\begin{equation}\label{eq_P_0}
\big(\diag(v)\;\!\f(\theta,\;\!\cdot\;\!)\big)_j
:=v(j)\;\!\f_j(\theta,\;\!\cdot\;\!)
\quad\hbox{and}\quad
\big(P_0\;\!\f(\theta,\;\!\cdot\;\!)\big)_j
:=\int_0^\pi\f_j(\theta,\omega)\tfrac{\d\omega}\pi
\end{equation}
for $\f\in\HH$, $j\in\{1,\dots,N\}$ and a.e. $\theta\in[0,2\pi]$. The main interest of
the above representation is that for each fixed $\theta$ the operator $\diag(v)P_0$ is
a finite rank perturbation of the operator $H_0^\theta$.

\begin{Remark}\label{rem_A_theta}
A direct inspection shows that the matrix $A^\theta$ has eigenvalues
$$
\lambda_j^\theta:=2\cos\left(\tfrac{\theta+2\pi\;\!j}N\right),\quad j\in\{1,\dots,N\},
$$
with corresponding eigenvectors $\xi_j^\theta\in\C^N$ having components
$\big(\xi_j^\theta\big)_k:=\e^{i(\theta+2\pi j)k/N}$, $j,k\in\{1,\dots,N\}$. Using the
notation $\P_j^\theta$ for the orthogonal projection associated to $\xi_j^\theta$, we
thus can write $A^\theta$ as
$A^\theta=\sum_{j=1}^N\lambda_j^\theta\;\!\P_j^\theta$.
\end{Remark}

Due to the unitary equivalences, the analysis of the pair of operators $(H,H_0)$ in
the Hilbert space $\H$ reduces to the analysis of the family of pairs of operators
$(H^\theta,H^\theta_0)$ indexed by the quasi-momentum $\theta\in[0,2\pi]$ in the
Hilbert space $\h$. Therefore, from now on we present our results for the operators
$H^\theta$ and $H^\theta_0$ at fixed $\theta$, and come back to the initial pair
$(H,H_0)$ later on.

Constructing a spectral representation of $H^\theta_0$ is fairly direct. Intuitively,
it amounts to diagonalising the matrix $A^\theta$ and linearising the function $\cos$.
More precisely, first we define for $\theta\in[0,2\pi]$ and $j\in\{1,\dots,N\}$ the
sets
$$
I_j^\theta:=(\lambda_j^\theta-2,\lambda_j^\theta+2)
\quad\hbox{and}\quad
I^\theta:=\cup_{j=1}^NI^\theta_j.
$$
Next, we define the fiber Hilbert spaces
$$
\Hrond^\theta(\lambda)
:=\sp\big\{\P^\theta_j\C^N\mid\hbox{$j\in\{1,\dots,N\}$ such that
$\lambda\in I_j^\theta$}\big\}\subset\C^N,\quad\lambda\in I^\theta,
$$
and the corresponding direct integral Hilbert space
$$
\Hrond^\theta:=\int_{I^\theta}^\oplus\Hrond^\theta(\lambda)\;\!\d\lambda.
$$
Then, it is easily verified that the operator $\F^\theta:\h\to\Hrond^\theta$ defined
by
$$
\big(\F^\theta g\big)(\lambda)
:=\pi^{-1/2}\sum_{\{j\mid\lambda\in I_j^\theta\}}
\big(4-(\lambda-\lambda_j^\theta)^2\big)^{-1/4}\;\!\P^\theta_j
g\Big(\arccos\Big(\tfrac{\lambda-\lambda_j^\theta}2\Big)\Big),
\quad g\in\h,~\hbox{a.e. $\lambda\in I^\theta$,}
$$
is unitary, with adjoint $(\F^\theta)^*:\Hrond^\theta\to\h$ given by
\begin{equation}\label{eq_adjoint}
\big((\F^\theta)^*\zeta\big)(\omega)
:=\big(2\pi\sin(\omega)\big)^{1/2}\sum_{j=1}^N\P_j^\theta\zeta
\big(2\cos(\omega)+\lambda_j^\theta\big),
\quad\zeta\in\Hrond^\theta,~\hbox{a.e. $\omega\in[0,\pi)$.}
\end{equation}
In addition, $\F^\theta$ diagonalises the Hamiltonian $H_0^\theta$. Namely, for all
$\zeta\in\Hrond^\theta$ and a.e. $\lambda\in I^\theta$ one has
$$
\big(\F^\theta H_0^\theta\;\!(\F^{\theta})^*\zeta\big)(\lambda)
=\lambda\;\!\zeta(\lambda)
=\big(X^\theta\zeta\big)(\lambda),
$$
with $X^\theta$ the (bounded) operator of multiplication by the variable in
$\Hrond^\theta$. As a consequence, one infers that $H_0^\theta$ has purely absolutely
continuous spectrum equal to
\begin{equation}\label{eq_spec_H^theta_0}
\textstyle\sigma(H_0^\theta)
=\overline{\Ran(X^\theta)}
=\overline{I^\theta}
=\big[(\min_j\lambda_j^\theta)-2,(\max_j\lambda_j^\theta)+2\big]
\subset[-4,4]
\end{equation}
and also that
$\sigma(H_0)=\overline{\bigcup_{\theta\in[0,2\pi]}\sigma(H_0^\theta)}=[-4,4]$.
Moreover, the spectral representation of $H^\theta_0$ naturally leads to the notion of
thresholds of $H^\theta_0$; namely, the set $\T^\theta$ of real values where the
spectrum of $H^\theta_0$ presents a change of multiplicity:
\begin{equation}\label{eq_thresholds}
\T^\theta:=\big\{\lambda_j^\theta\pm2\mid j\in\{1,\dots,N\}\big\}.
\end{equation}

The next step is the analysis of the operator $H^\theta$, which is detailed in Section
\ref{sec_H_theta}. In short, we determine the spectral properties of $H^\theta$ and we
establish resolvent expansions for $H^\theta$ near the thresholds. Based on the
resolvent expansions, we also derive various properties of the scattering operator for
the pair $(H^\theta,H^\theta_0)$.

Regarding the spectral analysis, the main result is a necessary and sufficient
condition for the existence of eigenvalues of $H^\theta$. To state it, we use standard
notations borrowed from \cite{JN01,Yaf92}. First of all, we decompose the matrix
$\diag(v):=(v(1),\dots,v(N))$ as the product $\diag(v)=\u\v^2$, where
$\v:=|\diag(v)|^{1/2}$ and $\u:=\sgn\big(\diag(v)\big)$ is the diagonal matrix with
components
$$
\u_{jj}
=\sgn\big(\diag(v)\big)_{jj}=
\begin{cases}
+1 &\hbox{if $v(j)\ge0$}\\
-1 &\hbox{if $v(j)<0$,}
\end{cases}
\quad j\in\{1,\dots,N\}.
$$
We also introduce the functions
\begin{equation}\label{def_beta}
\beta_j^\theta(z):=\big|(z-\lambda_j^\theta)^2-4\big|^{1/4},
\quad j\in\{1,\dots,N\},~\theta\in[0,2\pi],~z\in\C.
\end{equation}
The spectral result for $H^\theta$ then reads as follows (recall Remark
\ref{rem_A_theta} for the definitions of $\lambda_j^\theta$ and $\P_j^\theta$):

\begin{Proposition}\label{proposition_kernel}
A value $\lambda\in\R\setminus\T^\theta$ is an eigenvalue of $H^\theta$ if and only if
$$
{\mathcal K}
:=\ker\left(\u+\sum_{\{j\mid\lambda<\lambda_j^\theta-2\}}\tfrac{\v\;\!\P_j^\theta\v}
{\beta_j^\theta(\lambda)^2}-\sum_{\{j\mid\lambda>\lambda_j^\theta+2\}}
\tfrac{\v\;\!\P_j^\theta\v}{\beta_j^\theta(\lambda)^2}\right)
\bigcap\left(\cap_{\{j\mid\lambda\in I^\theta_j\}}\ker\big(\P_j^\theta\v\big)\right)
\ne\{0\},
$$
in which case the multiplicity of $\lambda$ equals the dimension of $\mathcal K$.
\end{Proposition}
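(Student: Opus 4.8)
The plan is to run a Birman--Schwinger/Feshbach-type argument adapted to the fact that $\lambda$ may be embedded in the spectrum of $H_0^\theta$, exploiting that $\diag(v)P_0$ has finite rank. Since $\int_0^\pi\tfrac{\d\omega}\pi=1$, the operator $P_0$ is the orthogonal projection of $\h$ onto the subspace of functions constant in $\omega$, which I identify with $\C^N$; thus for $\psi\in\h$ the function $\diag(v)P_0\psi$ is the constant function with value $\diag(v)(P_0\psi)\in\C^N$, and I shall regard $P_0\psi$ as an element of $\C^N$. I will use $\diag(v)=\v\;\!\u\;\!\v=\v^2\u$ (commuting diagonal matrices, $\u^2=1$, $\v\ge0$) and the fact that, by the explicit form of $\F^\theta$, the transform of the constant function $d\in\C^N$ is $\lambda'\mapsto\pi^{-1/2}\sum_{\{j\mid\lambda'\in I_j^\theta\}}\big(4-(\lambda'-\lambda_j^\theta)^2\big)^{-1/4}\P_j^\theta d$. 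Writing $M(\lambda)$ for the $N\times N$ matrix in the first kernel of the statement, so that $\mathcal K=\ker M(\lambda)\cap\big(\cap_{\{j\mid\lambda\in I_j^\theta\}}\ker(\P_j^\theta\v)\big)$, the goal is to exhibit mutually inverse linear maps $\psi\mapsto\rho:=\u\;\!\v\;\!P_0\psi$ and $\rho\mapsto\psi$ between $\ker(H^\theta-\lambda)$ and $\mathcal K$.

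\emph{From an eigenfunction to $\mathcal K$.} Starting from $\psi\ne0$ with $H^\theta\psi=\lambda\psi$ and $\lambda\notin\T^\theta$, I set $c:=P_0\psi$, $\rho:=\u\;\!\v c$ and note $\diag(v)c=\v^2\u c=\v\rho$, so $(H_0^\theta-\lambda)\psi=-\diag(v)P_0\psi$ is the constant function $-\v\rho$. Applying the spectral transform (which turns $H_0^\theta$ into multiplication by the variable) gives, for a.e.\ $\lambda'\in I^\theta$, $(\lambda'-\lambda)(\F^\theta\psi)(\lambda')=-\pi^{-1/2}\sum_{\{j\mid\lambda'\in I_j^\theta\}}\big(4-(\lambda'-\lambda_j^\theta)^2\big)^{-1/4}\P_j^\theta\v\rho$. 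Because $\lambda\notin\T^\theta$ the right-hand side is smooth near $\lambda'=\lambda$, so square-integrability of $\F^\theta\psi$ forces that expression to vanish at $\lambda'=\lambda$; as the vectors $\P_j^\theta\v\rho$ are pairwise orthogonal and the coefficients positive, this means exactly $\P_j^\theta\v\rho=0$ for all $j$ with $\lambda\in I_j^\theta$, i.e.\ $\rho\in\cap_{\{j\mid\lambda\in I_j^\theta\}}\ker(\P_j^\theta\v)$. Dividing by $\lambda'-\lambda$ then expresses $\F^\theta\psi$, hence $\psi$, through $\rho$; a short computation with $\F^\theta$ and its adjoint (using $\P_j^\theta\P_k^\theta=\delta_{jk}\P_j^\theta$, and the fact that the $j$ with $\lambda\in I_j^\theta$ contribute $0$ by the orthogonality just obtained — which is precisely why the otherwise divergent integral never appears) yields $c=-\pi^{-1}\sum_{\{j\mid\lambda\notin\overline{I_j^\theta}\}}\big(\int_{-2}^2\tfrac{(4-t^2)^{-1/2}}{t-(\lambda-\lambda_j^\theta)}\,\d t\big)\P_j^\theta\v\rho$. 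The surviving integrals are elementary, $\int_{-2}^2\tfrac{\d t}{\sqrt{4-t^2}\,(t-a)}=-\pi\,\sgn(a)\,(a^2-4)^{-1/2}$ for $|a|>2$, and $\sqrt{(\lambda-\lambda_j^\theta)^2-4}=\beta_j^\theta(\lambda)^2$, reducing the display to $c=\sum_{\{j\mid\lambda>\lambda_j^\theta+2\}}\beta_j^\theta(\lambda)^{-2}\P_j^\theta\v\rho-\sum_{\{j\mid\lambda<\lambda_j^\theta-2\}}\beta_j^\theta(\lambda)^{-2}\P_j^\theta\v\rho$. Multiplying on the left by $\v$ and using $\v c=\u\rho$ (valid since $\rho=\u\;\!\v c$ and $\u^2=1$) gives $M(\lambda)\rho=0$, so $\rho\in\mathcal K$; and $\rho\ne0$, since $\rho=0$ would give $\diag(v)P_0\psi=\v\rho=0$, hence $(H_0^\theta-\lambda)\psi=0$, impossible.

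\emph{From $\mathcal K$ to an eigenfunction.} Conversely, given $0\ne\rho\in\mathcal K$, I would define $\tilde\psi\in\Hrond^\theta$ by $\tilde\psi(\lambda'):=-(\lambda'-\lambda)^{-1}\pi^{-1/2}\sum_{\{j\mid\lambda'\in I_j^\theta\}}\big(4-(\lambda'-\lambda_j^\theta)^2\big)^{-1/4}\P_j^\theta\v\rho$. The condition $\rho\in\cap_{\{j\mid\lambda\in I_j^\theta\}}\ker(\P_j^\theta\v)$ makes the numerator vanish at $\lambda'=\lambda$, so (the only potential issues — at $\lambda'=\lambda$ and at the edges of $I^\theta$ — being harmless because $\lambda\notin\T^\theta$ and the relevant exponent is $-1/4$) $\tilde\psi$ is genuinely in $\Hrond^\theta$, whence $\psi:=(\F^\theta)^*\tilde\psi\in\h$ satisfies $(H_0^\theta-\lambda)\psi=-\v\rho$ (constant function). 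Computing $P_0\psi$ from the formula for $\tilde\psi$ exactly as in the previous step and then invoking $M(\lambda)\rho=0$ one gets $\diag(v)P_0\psi=\v^2\u\,P_0\psi=\v\rho$, so $(H^\theta-\lambda)\psi=(H_0^\theta-\lambda)\psi+\diag(v)P_0\psi=0$; moreover $\psi\ne0$, since $\psi=0$ forces $\tilde\psi=0$, hence $\v\rho=0$, hence $\u\rho=M(\lambda)\rho=0$, i.e.\ $\rho=0$. Finally, $\psi$ is uniquely recovered from $\rho$ through $(H_0^\theta-\lambda)\psi=-\v\rho$ together with the divided-difference expression for $\F^\theta\psi$, and conversely $\rho=\u\;\!\v\;\!P_0\psi$ is recovered from $\psi$, so the two maps are mutually inverse; thus $\ker(H^\theta-\lambda)\cong\mathcal K$, so $\lambda$ is an eigenvalue iff $\mathcal K\ne\{0\}$, with multiplicity $\dim\mathcal K$.

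\emph{Expected main obstacle.} The crux is the reduction of the \emph{a priori} singular self-consistency relation (singular because $\lambda$ may lie in several bands $I_j^\theta$ simultaneously) to the finite-dimensional equation $M(\lambda)\rho=0$: one must first pull the non-radiation condition $\P_j^\theta\v\rho=0$ (for $\lambda\in I_j^\theta$) out of $L^2$-membership, after which the singular pieces cancel and only convergent integrals survive, producing the coefficients $\beta_j^\theta(\lambda)^{-2}$ with the signs $\pm$ dictated by whether $\lambda$ lies below or above the band. The remainder is careful bookkeeping with the commuting diagonal matrices $\u,\v$ against the non-commuting projections $\P_j^\theta$, so that the matrix comes out exactly as $\u+\sum_{\{j\mid\lambda<\lambda_j^\theta-2\}}\v\P_j^\theta\v/\beta_j^\theta(\lambda)^2-\sum_{\{j\mid\lambda>\lambda_j^\theta+2\}}\v\P_j^\theta\v/\beta_j^\theta(\lambda)^2$ rather than some conjugate of it — which is why the right variable to track is $\rho=\u\;\!\v\;\!P_0\psi$.
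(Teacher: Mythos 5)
Your proof is correct, and it takes a genuinely different route from the paper's. The paper proves this proposition by invoking the abstract Birman--Schwinger result \cite[Lemma~4.7.8]{Yaf92}: after verifying that $G$ and $\u\;\!G$ are strongly $H_0^\theta$-smooth with exponent $\alpha>1/2$ near $\lambda$ (H\"older continuity of $\lambda\mapsto\F^\theta G^*\xi$), that lemma identifies the multiplicity of $\lambda$ with the multiplicity of the eigenvalue $1$ of $-GR_0^\theta(\lambda+i\;\!0)G^*\u$; the explicit boundary value of Lemma \ref{lemma_sandwich} then splits into a hermitian part and a positive imaginary part $\sum_{\{j\mid\lambda\in I^\theta_j\}}\v\;\!\P_j^\theta\v/\beta_j^\theta(\lambda)^2$, and requiring both to annihilate the kernel vector yields exactly $\mathcal K$ (your $\rho$ is the paper's $\u\;\!\xi$). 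You instead build the isomorphism $\ker(H^\theta-\lambda)\cong\mathcal K$ by hand in the spectral representation of $H_0^\theta$: the $\ltwo$-constraint at $\lambda'=\lambda$ replaces the positivity-of-the-imaginary-part argument and produces the non-radiation conditions $\P_j^\theta\v\;\!\rho=0$, and your residue computation $\int_{-2}^2(4-t^2)^{-1/2}(t-a)^{-1}\d t=-\pi\sgn(a)(a^2-4)^{-1/2}$ reproves the off-band part of Lemma \ref{lemma_sandwich} on the fly. Your route is longer but self-contained and produces an explicit formula for the eigenfunctions; the paper's is shorter because it reuses Lemma \ref{lemma_sandwich} (needed elsewhere anyway) and delegates the functional-analytic work to Yafaev, at the cost of checking the strong-smoothness hypothesis. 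The only points where you are terse are harmless: that $\rho=0$ forces $\psi=0$ holds because $H_0^\theta$ has purely absolutely continuous spectrum, and in the converse direction your $\tilde\psi$ is in fact identically zero in a neighbourhood of $\lambda'=\lambda$ (the locally constant index set $\{j\mid\lambda'\in I^\theta_j\}$ consists only of $j$ with $\P_j^\theta\v\;\!\rho=0$), so no H\"older estimate near the removed singularity is needed.
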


Next, we use a general approach for resolvent expansions \cite{JN01,RT16} to derive
detailed asymptotic resolvent expansions for $H^\theta$. For that purpose, we introduce the
bounded operator $G:\h\to\C^N$ given by
\begin{equation}\label{def_G}
(Gg)_j:=\v_{jj}\int_0^\pi g_j(\omega)\;\!\tfrac{\d\omega}\pi
=|v(j)|^{1/2}\int_0^\pi g_j(\omega)\;\!\tfrac{\d\omega}\pi,
\quad g\in\h,~j\in\{1,\dots,N\}.
\end{equation}
Then, the results we obtain about the resolvent of $H^\theta$ are formulated as
asymptotic expansions for the operator
\begin{equation}\label{eq_M_intro}
M^\theta(\lambda+i\varepsilon)
:=\big(\u+G(H_0^\theta-\lambda-i \varepsilon)^{-1}G^*\big)^{-1},
\quad\lambda,\varepsilon\in\R,~\varepsilon\ne0,
\end{equation}
as $\varepsilon\to0$.
They are expressed in terms of projections $S_0,S_1,S_2$ in $\C^N$ of decreasing
range, with the most singular divergences of the expansions taking place in the ranges
of the projections of higher indices (the greater the divergence, the smaller the
subspace where it takes place, see Proposition \ref{Prop_asymp} for details). The
asymptotic expansions are valid for any point $\lambda$ in the spectrum of $H^\theta$.
That is, when $\lambda$ is a threshold of $H^\theta$, when $\lambda$ is an eigenvalue
of $H^\theta$, and when $\lambda$ is neither a threshold, nor an eigenvalue of
$H^\theta$. The expansions imply as a by-product the finiteness of point spectrum of
$H^\theta$, see Remark \ref{remark_no_accumu}.

Once obtained the asymptotic expansions for the operator \eqref{eq_M_intro}, we can
establish our next main result, which is the the continuity of the scattering matrix.
The previous works on the two-dimensional discrete model studied here have discussed
the problem of the existence and the completeness of the wave operators. Such results
lead to the existence and the unitarity of the scattering operator, but do not say
anything about the continuity of the scattering matrix. And this continuity property
will be needed in our second paper in order to apply the $C^*$-algebraic technics
leading to the index theorem.

To formulate the continuity property of the scattering matrix, we first note that the
wave operators
$$
W_{\pm}^\theta:=\slim_{t\to\pm\infty}\e^{itH^\theta}\e^{-itH^\theta_0}
$$
exist and are complete since the difference $H^\theta-H^\theta_0$ is a finite rank
operator, see \cite[Thm.~X.4.4]{Kat95}. As a consequence, the scattering operator
$
S^\theta:=(W_+^\theta)^*W_-^\theta
$
is a unitary operator in $\h$ commuting with $H^\theta_0$. Therefore, $S^\theta$ is
decomposable in the spectral representation of $H^\theta_0$, that is,
$$
\big(\F^\theta S^\theta (\F^\theta)^*h\big)(\lambda)=S^\theta(\lambda)h(\lambda),
\quad\hbox{$h\in\Hrond^\theta$, a.e. $\lambda\in\sigma(H^\theta_0)$,}
$$
where $S^\theta(\lambda)$ (the scattering matrix at energy $\lambda$) is a unitary
operator in $\Hrond^\theta(\lambda)$. To give an explicit formula for
$S^\theta(\lambda)$, Proposition \ref{proposition_kernel} and the asymptotic
expansions play a key role. Indeed, it follows from them that the limit
$$
M^\theta(\lambda+i0):=\lim_{\varepsilon\searrow0}
\big(\u+G(H_0^\theta-\lambda-i\varepsilon)^{-1}G^*\big)^{-1}
$$
exists and belongs to $\B(\C^N)$ for each
$
\lambda\in\sigma(H_0^\theta)\setminus\big(\T^\theta\cup\sigma_{\rm p}(H^\theta)\big)
$,
where $\sigma_{\rm p}(H^\theta)$ is the point spectrum of $H^\theta$. And then, a
computation using stationary formulas \cite[Sec.~2.8]{Yaf92} shows for
$
\lambda\in(I^\theta_j\cap I^\theta_{j'})
\setminus\big(\T^\theta\cup\sigma_{\rm p}(H^\theta)\big)
$
and $j,j'\in\{1,\dots,N\}$ that the channel scattering matrix
$S^\theta(\lambda)_{jj'}:=\P^\theta_jS^\theta(\lambda)\P^\theta_{j'}$ is given
by\footnote{By \emph{channel} we simply mean a choice of indices $(j,j')$, but observe
that $S^\theta(\lambda)_{jj'}$ describes the possible transition from the band
$I^\theta_{j'}$ to the band $I^\theta_{j}$.}
\begin{equation}\label{eq_S_lambda}
S^\theta(\lambda)_{jj'}
=\delta_{jj'}-2i\;\!\beta_j^\theta(\lambda)^{-1}\P_j^\theta\v\;\!
M^\theta(\lambda+i0)\v\;\!\P_{j'}^\theta\;\!\beta_{j'}^\theta(\lambda)^{-1},
\end{equation}
where the operator
$\delta_{jj'}\in\B\big(\P_{j'}^\theta\;\!\C^N;\P_j^\theta\;\!\C^N\big)$ is defined by
$\delta_{jj'}:=1$ if $j=j'$ and $\delta_{jj'}:=0$ otherwise. Now, an explicit formula
for $G(H_0^\theta-\lambda-i0)^{-1}G^*$ (see \eqref{eq_sandwich}) implies the
continuity of the map
$$
(I^\theta_j\cap I^\theta_{j'})
\setminus\big(\T^\theta\cup\sigma_{\rm p}(H^\theta)\big)
\ni\lambda\mapsto S^\theta(\lambda)_{jj'}
\in\B\big(\P_{j'}^\theta\C^N;\P_j^\theta\C^N\big).
$$
Therefore, in order to completely establish the continuity of the channel scattering
matrices $S^\theta(\lambda)_{jj'}$, what remains is to describe the behaviour of
$S^\theta(\lambda)_{jj'}$ as
$\lambda\to\lambda_\star\in\T^\theta\cup\sigma_{\rm p}(H^\theta)$. We consider
separately the behaviour of $S^\theta(\lambda)_{jj'}$ at thresholds and at embedded
eigenvalues, starting with the thresholds. For that purpose, we first observe that for
each $\lambda\in\T^\theta$, a channel can already be open at the energy $\lambda$ (in
which case one has to show the existence and the equality of the limits from the right
and from the left), it can open at the energy $\lambda$ (in which case one only has to
show the existence of the limit from the right), or it can close at the energy
$\lambda$ (in which case one only has to show the existence of the limit from the
left). Therefore, we fix $\lambda\in\T^\theta$, and consider the matrix
$S^\theta(\lambda+\varepsilon)_{jj'}$ for suitable $\varepsilon\in\R$. In this
setting, our continuity result can be stated as (see Theorem \ref{thm_cont} for
a more general statement including the values of the limits):

\begin{Theorem}
Let $\lambda\in\T^\theta$, take $\varepsilon\in \R$  with $|\varepsilon|$ small
enough, and let $j,j'\in\{1,\dots,N\}$.
\begin{enumerate}
\item[(a)] If $\lambda\in I^\theta_j\cap I^\theta_{j'}$, then the limit
$\lim_{\varepsilon\to0}S^\theta(\lambda+\varepsilon)_{jj'}$ exists.
\item[(b)] If $\lambda\in\overline{I^\theta_j}\cap\overline{I^\theta_{j'}}$ and
$\lambda+\varepsilon\in I^\theta_j\cap I^\theta_{j'}$ for $\varepsilon>0$ small
enough, then the limit
$\lim_{\varepsilon\searrow0}S^\theta(\lambda+ \varepsilon)_{jj'}$ exists.
\item[(c)] If $\lambda\in\overline{I^\theta_j}\cap\overline{I^\theta_{j'}}$ and
$\lambda-\varepsilon\in I^\theta_j\cap I^\theta_{j'}$ for $\varepsilon>0$ small
enough, then the limit
$\lim_{\varepsilon\searrow0}S^\theta(\lambda-\varepsilon)_{jj'}$ exists.
\end{enumerate}
\end{Theorem}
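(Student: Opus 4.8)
The plan is to deduce the three statements directly from the closed formula \eqref{eq_S_lambda} for the channel scattering matrix, combined with the asymptotic expansions of Proposition \ref{Prop_asymp}. Fix $\lambda\in\T^\theta$ and $j,j'\in\{1,\dots,N\}$. Since $\T^\theta$ is finite and $\sigma_{\rm p}(H^\theta)$ has no accumulation point (Remark \ref{remark_no_accumu}), there is $\delta>0$ such that $\lambda+\varepsilon\notin\T^\theta\cup\sigma_{\rm p}(H^\theta)$ whenever $0<|\varepsilon|<\delta$; for such $\varepsilon$ with $\lambda+\varepsilon\in I^\theta_j\cap I^\theta_{j'}$ the matrix $S^\theta(\lambda+\varepsilon)_{jj'}$ is defined, and \eqref{eq_S_lambda} gives
$$
S^\theta(\lambda+\varepsilon)_{jj'}=\delta_{jj'}-2i\,\beta_j^\theta(\lambda+\varepsilon)^{-1}\,
\P_j^\theta\v\,M^\theta(\lambda+\varepsilon+i0)\,\v\P_{j'}^\theta\,\beta_{j'}^\theta(\lambda+\varepsilon)^{-1}.
$$
So it suffices to prove that $\varepsilon\mapsto\beta_j^\theta(\lambda+\varepsilon)^{-1}\P_j^\theta\v\,M^\theta(\lambda+\varepsilon+i0)\,\v\P_{j'}^\theta\,\beta_{j'}^\theta(\lambda+\varepsilon)^{-1}$ has a limit along the punctured one-sided or two-sided neighbourhood of $0$ prescribed by the hypotheses. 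The side of approach only selects which of the two one-sided expansions of $M^\theta$ near $\lambda$ supplied by Proposition \ref{Prop_asymp} is used, so the three cases can be treated in parallel.

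Next one disposes of the scalar factors, which are completely explicit: by \eqref{def_beta}, $\beta_m^\theta(\lambda+\varepsilon)^4=\big|(\lambda+\varepsilon-\lambda_m^\theta)^2-4\big|$, so $\beta_m^\theta$ is continuous and bounded away from $0$ near $\lambda$ when $\lambda\in I^\theta_m$, whereas $\beta_m^\theta(\lambda+\varepsilon)^{-1}$ diverges like $|\varepsilon|^{-1/4}$ when $\lambda=\lambda_m^\theta\pm2$. Writing $K:=\{k\mid\lambda=\lambda_k^\theta\pm2\}$, the factor $\beta_j^\theta(\lambda+\varepsilon)^{-1}$ (resp.\ $\beta_{j'}^\theta(\lambda+\varepsilon)^{-1}$) diverges exactly when $j\in K$ (resp.\ $j'\in K$). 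In case (a) one has $j,j'\notin K$, so both factors stay bounded; in cases (b) and (c) one may have $j$ and/or $j'$ in $K$, and the resulting $|\varepsilon|^{-1/4}$ must be compensated by a vanishing of $\P_j^\theta\v\,M^\theta(\lambda+\varepsilon+i0)\,\v\P_{j'}^\theta$ of the appropriate order.

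That compensation is provided by the structure of Proposition \ref{Prop_asymp}, and making it precise is the core of the argument. The divergent part of $M^\theta(\lambda+\varepsilon+i0)$ --- the terms of its finite expansion carrying unbounded scalar coefficients --- has range and co-range inside the nested subspaces $\Ran S_2\subseteq\Ran S_1\subseteq\Ran S_0$, and moreover $\Ran S_0\subseteq\bigcap_m\ker(\P_m^\theta\v)$, the intersection running over the bands $m$ that are open at $\lambda$ together with the bands $m\in K$. The first inclusion is part of Proposition \ref{Prop_asymp}; the second one is obtained exactly as in Proposition \ref{proposition_kernel}, by taking imaginary parts of $\big(\u+G(H_0^\theta-\lambda-i0)^{-1}G^*\big)\xi=0$ to force $\P_m^\theta\v\xi=0$ for $m$ open at $\lambda$ (since $\im g_m(\lambda+i0)>0$ there), and by using the Grushin reduction with respect to the finite-rank operator $\sum_{k\in K}g_k(\lambda+i0)\,\v\P_k^\theta\v$ to force $\P_k^\theta\v\xi=0$ for $k\in K$. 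Since every $j,j'$ appearing in (a), (b), (c) belongs to $K$ or is open at $\lambda$, one has $\P_j^\theta\v S_0=\P_{j'}^\theta\v S_0=0$, so $\P_j^\theta\v$ (on the left) and $\v\P_{j'}^\theta$ (on the right) annihilate the whole divergent part of $M^\theta(\lambda+\varepsilon+i0)$, leaving only its bounded part. That bounded part converges to an operator $M^\theta_\star$ satisfying $\P_k^\theta\v M^\theta_\star=0=M^\theta_\star\v\P_k^\theta$ for $k\in K$, and one then checks that the first surviving contribution of $\P_k^\theta\v\,M^\theta(\lambda+\varepsilon+i0)$ (resp.\ of $M^\theta(\lambda+\varepsilon+i0)\,\v\P_k^\theta$) is small enough --- of order $\beta_k^\theta(\lambda+\varepsilon)^2$ when it is sandwiched on both sides by such factors, of order $\beta_k^\theta(\lambda+\varepsilon)$ on one side --- to absorb the diverging factors $\beta_k^\theta(\lambda+\varepsilon)^{-1}$. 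Feeding the expansion into the displayed formula and collecting terms, every divergence cancels and the remaining terms converge, the $M^\theta_\star$-contribution being computable from \eqref{eq_sandwich}. In case (a) the limits from the two sides coincide, since the operators and subspaces entering the expansion at $\lambda$ do not depend on the side of approach, only the (now cancelled) singular scalar coefficients $g_k(\lambda+i0)$, $k\in K$, do.

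The main obstacle is precisely the bookkeeping above: one must match, term by term in the finite expansion of $M^\theta(\lambda+\varepsilon+i0)$, the power of $|\varepsilon|$ multiplying each operator against the $|\varepsilon|^{-1/4}$ growth of each $\beta$-factor, and verify that every operator whose net prefactor in $S^\theta(\lambda+\varepsilon)_{jj'}$ would diverge is annihilated by $\P_j^\theta\v$ or by $\v\P_{j'}^\theta$. This is genuinely delicate only when $\lambda$ is at the same time a threshold and an eigenvalue of $H^\theta$, a case the hypotheses do not exclude: then $M^\theta(\lambda+\varepsilon+i0)$ is truly unbounded and one must rely on its most singular components being confined to $\Ran S_1$ and $\Ran S_2$, hence to $\bigcap_m\ker(\P_m^\theta\v)$ over the relevant bands $m$, so that they drop out after sandwiching. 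Once this is done, (b) and (c) are the one-sided statements and (a) is the two-sided one with coinciding limits, and nothing further is required.
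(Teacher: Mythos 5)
Your overall strategy is the paper's: start from the stationary formula \eqref{eq_S_lambda}, insert the finite expansion of $M^\theta$ near $\lambda$ from Proposition \ref{Prop_asymp}, and show that each divergence is cancelled after sandwiching with $\P_j^\theta\v$ and $\v\;\!\P_{j'}^\theta$ and matching orders against the $\beta$-factors. But the structural claim on which your cancellation rests is wrong. The projection $S_0$ is the orthogonal projection onto $\ker\big(I_0(0)\big)$ with $I_0(0)=\tfrac12\sum_{\{k\mid\lambda=\lambda_k^\theta-2\}}\v\;\!\P_k^\theta\v+\tfrac i2\sum_{\{k\mid\lambda=\lambda_k^\theta+2\}}\v\;\!\P_k^\theta\v$, so $\Ran S_0$ is contained in $\bigcap_{k}\ker(\P_k^\theta\v)$ only for $k$ in your set $K$ (the paper's $\N_\lambda$); the inclusion for the \emph{open} bands $m$ holds one level further down, namely $\P_m^\theta\v\;\!S_1=0$, and follows from $\im\big(S_0M_1(0)S_0\big)\ge0$ (Lemma \ref{lemma_relations}(b)). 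It is false for $S_0$. If your version were correct, then in case (a) the bounded term $S_0\big(I_1(\kappa)+S_1\big)^{-1}S_0$ of the expansion would also be annihilated and the limit would always equal $\delta_{jj'}$, contradicting the value $\delta_{jj'}-2i\;\!\beta_j^\theta(\lambda)^{-1}\P_j^\theta\v\;\!S_0\big(I_1(0)+S_1\big)^{-1}S_0\v\;\!\P_{j'}^\theta\beta_{j'}^\theta(\lambda)^{-1}$ obtained in Theorem \ref{thm_cont}(a). The mechanism that kills the $\kappa^{-1}$ and $\kappa^{-2}$ terms for open channels is the $S_1$ and $S_2$ sandwiching, not $S_0$.

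Second, the bookkeeping you defer is essentially the whole proof, and it cannot be completed with the ingredients you list. In \eqref{eq_sol_1} the projections $S_1,S_2$ sit in the middle of each term, so to bring them next to $\P_j^\theta\v$ one must commute them through the factors $\big(I_m(\kappa)+S_m\big)^{-1}$; the resulting commutators $C_{\ell m}(\kappa)$ are only $\O(\kappa)$ generically (Lemma \ref{lemme_com}), which handles the $\kappa^{-1}$ term but not the $\kappa^{-2}$ term. The paper needs the sharper estimates $C_{20}(\kappa)\in\Oas(\kappa^3)$ and $C_{21}(\kappa)\in\Oas(\kappa^2)$, which rest on the further identity $M_1(0)S_2=0=S_2M_1(0)$ of Lemma \ref{lemma_relations}(c)--(d); nothing in your outline produces these. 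Finally, in cases (b)--(c) with both $j$ and $j'$ at the threshold, it is not true that after cancellation only a bounded part "computable from \eqref{eq_sandwich}" survives: there $\beta_j^\theta(\lambda+\varepsilon)^{-1}\beta_{j'}^\theta(\lambda+\varepsilon)^{-1}\sim\tfrac12|\kappa|^{-1}$, so the nominally vanishing term $\kappa\big(I_0(\kappa)+S_0\big)^{-1}$ and the commutator correction $\tfrac1\kappa C_{10}(\kappa)S_1\big(I_2(\kappa)+S_2\big)^{-1}S_1C_{10}(\kappa)$ are promoted to $\O(1)$ contributions, which is precisely why the limits in Theorem \ref{thm_cont}(b)--(c) involve $\big(I_0(0)+S_0\big)^{-1}$ and $C_{10}'(0)$ rather than anything read off from \eqref{eq_sandwich}.
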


We also establish the continuity of the scattering matrix at embedded eigenvalues not
located at thresholds, see Theorem \ref{thm_cont_bis} for more details:

\begin{Theorem}
Let $\lambda\in\sigma_{\rm p}(H^\theta)\setminus\T^\theta$, take $\varepsilon\in\R$
with $|\varepsilon|>0$ small enough, and let $j,j'\in\{1,\dots,N\}$. Then, if
$\lambda\in I^\theta_j\cap I^\theta_{j'}$, the limit
$\lim_{\varepsilon\to0}S^\theta(\lambda+\varepsilon)_{jj'}$ exists.
\end{Theorem}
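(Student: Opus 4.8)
The plan is to read the limit off the explicit formula \eqref{eq_S_lambda}, the actual work being supplied by the resolvent expansion of Proposition \ref{Prop_asymp}. Fix $\lambda\in\sigma_{\rm p}(H^\theta)\setminus\T^\theta$ and $j,j'$ with $\lambda\in I^\theta_j\cap I^\theta_{j'}$. Since $\T^\theta$ is finite and $\sigma_{\rm p}(H^\theta)$ is finite (Remark \ref{remark_no_accumu}), and $\lambda$ is an interior point of $I^\theta_j\cap I^\theta_{j'}\subseteq\sigma(H^\theta_0)$, one has $\lambda+\varepsilon\in(I^\theta_j\cap I^\theta_{j'})\setminus\big(\T^\theta\cup\sigma_{\rm p}(H^\theta)\big)$ for every real $\varepsilon$ with $0<|\varepsilon|$ small enough; hence $M^\theta(\lambda+\varepsilon+i0)$ exists in $\B(\C^N)$ and \eqref{eq_S_lambda} applies at energy $\lambda+\varepsilon$. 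As $\lambda\notin\T^\theta$, the functions $\beta^\theta_j$ and $\beta^\theta_{j'}$ are continuous and strictly positive near $\lambda$, so the scalar prefactors $\beta^\theta_j(\lambda+\varepsilon)^{-1}$ and $\beta^\theta_{j'}(\lambda+\varepsilon)^{-1}$ in \eqref{eq_S_lambda} converge as $\varepsilon\to0$, and the whole matter reduces to showing that $\P^\theta_j\v\,M^\theta(\lambda+\varepsilon+i0)\,\v\,\P^\theta_{j'}$ converges in $\B(\C^N)$.

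To this end, write $m(\varepsilon):=\u+G(H^\theta_0-\lambda-\varepsilon-i0)^{-1}G^*$, so that $M^\theta(\lambda+\varepsilon+i0)=m(\varepsilon)^{-1}$ for $\varepsilon\ne0$ small. Since $\lambda$ lies strictly inside the bands it belongs to and strictly outside the others, the explicit formula \eqref{eq_sandwich} shows that $\varepsilon\mapsto m(\varepsilon)$ extends holomorphically through $\varepsilon=0$ (no $\sqrt\varepsilon$-terms appear — this is where $\lambda\notin\T^\theta$ enters), with $m(0)=\u+G(H^\theta_0-\lambda-i0)^{-1}G^*$ and $\ker m(0)\ne\{0\}$ precisely because $\lambda$ is an eigenvalue; hence $m(\varepsilon)^{-1}$ is meromorphic near $\varepsilon=0$. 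Proposition \ref{Prop_asymp}, specialised to an eigenvalue away from thresholds, shows this pole to be simple:
\begin{equation*}
M^\theta(\lambda+\varepsilon+i0)=\varepsilon^{-1}T_1+M^\theta_{\rm reg}+o(1)\qquad(\varepsilon\to0),
\end{equation*}
where $M^\theta_{\rm reg}\in\B(\C^N)$ is the $\varepsilon^0$-coefficient of the Laurent expansion, in particular independent of the sign of $\varepsilon$. Matching powers in $m(\varepsilon)m(\varepsilon)^{-1}=1$ then forces $m(0)T_1=0$, i.e.\ $\Ran T_1\subseteq\ker m(0)$. (The simplicity of the pole is itself part of Proposition \ref{Prop_asymp}; it amounts to $S_0\,m'(0)\,S_0$ being invertible on $\ker m(0)$, with $S_0$ the orthogonal projection onto $\ker m(0)$.)

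The remaining point is $\P^\theta_j\v\,T_1=0$. Writing $m(0)=m_{\rm r}+i\,m_{\rm i}$ with $m_{\rm r},m_{\rm i}$ self-adjoint, the formula \eqref{eq_sandwich} exhibits $m_{\rm i}$ as a nonnegative linear combination of the operators $\v\,\P^\theta_i\v$ over the open channels $i$ (those with $\lambda\in I^\theta_i$); hence a vector $\phi\in\ker m(0)$ obeys $m_{\rm i}\phi=0$, so $\P^\theta_i\v\,\phi=0$ for every open $i$, and then $m_{\rm r}\phi=0$. This identifies $\ker m(0)$ with the space $\mathcal K$ of Proposition \ref{proposition_kernel}, which by construction is contained in $\ker(\P^\theta_i\v)$ for every open $i$, in particular for $i=j$. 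Hence $\Ran T_1\subseteq\ker m(0)=\mathcal K\subseteq\ker(\P^\theta_j\v)$, so $\P^\theta_j\v\,T_1=0$, and multiplying the displayed expansion on the left by $\P^\theta_j\v$ removes its singular term:
\begin{equation*}
\P^\theta_j\v\,M^\theta(\lambda+\varepsilon+i0)\,\v\,\P^\theta_{j'}=\P^\theta_j\v\,M^\theta_{\rm reg}\,\v\,\P^\theta_{j'}+o(1)\qquad(\varepsilon\to0).
\end{equation*}
So this quantity converges, and plugging back into \eqref{eq_S_lambda} gives $\lim_{\varepsilon\to0}S^\theta(\lambda+\varepsilon)_{jj'}=\delta_{jj'}-2i\,\beta^\theta_j(\lambda)^{-1}\,\P^\theta_j\v\,M^\theta_{\rm reg}\,\v\,\P^\theta_{j'}\,\beta^\theta_{j'}(\lambda)^{-1}$, a value independent of the sign of $\varepsilon$, which is the claimed limit.

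The part that carries the weight is not the algebra above but securing the two inputs in the form used: first, that Proposition \ref{Prop_asymp}, at an eigenvalue away from thresholds, genuinely produces a simple pole whose residue has range in $\ker m(0)$ — this uses $\lambda\notin\T^\theta$ essentially and, at bottom, the sign-definiteness of the real and imaginary parts of $m$ and of $m'$ on $\mathcal K$ (a null vector of $S_0\,m'(0)\,S_0$ would be annihilated by $\P^\theta_i\v$ for \emph{every} channel $i$, hence by $\v$, and so would vanish); second, the identification $\ker m(0)=\mathcal K$ from Proposition \ref{proposition_kernel}. Both have been established in the earlier sections, so once they are invoked the continuity of $S^\theta(\cdot)_{jj'}$ at $\lambda$ follows with no further analytic input.
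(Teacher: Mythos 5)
Your proposal is correct and follows essentially the same route as the paper's proof: both rest on the expansion of Proposition \ref{Prop_asymp}(ii) (equation \eqref{eq_expansion_2}), which exhibits a simple pole whose residue is sandwiched by the projection onto $\ker\big(\u+GR_0^\theta(\lambda+i0)G^*\big)=\mathcal K$, and on the fact that $\mathcal K\subseteq\ker(\P_j^\theta\v)$ for every open channel $j$, which kills the singular term in \eqref{eq_S_lambda}. Your Laurent-coefficient argument for $m(0)T_1=0$ and your appeal to Proposition \ref{proposition_kernel} are just a repackaging of the paper's commutation step and its citation of \cite[Lemma~2.5]{RT16}.
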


Our final results concern the wave operator $W_-^\theta$. By using the spectral
representation of $H_0^\theta$ and a stationary representation formula for
$W_-^\theta$, we can express $W_-^\theta$ as the sum of two terms. Namely, we obtain
for suitable $\xi,\zeta\in \Hrond^\theta$
\begin{align}
&\big\langle\F^\theta(W_-^\theta-1)(\F^{\theta })^*\xi,
\zeta\big\rangle_{\Hrond^\theta}\nonumber\\
&=-\pi^{-1/2}\sum_{j=1}^N\int_{I^\theta_j}\lim_{\varepsilon\searrow0}
\left\langle\v M^\theta(\lambda+i\varepsilon)\v\;\!\gamma_0(\F^{\theta})^*
\delta_\varepsilon(X^\theta-\lambda)\xi,
\int_{I_j^\theta}\tfrac{\beta_j^\theta(\mu)^{-1}}{\mu-\lambda+i\varepsilon}\;\!
\zeta_j(\mu)\;\!\d\mu\right\rangle_{\C^N}\d\lambda\label{eq_leading_1}\\
&\quad-\pi^{-1/2}\sum_{j=1}^N\int\limits_{\sigma(H^\theta_0)\setminus I^\theta_j}
\lim_{\varepsilon\searrow0}\left\langle\v M^\theta(\lambda+i\varepsilon)\v\;\!\gamma_0
(\F^{\theta})^*\delta_\varepsilon(X^\theta-\lambda)\xi,
\int_{I_j^\theta}\tfrac{\beta_j^\theta(\mu)^{-1}}{\mu-\lambda+i\varepsilon}\;\!
\zeta_j(\mu)\;\!\d\mu\right\rangle_{\C^N}\d\lambda.\label{eq_remainder_1}
\end{align}
with $\delta_\varepsilon(X^\theta-\lambda)
:=\frac{\pi^{-1}\varepsilon}{(X^\theta-\lambda)^2+\varepsilon^2}$ and
$\gamma_0:\h\to\C^N$ given by
\begin{equation}\label{def_gamma0}
(\gamma_0g)_j:=\int_0^\pi g_j(\omega)\tfrac{\d\omega}\pi,
\quad g\in\h,~j\in\{1,\dots,N\}.
\end{equation}
The main term \eqref{eq_leading_1} could be interpreted as an on-shell contribution,
while the remainder term \eqref{eq_remainder_1} could be interpreted as an off-shell
contribution.

The interest of such a decomposition is that the main term is equal to the product of
an explicit operator independent of the potential, and the operator $S^\theta-1$.
Namely, we show in Section \ref{sec_main} that, up to a compact term, the operator
corresponding to \eqref{eq_leading_1} is unitarily equivalent to
\begin{equation}\label{eq_intro}
\tfrac12\big(1-\tanh(\pi\Dt)-i\cosh(\pi\Dt)^{-1}\tanh(\Xt)\big)(S^\theta-1),
\end{equation}
where $\Xt$ and $\Dt$ are representations of the canonical position and momentum
operators in the Hilbert space $\h$. This formula is obtained by considering the
on-shell contribution in a rescaled energy representation whose importance has been
revealed in \cite{BSB,SB16} and which was also used explicitly in \cite{IT19} and
implicitly in \cite{KR08,PR}.

The analysis of the remainder term \eqref{eq_remainder_1} is more involved, and
depends on the value of $\theta$. When $\theta\ne0$ or when $\theta=0$ and $N$ is odd,
the operator corresponding to \eqref{eq_remainder_1} extends continuously to a compact
operator. Since the operator \eqref{eq_intro} is never compact, this shows that the
remainder term can indeed be considered small compared to the leading term. On the
other hand, when $\theta=0$ and $N$ is even, more analysis is required. In this case,
the compacity argument does not work when the energy bands $[-4,0]$ and $[0,4]$ of
$H^0$ exactly touch, without overlaping, see Remark \ref{remark_bad}. However, if the
vectors $\v\xi_N^0$ and $\v\xi^0_{N/2}$ are linearly independent (see Remark
\ref{rem_A_theta} when $\theta=0$ for the definition of the vectors), one can still
show that the remainder term is compact. We thus call \emph{degenerate case} the very
exceptional case where $\theta=0$, $N$ is even, and $\v\xi_N^0$ and $\v\xi^0_{N/2}$
are linearly dependent. A direct inspection shows that it takes place if and only if
the matrix $\v$ has the very particular form
\begin{equation}\label{eq_spf}
\v=
\left(\begin{smallmatrix}
v(1)&&&&0\\
&0&&&\\
&&v(3)&&\\
&&&0&\\
0&&&&\ddots\\
\end{smallmatrix}\right)
\quad\hbox{or}\quad
\v=
\left(\begin{smallmatrix}
0&&&&0\\
&v(2)&&&\\
&&0&&\\
&&&v(4)&\\
0&&&&\ddots\\
\end{smallmatrix}\right).
\end{equation}
In the degenerate case, the remainder term is bounded but not compact. However, in our
second paper, we will show that the remainder term can still be considered small
compared to the leading term, once suitable $C^*$-algebras are introduced.

Summing up what precedes, we get (see Theorem \ref{thm_formula_theta} for more details):

\begin{Theorem}\label{thm_theta_intro}
For any $\theta\in[0,2\pi]$, one has the equality
$$
W_-^\theta-1
=\tfrac12\big(1-\tanh(\pi\Dt)-i\cosh(\pi\Dt)^{-1}\tanh(\Xt)\big)(S^\theta-1)
+\Kt^\theta,
$$
with $\Kt^\theta$ compact in the nondegenerate cases, and $\Kt^0$ bounded in the
degenerate case.
\end{Theorem}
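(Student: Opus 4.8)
The plan is to start from the stationary representation formula for $W_-^\theta$ that underlies the decomposition \eqref{eq_leading_1}--\eqref{eq_remainder_1}, and to treat the on-shell term and the off-shell term separately. First I would establish carefully that the right-hand sides of \eqref{eq_leading_1} and \eqref{eq_remainder_1} are well defined, i.e.\ that the limits $\varepsilon\searrow0$ exist; here the asymptotic resolvent expansions of Proposition \ref{Prop_asymp} for $M^\theta(\lambda+i\varepsilon)$ do the work, since away from $\T^\theta\cup\sigma_{\rm p}(H^\theta)$ one has the continuous boundary value $M^\theta(\lambda+i0)$, while near thresholds and embedded eigenvalues the singularities of $M^\theta$ are controlled in the small subspaces $\Ran S_1,\Ran S_2$ in a way that is still integrable against the factor $\beta_j^\theta(\lambda)^{-1}$. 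This also re-identifies the content of \eqref{eq_leading_1}, using \eqref{eq_S_lambda} and the fact that $\gamma_0(\F^\theta)^*$ restricted on shell reduces to a multiple of $\beta_j^\theta(\lambda)^{-1}\P_j^\theta$, so that after carrying out the $\mu$-integral the kernel of \eqref{eq_leading_1} is exactly $S^\theta(\lambda)-1$ conjugated by an explicit potential-independent convolution-type operator.

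Next I would compute that potential-independent operator. The $\mu$-integral $\int_{I_j^\theta}\frac{\beta_j^\theta(\mu)^{-1}}{\mu-\lambda+i\varepsilon}\zeta_j(\mu)\,\d\mu$, once transported to the rescaled energy representation of \cite{BSB,SB16} (a logarithmic change of variables near each threshold together with the identification $\beta_j^\theta(\lambda)\,\d\lambda$ with the natural measure), becomes a standard singular integral of the form $\bigl(\mathrm{p.v.}\int\frac{\,\cdot\,}{s-t}+i\pi\delta\bigr)$, which in the position/momentum picture $(\Xt,\Dt)$ of $\h$ is a function of $\tanh(\pi\Dt)$, $\cosh(\pi\Dt)^{-1}$ and $\tanh(\Xt)$. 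Matching constants and the two threshold endpoints of each band $I_j^\theta$ yields precisely $\tfrac12\bigl(1-\tanh(\pi\Dt)-i\cosh(\pi\Dt)^{-1}\tanh(\Xt)\bigr)$ as the coefficient of $S^\theta-1$; this step is essentially the computation already performed in \cite{IT19,KR08,PR}, adapted to the present fibered situation, and I would present it as a lemma.

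The real work is the off-shell term \eqref{eq_remainder_1}. The strategy is: show that the operator $\Kt^\theta$ defined by \eqref{eq_remainder_1} (i.e.\ $W_-^\theta-1$ minus the leading term just computed) extends from the dense set of suitable $\xi,\zeta$ to a bounded operator on $\h$, and is compact except in the degenerate case. Boundedness should follow from the resolvent expansions together with the explicit kernel $G(H_0^\theta-\lambda-i0)^{-1}G^*$ from \eqref{eq_sandwich}: on $\sigma(H_0^\theta)\setminus I_j^\theta$ the factor $(\mu-\lambda+i\varepsilon)^{-1}$ in the inner integral no longer hits the singularity at $\mu=\lambda$, so the $\mu$-integral produces a smooth, bounded kernel, and the only possible loss of compactness comes from the behaviour of $M^\theta(\lambda+i0)$ and of $\gamma_0(\F^\theta)^*\delta_\varepsilon(X^\theta-\lambda)$ as $\lambda$ approaches an endpoint of $\sigma(H_0^\theta)\setminus I_j^\theta$ that is simultaneously an endpoint $\lambda_k^\theta\pm2$ of the band $I_j^\theta$ — in other words at a threshold where two bands $I_j^\theta$ and $I_k^\theta$ meet. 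For $\theta\ne0$, or $\theta=0$ with $N$ odd, the eigenvalues $\lambda_j^\theta=2\cos\bigl((\theta+2\pi j)/N\bigr)$ are such that no two bands share a threshold with the same one-sided behaviour, so the kernel extends continuously and vanishes appropriately at the corners, giving compactness by a Hilbert--Schmidt / Ascoli argument. When $\theta=0$ and $N$ is even, the bands $[-4,0]$ and $[0,4]$ touch at $0$ without overlapping (Remark \ref{remark_bad}), and the leftover kernel acquires a jump across $\lambda=0$ between the $j=N$ channel and the $j=N/2$ channel; the size of that jump is governed by the Gram matrix of the vectors $\v\xi_N^0$ and $\v\xi^0_{N/2}$. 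If these are linearly independent one can still arrange cancellation (after an appropriate splitting of the two channels) and recover compactness; if they are linearly dependent — equivalently $\v$ has one of the forms \eqref{eq_spf} — the jump genuinely survives, so $\Kt^0$ is bounded but not compact. I expect this endpoint/corner analysis at $\theta=0$, $N$ even, to be the main obstacle, and the key technical input will be sharp control of $M^0(\lambda+i0)$ near $\lambda=0$ from Proposition \ref{Prop_asymp} together with the explicit form of \eqref{eq_sandwich} at coinciding thresholds.
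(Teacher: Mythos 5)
Your plan reproduces the paper's proof in all essentials: the same splitting into the on-shell term \eqref{eq_leading_1} and the off-shell term \eqref{eq_remainder_1}, the same passage to the rescaled energy representation $\lambda=\lambda_j^\theta+2\tanh(s)$ identifying the principal-value/delta kernel with $\tfrac12\big(1-\tanh(\pi D)-i\cosh(\pi D)^{-1}\tanh(X)\big)$ modulo compacts (the paper's Lemma \ref{lemma_Pi} and Remark \ref{rem_form_Pi}), and the same localisation of the obstruction to compactness at the touching point of the bands of $H^0$, resolved by the vanishing of $\P_N^0\v\;\!I_0(0)^{-1}\v\;\!\P_{N/2}^0$, which is equivalent to linear independence of $\v\xi_N^0$ and $\v\xi_{N/2}^0$ (Lemmas \ref{lemma_exceptional} and \ref{lemma_independent}).

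One concrete point where your stated toolbox would not carry the argument. At a corner where a threshold of $I_j^\theta$ meets a threshold of $I_{j'}^\theta$, the off-shell kernel takes, after an affine change of variables, the form $m(x)\;\!\tfrac{x^{1/4}y^{-1/4}}{x+y}\;\!(\cdots)$; the central factor is homogeneous of degree $-1$ and is therefore \emph{not} Hilbert--Schmidt near the corner --- it is a bounded Mellin (Hardy-type) operator, written in the paper as $\varphi(A_+)$ with $A_+$ the generator of dilations and $\varphi\in C_0(\R)$. Compactness comes from the fact that $m(X_+)\varphi(A_+)$ is compact when the multiplier vanishes at the corner, i.e. $m\in C_0\big((0,\infty)\big)$: this holds automatically in the generic cases (where, as you say, one could alternatively check square-integrability directly), and in the exceptional case $\theta=0$, $N\in2\N$ it holds only after multiplying by $n^0_{N,N/2}$, whose boundary value at $0$ is $\tfrac12\P_N^0\v\;\!I_0(0)^{-1}\v\;\!\P_{N/2}^0$. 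In the degenerate case that boundary value does not vanish, and a Hilbert--Schmidt/Ascoli argument gives neither compactness nor boundedness; the homogeneity/Mellin structure is what yields the asserted boundedness of $\Kt^0$, so you need to supply it. Note finally that the theorem only claims boundedness in the degenerate case; your assertion that $\Kt^0$ is then genuinely non-compact is not proved in the paper and is not needed.
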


This type of formula has been obtained for various models having a finite point
spectrum: first in \cite{KR06,KR07}, and then in various other papers summarised in
the review article \cite{Ric16}. Similar formulas have also been independently
obtained in \cite{BSB} and in \cite{IT19}, and some generalisations to the case of an
infinite number of eigenvalues can be found in \cite{IR1,IR2}.

The final step is to combine the formulas for the wave operators for all quasi-momenta
$\theta$ to obtain a new representation formula for the wave operators
$W_\pm:=\slim_{t\to\pm\infty}\e^{itH}\e^{-itH_0}$ of the initial pair $(H,H_0)$. For
this, we first note from the direct integral decompositions
\eqref{eq_int_H_0}-\eqref{eq_int_H}, from the existence and completeness of
$W_\pm^\theta$ for each $\theta\in[0,2\pi]$, and from \cite[Sec.~2.4]{Fra03}, that
$W_\pm$ exist and have same range. In addition, the wave operators $W_\pm$ and the
scattering operator $S:=(W_+)^*W_-$ are unitarily equivalent to the direct integral
operators in $\HH$
$$
\int_{[0,2\pi]}^\oplus W_\pm^\theta\;\!\tfrac{\d\theta}{2\pi}
\quad\hbox{and}\quad
\int_{[0,2\pi]}^\oplus S^\theta\;\!\tfrac{\d\theta}{2\pi}.
$$
Therefore, by collecting the formulas obtained in Theorem \ref{thm_theta_intro} for
$W_-^\theta-1$ in each fiber Hilbert space $\h$, we obtain a new formula for $W_--1$
(and thus also for $W_+$ if we use the relation $W_+=W_-S^*$):

\begin{Theorem}\label{thm_the_goal}
The operator $W_--1$ is unitarily equivalent to the direct integral operator in $\HH$
\begin{equation}\label{eq_the_goal}
\int_{[0,2\pi]}^\oplus\left(\tfrac12\big(1-\tanh(\pi\Dt)
-i\cosh(\pi\Dt)^{-1}\tanh(\Xt)\big)(S^\theta-1)+\Kt^\theta\right)
\tfrac{\d\theta}{2\pi},
\end{equation}
with $\Kt^\theta$ as in Theorem \ref{thm_theta_intro}.
\end{Theorem}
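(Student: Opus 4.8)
The plan is to assemble Theorem~\ref{thm_the_goal} directly from Theorem~\ref{thm_theta_intro} by invoking the direct integral machinery recalled just before the statement. First I would recall that, by the Bloch--Floquet unitary equivalences \eqref{eq_int_H_0}--\eqref{eq_int_H} together with \cite[Sec.~2.4]{Fra03}, the wave operator $W_-$ of the pair $(H,H_0)$ is unitarily equivalent to $\int_{[0,2\pi]}^\oplus W_-^\theta\,\tfrac{\d\theta}{2\pi}$ acting in $\HH$; subtracting $1=\int_{[0,2\pi]}^\oplus 1\,\tfrac{\d\theta}{2\pi}$ fiberwise, $W_--1$ is unitarily equivalent to $\int_{[0,2\pi]}^\oplus(W_-^\theta-1)\,\tfrac{\d\theta}{2\pi}$. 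Then I would substitute into each fiber the identity of Theorem~\ref{thm_theta_intro},
$$
W_-^\theta-1
=\tfrac12\big(1-\tanh(\pi\Dt)-i\cosh(\pi\Dt)^{-1}\tanh(\Xt)\big)(S^\theta-1)+\Kt^\theta,
$$
which is valid for \emph{every} $\theta\in[0,2\pi]$ (with $\Kt^\theta$ compact in the nondegenerate cases and $\Kt^0$ merely bounded in the degenerate case), obtaining precisely the operator \eqref{eq_the_goal}.

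The one point that needs genuine care, rather than mere bookkeeping, is the measurability and integrability of the fiber family $\theta\mapsto\big(\tfrac12(1-\tanh(\pi\Dt)-i\cosh(\pi\Dt)^{-1}\tanh(\Xt))(S^\theta-1)+\Kt^\theta\big)$, so that the right-hand side of \eqref{eq_the_goal} is a well-defined decomposable (bounded) operator in $\HH$. The prefactor $\tfrac12(1-\tanh(\pi\Dt)-i\cosh(\pi\Dt)^{-1}\tanh(\Xt))$ is a fixed operator in $\B(\h)$ independent of $\theta$, so measurability of the product reduces to measurability of $\theta\mapsto S^\theta$ and $\theta\mapsto\Kt^\theta$; uniform boundedness is immediate since $S^\theta$ is unitary, the prefactor has norm bounded independently of $\theta$, and $\Kt^\theta$ is uniformly bounded (it is compact for $\theta\neq0$, and the degenerate fiber $\theta=0$ is a single point of measure zero which does not affect the direct integral). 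I would deduce the measurability of $\theta\mapsto S^\theta$ from the fact that $S=\int^\oplus S^\theta\,\tfrac{\d\theta}{2\pi}$ is already known to be a decomposable operator (it is unitarily equivalent to the scattering operator of $(H,H_0)$, as recalled before the statement), and the measurability of $\theta\mapsto W_-^\theta-1$ likewise from $W_-=\int^\oplus W_-^\theta\,\tfrac{\d\theta}{2\pi}$; hence $\theta\mapsto\Kt^\theta$, being the difference of two measurable fiber families, is measurable as well. This makes \eqref{eq_the_goal} a bona fide decomposable operator, equal a.e.\ in the fibers — hence equal as operators in $\HH$ — to $\int^\oplus(W_-^\theta-1)\,\tfrac{\d\theta}{2\pi}$, which completes the proof.

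I expect the main (indeed only) obstacle to be this measurability verification for the fiber families, together with the minor subtlety that the degenerate case occurs only at $\theta=0$ and therefore on a null set; once one is comfortable that a single exceptional fiber is invisible to the direct integral, the argument is a one-line substitution. The remark about $W_+$ follows immediately by writing $W_+=W_-S^*=W_-(S^*-1)+W_-$ and decomposing fiberwise, but since the statement only asserts the formula for $W_--1$ I would not elaborate further.
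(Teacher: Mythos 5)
Your proposal is correct and takes essentially the same route as the paper: the paper likewise invokes \cite[Sec.~2.4]{Fra03} and the Bloch--Floquet decomposition to write $\G W_-\G^*=\int_{[0,2\pi]}^\oplus W_-^\theta\,\tfrac{\d\theta}{2\pi}$ and then substitutes the fiberwise formula of Theorem \ref{thm_theta_intro} (Theorem \ref{thm_formula_theta}) into each fiber. The measurability and uniform-boundedness check you spell out is left implicit in the paper, but it is the right (and only) point requiring care.
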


A more detailed version of this result is presented in Theorem \ref{thm_formula}, with
the unitary equivalence explicitly stated. Now, even though this theorem is the
culminating result of this paper, it is also the starting point for future
investigations. Indeed, in recent years, similar formulas for the wave operators have
been at the root of topological index theorems in scattering theory generalising the
so-called Levinson's theorem. To some extent, these index theorems encode the fact
that the wave operators are partial isometries which relate, through the projection on
their cokernels, the scattering states of a system to its bound states. In our
situation, it can be shown using the direct integral representation
\eqref{eq_the_goal} that the states which belong to the cokernel of $W_-$ are no more
bound states but surface states. Therefore, the theorem mentioned at the beginning of
this introduction will be an index theorem about surface states based on Theorem
\ref{thm_the_goal}. In fact, a result of this type (a relation between the total
density of surface states and the density of the total time delay) has already
appeared in \cite{SB16}. Let us also mention the work \cite{GP} which contains a
bulk-edge correspondence for two-dimensional topological insulators, and whose proof
is partially based on scattering theory. For our model, the necessary $C^*$-algebraic
framework will be introduced in a second paper, and the continuity of the scattering
matrix and the existence of its limits at thresholds established here will play a
crucial role for the choice of the $C^*$-algebras. The $\theta$-dependence of all the
operators will also be a key ingredient for the construction. More information on
these issues, and the applications of the analytical results obtained here, will be
presented in the second paper.

\section{Direct integral decompositions of $H_0$ and $H$}\label{sec_direct}
\setcounter{equation}{0}

Before describing the direct integral decompositions of $H_0$ and $H$, let us observe
that the discrete Neumann operator $\Delta_\NN$ has been chosen so that it is a
natural restriction of the discrete adjacency operator $\Delta_\Z$ in $\ell^2(\Z)$.
Indeed, if we decompose $\ell^2(\Z)$ as a direct sum of even and odd functions
$\ell^2(\Z)=\ell^2_{\rm even}(\Z)\oplus\ell^2_{\rm odd}(\Z)$, then $\Delta_\Z$ is
reduced by this decomposition and satisfies the equality $\Delta_\NN:=\U\Delta_\Z\U^*$
with $\U:\ell^2_{\rm even}(\Z)\to\ell^2(\N)$ the unitary operator given by
$$
(\U\varphi)(n):=
\begin{cases}
\varphi(0) &\hbox{if $n=0$}\\
2^{1/2}\;\!\varphi(n) &\hbox{if $n\ge1$.}
\end{cases}
$$

Since $H_0$ and $H$ are periodic in the $x$-variable, it is natural to decompose them
using a Bloch-Floquet transformation. We sketch this decomposition by providing the
necessary formulas, and leave the details to the reader. For
$\psi\in\H_{\rm fin}:=\{\psi\in\H\mid\hbox{$\supp(\psi)$ is finite}\}$ the
Bloch-Floquet transformation is defined by
$$
\big(\G_1\psi\big)_j(\theta,n):=\sum_{k\in\Z}\e^{-ik\theta}\psi(kN+j,n),
\quad\psi\in\H_{\rm fin},~j\in\{1,\dots,N\},~\theta\in[0,2\pi],~n\in\N,
$$
and $\G_1$ extends to a unitary operator from $\H$ to
$\int_{[0,2\pi]}^\oplus\ell^2\big(\N;\C^N\big)\tfrac{\d\theta}{2\pi}$ satisfying the
relation
$$
\G_1H_0\;\!\G_1^*=\int_{[0,2\pi]}^\oplus H_0(\theta)\;\!\tfrac{\d\theta}{2\pi},
$$
with $H_0(\theta):=\Delta_\NN+A^\theta$ and $A^\theta$ the $N\times N$ hermitian
matrix \eqref{eq_matrix}. Then, we define a second unitary operator
$\G_2:\int_{[0,2\pi]}^\oplus\ell^2\big(\N;\C^N\big)\;\!\tfrac{\d\theta}{2\pi}\to\HH$
(acting on the $n$-variable) given for suitable elements
$f\in\int_{[0,2\pi]}^\oplus\ell^2\big(\N;\C^N\big)\;\!\tfrac{\d\theta}{2\pi}$ by
$$
\big(\G_2f\big)_j(\theta,\omega)
:=2^{1/2}\sum_{n\ge1}\cos(n\;\!\omega)f_j(\theta,n)+f_j(\theta,0)
\quad\hbox{$j\in\{1,\dots,N\}$, a.e. $\theta\in[0,2\pi]$, $\omega\in[0,\pi)$}.
$$
Finally, using the composed unitary operator $\G:=\G_2\;\!\G_1$, we obtain that
$\G H_0\G^*=\int_{[0,2\pi]}^\oplus H^\theta_0\;\!\tfrac{\d\theta}{2\pi}$ with
$H^\theta_0$ as in \eqref{eq_int_H_0}.

Now, to decompose $H=H_0+V$, one only needs to compute the image of $V$ through $\G$.
A straightforward calculation gives
$\G V\G^*=\int_{[0,2\pi]}^\oplus\diag(v)P_0\;\!\tfrac{\d\theta}{2\pi}$ with $\diag(v)$
and $P_0$ as in \eqref{eq_P_0}. Putting what precedes together, we thus obtain that
$\G H\G^*=\int_{[0,2\pi]}^\oplus H^\theta\;\!\tfrac{\d\theta}{2\pi}$ with $H^\theta$
as in \eqref{eq_int_H}. The main interest of the above representation is that for each
fixed $\theta$ the operator $\diag(v)P_0$ is a finite rank perturbation of the
operator $H_0^\theta$. Indeed, if $\{e_j\}_{j=1}^N$ denotes the canonical basis of
$\C^N$, then one has for any $g\in\h$
$$
\diag(v)P_0\;\!g=\sum_{j=1}^Nv(j)(P_0g)_j\;\!e_j,
$$
with the r.h.s. independent of the variable $\omega$.

\begin{Remark}\label{rem_A_theta}
(a) Since $A^0=A^{2\pi}$, the matrices $A^0$ and $A^{2\pi}$ have the same eigenvalues,
that is, for each $j\in\{1,\dots,N\}$ there exists $j'\in\{1,\dots,N\}$ (in general
distinct from $j$) such that $\lambda_j^0=\lambda_{j'}^{2\pi}$.

(b) The eigenvalues of $A^\theta$ have multiplicity $1$, except in the cases
$\theta=0$, $\pi$, and $2\pi$, when they can have multiplicity $2$. Namely, one has:
\begin{enumerate}
\item[(i)] $\lambda_j^0=\lambda_{N-j}^0$ if $N\ge3$ and $j\in\{1,\dots,N-1\}$,
\item[(ii)] $\lambda_N^\pi=\lambda_{N-1}^\pi$ if $N\ge2$, and
$\lambda_j^\pi=\lambda_{N-j-1}^\pi$ if $N\ge4$ and $j\in\{1,\dots,N-2\}$,
\item[(iii)] $\lambda_{N-2}^{2\pi}=\lambda_{N}^{2\pi}$ if $N\ge3$, and
$\lambda_j^{2\pi}=\lambda_{N-j-2}^{2\pi}$ if $N\ge5$ and $j\in\{1,\dots,N-3\}$.
\end{enumerate}
\end{Remark}

To conclude the section, we provide as an example the explicit formulas for the matrix
$A^\theta$ and its eigenvalues $\lambda_j^\theta$, eigenvectors $\xi_j^\theta$, and
orthogonal projections $\P_j^\theta$ in the case $N=2:$

\begin{Example}[Case $N=2$]\label{ex_period_2}
When the potential $v$ has period $N=2$, the matrix $A^\theta$ takes the form
$$
A^\theta
=\begin{pmatrix}
0 & 1+\e^{-i\theta}\\
1+\e^{i\theta} & 0
\end{pmatrix}.
$$
It has eigenvalues $\lambda_1^\theta=-2\cos(\theta/2)$ and
$\lambda_2^\theta=2\cos(\theta/2)$, eigenvectors
$
\xi_1^\theta
=\left(\begin{smallmatrix}
-\e^{i\theta/2}\\\e^{i\theta}
\end{smallmatrix}\right)
$
and
$
\xi_2^\theta=
\left(\begin{smallmatrix}
\e^{i\theta/2}\\\e^{i\theta}
\end{smallmatrix}\right),
$
and orthogonal projections
$$
\P_1^\theta
=\tfrac12
\begin{pmatrix}
1 & -\e^{-i\theta/2}\\
-\e^{i\theta/2} & 1
\end{pmatrix}
\quad\hbox{and}\quad
\P_2^\theta
=\tfrac12
\begin{pmatrix}
1 &\e^{-i\theta/2}\\
\e^{i\theta/2} & 1
\end{pmatrix}.
$$
\end{Example}

\section{Analysis of the fibered Hamiltonians $H^\theta$}\label{sec_H_theta}
\setcounter{equation}{0}

In this section, we establish spectral properties and derive asymptotic resolvent
expansions for the fibered Hamiltonians $H^\theta$. Using the resolvent expansions, we
also determine properties the scattering operator $S^\theta$ for the pair
$(H^\theta,H^\theta_0)$. Some of the proofs are differed to the Appendix in order to
make the reading more pleasant.

\subsection{Spectral analysis of the fibered Hamiltonians $H^\theta$}

Recall first that the $N\times N$ matrices $\u$ and $\v$ have been introduced before
\eqref{def_beta}, and that the operator $G:\h\to\C^N$ has been defined in
\eqref{def_G}. The adjoint $G^*:\C^N\to\h$ is then given by
$$
(G^*\xi)_j(\omega):=|v(j)|^{1/2}\xi_j,
\quad\xi\in\C^N,~j\in\{1,\dots,N\},~\omega\in[0,\pi).
$$
By setting $R_0^\theta(z):=(H_0^\theta-z)^{-1}$ and $R^\theta(z):=(H^\theta-z)^{-1}$
for $z\in\C\setminus\R$, the resolvent equation takes the form
\begin{equation}\label{eq_resolv_1}
R^\theta(z)
=R_0^\theta(z)-R_0^\theta(z)G^*\big(\u+GR_0^\theta(z)G^*\big)^{-1}GR_0^\theta(z)
\end{equation}
or the equivalent form
\begin{equation}\label{eq_resolv_2}
GR^\theta(z)G^*=\u-\u\big(\u+GR_0^\theta(z)G^*\big)^{-1}\u.
\end{equation}
These equations are rather standard and can be deduced from the usual resolvent
equations, see for instance \cite[Sec.~1.9]{Yaf92}.

Motivated by the formula \eqref{eq_resolv_2}, we now analyse the operator
$(\u+GR_0^\theta(z)G^*)^{-1}$ which belongs to the set $\B(\C^N)$ of $N\times N$
complex matrices for each $z\in\C\setminus\R$. In the lemma below, we prove the
existence of the limit
$$
GR_0^\theta(\lambda+i\;\!0)G^*
:=\lim_{\varepsilon\searrow0}GR_0^\theta(\lambda+i\varepsilon)G^*
$$
for appropriate values of $\lambda\in\R$, and we provide an expression for the limit.
For the proof, we use the convention that the square root $\sqrt z$ of a complex
number $z\in\C\setminus[0,\infty)$ is chosen so that $\im(\sqrt z)>0$. We also define
the unit circle $\S^1:=\{z\in\C\mid|z|=1\}$, the unit disc $\D:=\{z\in\C\mid|z|<1\}$,
and recall that  the set of thresholds $\T^\theta$ and the functions $\beta_j^\theta$
have been introduced in  \eqref{eq_thresholds} and \eqref{def_beta}, respectively.
Note that in general, the set $\T^\theta$ contains $2N$ elements, but it may contain
fewer elements if some eigenvalues of $A^\theta$ have multiplicity $2$, see Remark
\ref{rem_A_theta}(b).

\begin{Lemma}\label{lemma_sandwich}
For each $\theta\in[0,2\pi]$ and $\lambda\in\R\setminus\T^\theta$ the following
equality holds in $\B(\C^N):$
\begin{equation}\label{eq_sandwich}
GR_0^\theta(\lambda+i\;\!0)G^*
=\sum_{\{j\mid\lambda<\lambda_j^\theta-2\}}\tfrac{\v\;\!\P_j^\theta\v}
{\beta_j^\theta(\lambda)^2}
+i\sum_{\{j\mid\lambda\in I^\theta_j\}}\tfrac{\v\;\!\P_j^\theta\v}
{\beta_j^\theta(\lambda)^2}
-\sum_{\{j\mid\lambda>\lambda_j^\theta+2\}}\tfrac{\v\;\!\P_j^\theta\v}
{\beta_j^\theta(\lambda)^2}.
\end{equation}
\end{Lemma}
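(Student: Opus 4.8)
The plan is to reduce the operator identity \eqref{eq_sandwich} to a family of scalar integrals, evaluate each of them for $\im z>0$ by a residue computation, and then pass to the boundary value $z\to\lambda+i0$; the only delicate point is the bookkeeping of the branch of the square root. First I would use the spectral decomposition $A^\theta=\sum_{j=1}^N\lambda_j^\theta\P_j^\theta$ from Remark \ref{rem_A_theta} together with the product structure of $H_0^\theta=2\cos(\Omega)+A^\theta$: since $2\cos(\Omega)$ acts on the variable $\omega$ while the matrices $\P_j^\theta$ act on $\C^N$, they commute, and using $\sum_j\P_j^\theta=1$ one gets $H_0^\theta=\sum_{j=1}^N\big(2\cos(\Omega)+\lambda_j^\theta\big)\P_j^\theta$, hence for $z\in\C\setminus\R$
\[
R_0^\theta(z)=\sum_{j=1}^N\big(2\cos(\Omega)+\lambda_j^\theta-z\big)^{-1}\P_j^\theta .
\]
Inserting this between $G$ and $G^*$ and using $(G^*\xi)_j(\omega)=\v_{jj}\xi_j$, $(Gg)_j=\v_{jj}\int_0^\pi g_j(\omega)\,\tfrac{\d\omega}\pi$, and the diagonality of $\v$ (so that $\v_{mm}(\P_j^\theta\v\eta)_m=(\v\P_j^\theta\v\eta)_m$ for every $\eta\in\C^N$), a direct computation yields
\[
GR_0^\theta(z)G^*=\sum_{j=1}^NI_j(z)\;\!\v\P_j^\theta\v,
\qquad
I_j(z):=\int_0^\pi\frac1{2\cos(\omega)+\lambda_j^\theta-z}\,\frac{\d\omega}\pi .
\]
This step is insensitive to coincidences among the eigenvalues of $A^\theta$ (Remark \ref{rem_A_theta}(b)), since only $A^\theta=\sum_j\lambda_j^\theta\P_j^\theta$ and $\sum_j\P_j^\theta=1$ are used.

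Next I would compute $I_j(z)$ for $\im z>0$. Putting $w:=z-\lambda_j^\theta$ and using the evenness of $\cos$ to turn $I_j(z)$ into an integral over $[0,2\pi]$, the substitution $\zeta=\e^{i\omega}$ gives $I_j(z)=\tfrac1{2\pi i}\oint_{\S^1}\frac{\d\zeta}{\zeta^2-w\zeta+1}$. The denominator factors as $(\zeta-\zeta_+)(\zeta-\zeta_-)$ with $\zeta_\pm:=\tfrac12\big(w\pm\sqrt{w^2-4}\big)$, where the branch $\im\sqrt{\,\cdot\,}>0$ is used; this is legitimate because $w^2-4\in[0,\infty)$ would force $w\in[-2,2]\subset\R$, which is excluded by $\im w>0$. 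From $\zeta_+\zeta_-=1$, and since a root can meet $\S^1$ only when $w=\zeta+\zeta^{-1}=2\re\zeta\in[-2,2]$, exactly one root lies in $\D$ for $\im w>0$; by continuity and connectedness of $\{\im w>0\}$ and the fact that $|\zeta_-(w)|\to0$ as $w\to+i\infty$, it is always $\zeta_-$. The residue theorem then gives $I_j(z)=(\zeta_--\zeta_+)^{-1}=-\big((z-\lambda_j^\theta)^2-4\big)^{-1/2}$, i.e.
\[
GR_0^\theta(z)G^*=-\sum_{j=1}^N\frac{\v\P_j^\theta\v}{\sqrt{(z-\lambda_j^\theta)^2-4}},\qquad\im z>0 .
\]

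Finally I would set $z=\lambda+i\varepsilon$ with $\lambda\in\R\setminus\T^\theta$ and let $\varepsilon\searrow0$. Since $\lambda\ne\lambda_j^\theta\pm2$ for every $j$, each index $j$ falls into exactly one of the cases $\lambda<\lambda_j^\theta-2$, $\lambda\in I_j^\theta$, $\lambda>\lambda_j^\theta+2$, and everything reduces to locating $(z-\lambda_j^\theta)^2-4=(\lambda-\lambda_j^\theta)^2-\varepsilon^2-4+2i\varepsilon(\lambda-\lambda_j^\theta)$ relative to the cut $[0,\infty)$. If $\lambda\in I_j^\theta$, the limit point is the negative number $(\lambda-\lambda_j^\theta)^2-4$, off the cut, so $\sqrt{\,\cdot\,}\to i\,\beta_j^\theta(\lambda)^2$ and the $j$-th term tends to $i\,\beta_j^\theta(\lambda)^{-2}\v\P_j^\theta\v$; if $\lambda>\lambda_j^\theta+2$, the imaginary part $2\varepsilon(\lambda-\lambda_j^\theta)$ is positive, so the positive number $(\lambda-\lambda_j^\theta)^2-4$ is approached from above the cut, $\sqrt{\,\cdot\,}\to\beta_j^\theta(\lambda)^2$, and the term tends to $-\beta_j^\theta(\lambda)^{-2}\v\P_j^\theta\v$; if $\lambda<\lambda_j^\theta-2$, that imaginary part is negative, so the limit point is approached from below the cut, $\sqrt{\,\cdot\,}\to-\beta_j^\theta(\lambda)^2$, and the term tends to $+\beta_j^\theta(\lambda)^{-2}\v\P_j^\theta\v$. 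Summing over $j$ reproduces \eqref{eq_sandwich}. The main obstacle is precisely this branch bookkeeping: the two unbounded spectral gaps flanking $I_j^\theta$ contribute with opposite signs because $\im\big((z-\lambda_j^\theta)^2\big)$ changes sign between them, and one must take the boundary limit with exactly the branch convention used in the residue computation. Alternatively, in the case $\lambda\in I_j^\theta$ one could avoid the explicit formula and instead use the Sokhotski--Plemelj formula together with the classical identity $\pv\int_0^\pi\frac{\d\omega}{\cos(\omega)-\cos(\omega_0)}=0$.
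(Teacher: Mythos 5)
Your proposal is correct and follows essentially the same route as the paper's proof: reduce to the scalar integrals $\int_0^\pi(2\cos(\omega)+\lambda_j^\theta-z)^{-1}\tfrac{\d\omega}{\pi}$, evaluate them by the residue theorem on $\S^1$, and take the boundary limit while tracking the branch of the square root in the three spectral regimes. The only (harmless) difference is that you identify the root lying in $\D$ by a global continuity/connectedness argument on $\{\im w>0\}$, whereas the paper does a direct $\varepsilon$-expansion of $a_\pm$ near the real axis in the two cases $\alpha^2\gtrless4$.
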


The proof of Lemma \ref{lemma_sandwich}, which essentially consists in a careful
application of the residue theorem, is given in the Appendix. Based on this lemma, the
characterisation of the point spectrum of the operator $H^\theta$ stated in
Proposition \ref{proposition_kernel} can be obtained. The proof of the proposition is
similar to that of \cite[Lemma~3.1]{RT17}:

\begin{proof}[Proof of Proposition \ref{proposition_kernel}]
The proof consists in applying \cite[Lemma~4.7.8]{Yaf92}. Once the assumptions of this
lemma are checked, it implies that the multiplicity of an eigenvalue
$\lambda\in\sigma_{\rm p}(H^\theta)\setminus\T^\theta$ equals the multiplicity of the
eigenvalue $1$ of the operator $-GR_0^\theta(\lambda+i\;\!0)G^*\u$. But since
$\u^2=1$, one deduces from Lemma \ref{lemma_sandwich} that
\begin{align}
&-GR_0^\theta(\lambda+i\;\!0)G^*\u\;\!\xi=\xi\quad(\xi\in\C^N)\nonumber\\
&\Leftrightarrow\u\;\!\xi\in
\ker\left(\u+\sum_{\{j\mid\lambda<\lambda_j^\theta-2\}}\tfrac{\v\;\!\P_j^\theta\v}
{\beta_j^\theta(\lambda)^2}
+i\sum_{\{j\mid\lambda\in I^\theta_j\}}\tfrac{\v\;\!\P_j^\theta\v}
{\beta_j^\theta(\lambda)^2}
-\sum_{\{j\mid\lambda>\lambda_j^\theta+2\}}\tfrac{\v\;\!\P_j^\theta\v}
{\beta_j^\theta(\lambda)^2}\right).\label{eq_ker_T_0}
\end{align}
By separating the real and imaginary parts of the operator on the r.h.s. and by noting
that the imaginary part consists in a sum of positive operators, one infers that
\eqref{eq_ker_T_0} reduces to the inclusion $\u\;\!\xi\in\mathcal K$. And since $\u$
is unitary, this implies the assertions.

We are thus left with proving that the assumptions of \cite[Lemma~4.7.8]{Yaf92} hold
in a neighbourhood of $\lambda\in\sigma_{\rm p}(H^\theta)\setminus\T^\theta$. First,
we recall that $H_0^\theta$ has purely absolutely continuous spectrum and spectral
multiplicity constant in a small neighbourhood of $\lambda$, as a consequence of the
spectral representation of $H_0^\theta$.
So, what remains is to prove that the operators $G$ and $\u\;\!G$ are strongly
$H^\theta_0$-smooth with some exponent $\alpha>1/2$ on any closed interval of
$\R\setminus\T^\theta$ (see \cite[Def.~4.4.5]{Yaf92} for the definition of strongly
smooth operators). In our setting, one can check that the $H_0^\theta$-smoothness with
exponent $\alpha>1/2$ coincides with the H\"older continuity with exponent
$\alpha>1/2$ of the functions
$$
\R\setminus\T^\theta\ni\lambda\mapsto\big(\F^\theta G^*\xi\big)(\lambda)
=\pi^{-1/2}\sum_{\{j\mid\lambda\in I_j^\theta\}}\beta^\theta_j(\lambda)^{-1}
\P^\theta_j\v\;\!\xi\in\C^N,\quad\xi\in\C^N.
$$
Since this can be verified directly, as well as when $G^*$ is replaced by $G^*\u$, all
the assumptions of \cite[Lemma~4.7.8]{Yaf92} are satisfied.
\end{proof}

We illustrate the results of Proposition \ref{proposition_kernel} in the case $N=2$,
as we did in Example \ref{ex_period_2}:

\begin{Example}[Case $N=2$, continued]
In the case $N=2$, assume that $v(2x)=0$ and $v(2x+1)=-a^2$ for some $a>0$ and all
$x\in\Z$. Then, $\u=\left(\begin{smallmatrix}-1&0\\0&1\end{smallmatrix}\right)$,
$\v=\left(\begin{smallmatrix}a&0\\0&0\end{smallmatrix}\right)$, and one has for any
$\theta\in[0,\pi]$
$$
\sigma(H^\theta_0)
=\sigma_{\rm ess}(H^\theta)
=\big[-2\cos(\theta/2)-2, 2\cos(\theta/2)+2\big].
$$
Therefore, if $\lambda<-2\cos(\theta/2)-2=\inf\big(\sigma_{\rm ess}(H^\theta)\big)$,
Proposition \ref{proposition_kernel} implies that $\lambda$ is an eigenvalue of
$H^\theta$ if and only if
$$
\ker\left(\u+\tfrac{\v\;\!\P_1^\theta\v}{\beta_1^\theta(\lambda)^2}
+\tfrac{\v\;\!\P_2^\theta\v}{\beta_2^\theta(\lambda)^2}\right)
=\ker\begin{pmatrix}\tfrac{a^2}2\big(\beta_1^\theta(\lambda)^{-2}
+\beta_2^\theta(\lambda)^{-2}\big)-1 & 0
\\ 0 & 1\end{pmatrix}\ne\{0\},
$$
which is verified if and only if
\begin{equation}\label{eq_cond}
\big((\lambda+2\cos(\theta/2))^2-4\big)^{-1/2}
+\big((\lambda-2\cos(\theta/2))^2-4\big)^{-1/2}
=2a^{-2}.
\end{equation}
Now, the l.h.s. is a continuous, strictly increasing function of
$\lambda\in\big(-\infty,-2\cos(\theta/2)-2\big)$ with range equal to $(0,\infty)$. So,
there exists a unique solution to the equation \eqref{eq_cond}. And since a similar
argument holds for $\theta\in[\pi,2\pi]$, we conclude that for each
$\theta\in[0,2\pi]$ the operator $H^\theta$ has an eigenvalue of multiplicity $1$
below its essential spectrum.
\end{Example}

\begin{Example}[Case $N=2$, continued]
Still in the case $N=2$, assume this time that $v(2x)=a^2$ and $v(2x+1)=b^2$ for some
$a,b>0$, $a\ne b$, and all $x\in\Z$. Then, one has for any $\theta\in[0,\pi]$ and
$\lambda>2\cos(\theta/2)+2=\sup\big(\sigma_{\rm ess}(H^\theta)\big)$
$$
\u-\tfrac{\v\;\!\P_1^\theta\v}{\beta_1^\theta(\lambda)^2}
-\tfrac{\v\;\!\P_2^\theta\v}{\beta_2^\theta(\lambda)^2}
=\begin{pmatrix}
1-\tfrac{b^2}2\big(\beta_1^\theta(\lambda)^{-2}+\beta_2^\theta(\lambda)^{-2}\big)
& \tfrac{ab\e^{-i\theta/2}}2\big(\beta_1^\theta(\lambda)^{-2}
-\beta_2^\theta(\lambda)^{-2}\big)\\
\tfrac{ab\e^{i\theta/2}}2\big(\beta_1^\theta(\lambda)^{-2}
-\beta_2^\theta(\lambda)^{-2}\big)
& 1-\tfrac{a^2}2\big(\beta_1^\theta(\lambda)^{-2}+\beta_2^\theta(\lambda)^{-2}\big)
\end{pmatrix},
$$
and the determinant of this matrix is zero if and only if
\begin{equation}\label{eq_det}
\tfrac{a^2+b^2}2\big(\beta_1^\theta(\lambda)^2+\beta_2^\theta(\lambda)^2\big)
-\big(\beta_1^\theta(\lambda)\beta_2^\theta(\lambda)\big)^2-(ab)^2
=0.
\end{equation}
In order to check when this equation has solutions, we set
$\lambda(\mu):=2\cos(\theta/2)+2+\mu$ with $\mu>0$ and
\begin{align*}
f^\theta(\mu)
&:=\beta_1^\theta\big(\lambda(\mu)\big)
=\big(\mu+4\cos(\theta/2)\big)^{1/4}\big(\mu+4+4\cos(\theta/2)\big)^{1/4},\\
g^\theta(\mu)
&:=\beta_2^\theta\big(\lambda(\mu)\big)
=\mu^{1/4}(\mu+4)^{1/4},\\
h^\theta(\mu)
&:=\tfrac{a^2+b^2}2\big(f^\theta(\mu)^2+g^\theta(\mu)^2\big)
-\big(f^\theta(\mu)g^\theta(\mu)\big)^2-(ab)^2.
\end{align*}
With these notations, the equation \eqref{eq_det} reduces to $h^\theta(\mu)=0$. Now,
if $2^{3/2}(a^2+b^2)<(ab)^2$, then
$$
\lim_{\mu\searrow0}h^\theta(\mu)
=2\;\!(a^2+b^2)\cos(\theta/2)^{1/2}\big(1+\cos(\theta/2)\big)^{1/2}-(ab)^2
<2^{3/2}(a^2+b^2)-(ab)^2
<0.
$$
On the other hand, the AM-GM inequality implies that
$$
h^\theta(\mu)
\ge(a^2+b^2)f^\theta(\mu)g^\theta(\mu)
-\big(f^\theta(\mu)g^\theta(\mu)\big)^2-(ab)^2,
$$
with the r.h.s. strictly positive if
$f^\theta(\mu)g^\theta(\mu)\in(\min\{a^2,b^2\},\max\{a^2,b^2\})$. Finally, we have
$\lim_{\mu\to\infty}h^\theta(\mu)=-\infty$. In consequence, if
$2^{3/2}(a^2+b^2)<(ab)^2$, then the function $h^\theta$ is (i) strictly negative for
$\mu$ small enough, (ii) strictly positive on some positive interval, and (iii)
strictly negative for $\mu$ large enough. Since $h^\theta$ is continuous, it follows
that the equation $h^\theta(\mu)=0$ (and thus the equation \eqref{eq_det}) has at
least $2$ distinct solutions for any $\theta\in[0,\pi]$. Since a similar argument
holds for $\theta\in[\pi,2\pi]$, we conclude that for each $\theta\in[0,2\pi]$ the
operator $H^\theta$ has at least $2$ distinct eigenvalues above its essential
spectrum.
\end{Example}

\subsection{Resolvent expansions for the fibered Hamiltonians $H^\theta$}
\label{sec_expansion}

We are now ready to derive resolvent expansions for the fibered Hamiltonians
$H^\theta$ using the inversion formulas and iterative scheme developed in
\cite{RT16,RT17}. For that purpose, we set $\C_+:=\{z\in\C\mid\im(z)>0\}$ and consider
points $z=\lambda-\kappa^2$ with $\lambda\in\R$ and $\kappa$ belonging to the sets
$$
\begin{tabular}{l}
$O(\varepsilon)
:=\big\{\kappa\in\C\mid|\kappa|\in(0,\varepsilon),~\re(\kappa)>0,
~\im(\kappa)<0\big\}$\\
$\widetilde O(\varepsilon)
:=\big\{\kappa\in\C\mid|\kappa|\in(0,\varepsilon),~\re(\kappa)\ge0,
~\im(\kappa)\le0\big\}$
\end{tabular}
\quad(\varepsilon>0).
$$
Note that if $\kappa\in O(\varepsilon)$, then $-\kappa^2\in\C_+$ while if
$\kappa\in\widetilde O(\varepsilon)$, then $-\kappa^2\in\overline{\C_+}$. The main
result of this section then reads as follows:

\begin{Proposition}\label{Prop_asymp}
Let $\lambda\in\T^\theta\cup\sigma_{\rm p}(H^\theta)$, and take
$\kappa\in O(\varepsilon)$ with $\varepsilon>0$ small enough. Then, the operator
$\big(\u+GR_0^\theta(\lambda-\kappa^2)G^*\big)^{-1}$ belongs to $\B(\C^N)$ and is
continuous in the variable $\kappa\in O(\varepsilon)$. Moreover, the continuous
function
$$
O(\varepsilon)\ni\kappa\mapsto
\big(\u+GR_0^\theta(\lambda-\kappa^2)G^*\big)^{-1}\in\B(\C^N)
$$
extends continuously to a function
$
\widetilde O(\varepsilon)\ni\kappa\mapsto\M^\theta(\lambda,\kappa)\in\B(\C^N)
$,
and for each $\kappa\in\widetilde O(\varepsilon)$ the operator
$\M^\theta(\lambda,\kappa)$ admits an asymptotic expansion in $\kappa$. The precise
form of this expansion is given in \eqref{eq_sol_1} and \eqref{eq_sol_2}.
\end{Proposition}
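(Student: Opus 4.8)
The plan is to combine the resolvent identities \eqref{eq_resolv_1}--\eqref{eq_resolv_2} (available because $\diag(v)\;\!P_0=G^*\u G$, so that $H^\theta=H_0^\theta+G^*\u G$ is a finite rank perturbation) with an explicit expansion of the sandwiched free resolvent $GR_0^\theta(\lambda-\kappa^2)G^*$ and the iterative inversion scheme of \cite{RT16,RT17}. For $\kappa\in O(\varepsilon)$ the conditions $\re(\kappa)>0$, $\im(\kappa)<0$ force $\im(\kappa^2)<0$, hence $z:=\lambda-\kappa^2$ lies strictly in $\C_+$; since $H^\theta$ and $H_0^\theta$ are self-adjoint, $z\in\rho(H^\theta)\cap\rho(H_0^\theta)$, and \eqref{eq_resolv_2} yields $\big(\u+GR_0^\theta(z)G^*\big)^{-1}=\u-\u\;\!GR^\theta(z)G^*\;\!\u\in\B(\C^N)$, which depends analytically, hence continuously, on $\kappa\in O(\varepsilon)$. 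So the real content is the continuous extension to $\widetilde O(\varepsilon)$ together with the asymptotic expansion, which come out of the following two steps.

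\emph{Step 1: expansion of $GR_0^\theta(\lambda-\kappa^2)G^*$.} Using the spectral representation $\F^\theta$, the identity $\big(\F^\theta G^*\xi\big)(\mu)=\pi^{-1/2}\sum_{\{j\mid\mu\in I_j^\theta\}}\beta_j^\theta(\mu)^{-1}\P_j^\theta\v\;\!\xi$ from the proof of Proposition \ref{proposition_kernel}, and the pairwise orthogonality of the projections $\P_j^\theta$, one gets for $\kappa\in O(\varepsilon)$
$$
GR_0^\theta(\lambda-\kappa^2)G^*
=\pi^{-1}\sum_{j=1}^N\left(\int_{I_j^\theta}\frac{\beta_j^\theta(\mu)^{-2}}{\mu-\lambda+\kappa^2}\;\!\d\mu\right)\v\;\!\P_j^\theta\v .
$$
The substitution $\mu=2\cos(\omega)+\lambda_j^\theta$ reduces each scalar factor to the elementary integral $\int_0^\pi\big(2\cos(\omega)+\lambda_j^\theta-\lambda+\kappa^2\big)^{-1}\d\omega$, whose closed form is proportional to $\big((\lambda_j^\theta-\lambda+\kappa^2)^2-4\big)^{-1/2}$, the branch of the square root being pinned down by the $\C_+$-boundary values of Lemma \ref{lemma_sandwich}. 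Expanding this closed form around $\kappa=0$ gives three cases according to the position of $\lambda$ relative to $\overline{I_j^\theta}:$ if $\lambda$ is interior to $I_j^\theta$ or exterior to $\overline{I_j^\theta}$, then $(\lambda_j^\theta-\lambda)^2-4\ne0$ and the factor is analytic in $\kappa^2$ near $0$ with nonzero value there; if $\lambda$ is an endpoint of $I_j^\theta$ (that is $\lambda=\lambda_j^\theta\pm2\in\T^\theta$), then $\big((\lambda_j^\theta-\lambda+\kappa^2)^2-4\big)^{1/2}=\kappa\;\!g_j(\kappa^2)$ with $g_j$ analytic and $g_j(0)\ne0$, so the factor equals $\kappa^{-1}$ times a function analytic in $\kappa^2$. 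The decisive point is that, because each band of $H_0^\theta$ is effectively one-dimensional (the linearisation of $\cos$), only integer powers of $\kappa$ and the single negative power $\kappa^{-1}$ appear — in particular no logarithms. Collecting the contributions yields, for $\kappa\in\widetilde O(\varepsilon)$ and to any desired order,
$$
\u+GR_0^\theta(\lambda-\kappa^2)G^*=\kappa^{-1}B_{-1}+B_0+\kappa\;\!B_1+\kappa^2\;\!B_2+\cdots ,
$$
an expansion with fixed matrix coefficients, where $B_{-1}$ is a linear combination of the operators $\v\;\!\P_j^\theta\v$ over the (possibly coincident) bands having $\lambda$ as an endpoint — so $B_{-1}=0$ exactly when $\lambda\notin\T^\theta$. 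Some care is needed here precisely when two band edges coincide, which by Remark \ref{rem_A_theta} occurs only for $\theta\in\{0,\pi,2\pi\}$.

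\emph{Step 2: iterative inversion.} The expansion of Step 1 is now fed into the inversion lemma of \cite{RT16,RT17} (a refinement of the Jensen--Nenciu scheme \cite{JN01}), applied recursively. After extracting the factor $\kappa^{-1}$ one must invert $B_{-1}+\kappa B_0+\kappa^2 B_1+\cdots$; if its leading coefficient is invertible on $\C^N$ the inverse is immediate by Neumann series, and otherwise a Feshbach/Schur-complement step reduces the problem to inverting a new $\B(\cdot)$-valued function on a strictly smaller subspace — the kernel of the current leading coefficient — which again carries an admissible expansion. Iterating produces the decreasing chain of projections $S_0\supseteq S_1\supseteq S_2$ announced in the introduction; since $\dim\C^N=N<\infty$ the procedure terminates after finitely many steps, and it delivers precisely the assertions of the proposition: the continuous extension $\widetilde O(\varepsilon)\ni\kappa\mapsto\M^\theta(\lambda,\kappa)\in\B(\C^N)$, its invertibility, and its asymptotic expansion, whose explicit form is recorded in \eqref{eq_sol_1}--\eqref{eq_sol_2}. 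Continuity of $\kappa\mapsto\M^\theta(\lambda,\kappa)$ on $\widetilde O(\varepsilon)$ is then automatic, since the coefficients of the expansion are fixed matrices and each power $\kappa^{-k}$ is continuous on $\widetilde O(\varepsilon)$ because $0\notin\widetilde O(\varepsilon)$.

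\emph{Main obstacle.} No individual step is deep; the difficulty is in the bookkeeping. The first delicate point is carrying out Step 1 uniformly and correctly identifying which operator $\v\;\!\P_j^\theta\v$ multiplies which power of $\kappa$, especially at the coincident band edges for $\theta\in\{0,\pi,2\pi\}$. The second is checking, at \emph{every} level of the recursion in Step 2, that the hypotheses of the inversion lemma still hold — i.e.\ that extracting $\kappa^{-1}$ and taking successive Schur complements preserves membership in the class of operator-valued functions to which the lemma applies — and doing this to an order sufficient for the later applications (the scattering matrix and the wave operators). This is what the full proof has to control.
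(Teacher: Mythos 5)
Your outline follows the same route as the paper's proof: the explicit formula for $GR_0^\theta(\lambda-\kappa^2)G^*$ coming from Lemma \ref{lemma_sandwich}, extraction of the $\kappa^{-1}$ singularity when $\lambda\in\T^\theta$, and the iterative inversion scheme of Proposition \ref{prop_inverse}. Two points, however, are genuine gaps rather than bookkeeping. First, your termination argument is invalid as stated. The chain $S_0\supseteq S_1\supseteq S_2\supseteq\cdots$ produced by the recursion is not automatically strictly decreasing: if some leading coefficient $I_m(0)$ vanishes identically on $S_{m-1}\C^N$, then $S_m=S_{m-1}$, and finite-dimensionality alone does not prevent the scheme from iterating forever. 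Moreover, the proposition asserts the \emph{precise} form \eqref{eq_sol_1}, namely that the singularity is at worst $\kappa^{-2}$ and the recursion stops after three levels; this is not a consequence of $\dim\C^N<\infty$ but is obtained, as in \cite[Prop.~3.3]{RT17}, from the a priori bound $\|R^\theta(\lambda-\kappa^2)\|\le|\im(\kappa^2)|^{-1}$ combined with \eqref{eq_resolv_2}, which caps the order of the pole.

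Second, you correctly flag as the ``main obstacle'' the verification that the inversion lemma applies at every level, but you do not supply it, and it is not routine. Proposition \ref{prop_inverse} requires a projection $S$ such that $A_0+S$ is invertible and $S(A_0+S)^{-1}S=S$; for the \emph{orthogonal} projection onto $\ker(I_m(0))$ these conditions hold only because each $I_m(0)$ has the form $Y+iZ^*Z$ with $Y,Z$ hermitian, so that \cite[Cor.~2.8]{RT16} identifies the orthogonal projection with the Riesz projection at $0$. This is immediate for $I_0(0)$ (see \eqref{eq_I_0_0}) and for $I_1(0)=S_0M_1(0)S_0$, but for $I_2(0)$ it requires the identities $ZS_1=0=S_1Z^*$ (via \cite[Cor.~2.5]{RT16}) together with the computation \eqref{eq_AB}, which shows that $\im\big(I_2(0)\big)\ge0$. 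Without these structural facts the recursion cannot be run, so this verification is an essential part of the proof rather than an afterthought. By contrast, your uniform treatment of the case $\lambda\in\sigma_{\rm p}(H^\theta)\setminus\T^\theta$ (where $B_{-1}=0$, one expands in $\kappa^2$, and a single application of the lemma suffices) is consistent with the paper's separate, shorter branch \eqref{eq_sol_2}.
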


The proof of Proposition \ref{Prop_asymp} relies on an inversion formula which we
reproduce here for convenience; an earlier version of it is also available in
\cite[Prop.~1]{JN01}.

\begin{Proposition}[Proposition 2.1 of \cite{RT16}]\label{prop_inverse}
Let $O\subset\C$ be a subset with $0$ as an accumulation point, and let $\H$ be a
Hilbert space. For each $z\in O$, let $A(z)\in\B(\H)$ satisfy
$$
A(z)=A_0+zA_1(z),
$$
with $A_0\in\B(\H)$ and $\|A_1(z)\|_{\B(\H)}$ uniformly bounded as $z\to0$. Let also
$S\in\B(\H)$ be a projection such that (i) $A_0+S$ is invertible with bounded inverse,
(ii) $S(A_0+S)^{-1}S=S$. Then, for $|z|>0$ small enough the operator $B(z):S\H\to S\H$
defined by
$$
B(z)
:=\tfrac1z\Big(S-S\big(A(z)+S\big)^{-1}S\Big)
\equiv S(A_0+S)^{-1}\Bigg(\sum_{j\ge0}(-z)^j\big(A_1(z)(A_0+S)^{-1}\big)^{j+1}\Bigg)S
$$
is uniformly bounded as $z\to0$. Also, $A(z)$ is invertible in $\H$ with bounded
inverse if and only if $B(z)$ is invertible in $S\H$ with bounded inverse, and in this
case one has
$$
A(z)^{-1}
=\big(A(z)+S\big)^{-1}+\tfrac1z\big(A(z)+S\big)^{-1}SB(z)^{-1}S\big(A(z)+S\big)^{-1}.
$$
\end{Proposition}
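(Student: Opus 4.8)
The plan is to read Proposition~\ref{prop_inverse} as a Grushin / Feshbach--Schur reduction onto the range $S\H$ of the projection, carried out by hand with a Neumann series; nothing beyond bounded operators and geometric series should be needed. The first step is to control the ``regularised'' operator $A(z)+S$. Writing $T_0:=A_0+S$ and using $A(z)+S=T_0+zA_1(z)=T_0\big(1+zT_0^{-1}A_1(z)\big)$, hypothesis~(i) together with the uniform bound on $\|A_1(z)\|_{\B(\H)}$ as $z\to0$ gives $\|zT_0^{-1}A_1(z)\|_{\B(\H)}<1$ for $|z|$ small enough, so $A(z)+S$ is invertible with $(A(z)+S)^{-1}=\sum_{j\ge0}(-z)^j(T_0^{-1}A_1(z))^jT_0^{-1}$, the series being norm-convergent with a bound uniform in $z$ near $0$.

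Next I would insert this expansion into $B(z)=\tfrac1z\big(S-S(A(z)+S)^{-1}S\big)$. The crucial point is that the $j=0$ term of the series contributes $ST_0^{-1}S$, which equals $S$ by hypothesis~(ii), and therefore cancels the leading $S$ in $S-S(A(z)+S)^{-1}S$; the factor $\tfrac1z$ is then absorbed by the remaining powers of $z$, and after shifting the summation index and applying the elementary identity $(XY)^{k+1}X=X(YX)^{k+1}$ with $X=T_0^{-1}$, $Y=A_1(z)$, one recovers precisely $B(z)=ST_0^{-1}\big(\sum_{k\ge0}(-z)^k(A_1(z)T_0^{-1})^{k+1}\big)S$, i.e.\ the claimed series identity. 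Convergence of this series (the same geometric bound as in the first step) gives the uniform boundedness of $B(z):S\H\to S\H$ as $z\to0$.

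For the equivalence of invertibility and the inversion formula, I would factor $A(z)=(A(z)+S)-S=T\big(1-T^{-1}S\big)$ with $T:=A(z)+S$, so that $A(z)$ is invertible with bounded inverse iff $1-T^{-1}S$ is. To analyse $1-T^{-1}S$, solve $(1-T^{-1}S)\phi=\psi$: applying $S$ and using the identity $ST^{-1}\eta=\eta-zB(z)\eta$ valid for $\eta\in S\H$ (which is just the defining formula of $B(z)$ rewritten) reduces the equation to $zB(z)(S\phi)=S\psi$ on $S\H$. This simultaneously shows that $1-T^{-1}S$ is invertible iff $B(z)$ is invertible on $S\H$, and — substituting $S\phi=\tfrac1zB(z)^{-1}S\psi$ back into $\phi=\psi+T^{-1}S\phi$ — produces $(1-T^{-1}S)^{-1}=1+\tfrac1zT^{-1}SB(z)^{-1}S$, whence $A(z)^{-1}=(A(z)+S)^{-1}+\tfrac1z(A(z)+S)^{-1}SB(z)^{-1}S(A(z)+S)^{-1}$ after multiplying by $T^{-1}$ on the right. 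For the reverse implication, if $1-T^{-1}S$ is invertible, the same projected equation shows directly that $B(z)$ has trivial kernel and that $\eta\mapsto zS(1-T^{-1}S)^{-1}\eta$ is a bounded two-sided inverse of $B(z)$; one then only has to verify by substitution that the candidate inverses really are inverses, which is a one-line check.

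I expect the whole argument to be essentially routine, with no analytic difficulty: once the Neumann series of the first step converges, everything is algebra. The only places demanding genuine care are the re-indexing and the identity $(XY)^{k+1}X=X(YX)^{k+1}$ in the second step, and — in the third step — keeping careful track of which operators act on $S\H$ versus on $\H$ and not losing a factor of $z$. A more conceptual route would be to pass to $\H=S\H\oplus(1-S)\H$ and identify $zB(z)$ with the restriction to $S\H$ of a Schur complement of $A(z)+S$, but the hands-on computation above keeps the role of hypothesis~(ii), and of the singular $\tfrac1z$, the most transparent.
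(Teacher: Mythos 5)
Your argument is correct. Note that the paper itself states this proposition without proof, importing it verbatim from \cite{RT16} (Proposition 2.1 there, with an earlier version in \cite{JN01}); your Neumann-series expansion of $(A(z)+S)^{-1}$, the cancellation of the $j=0$ term via hypothesis (ii), and the Feshbach--Schur-type reduction of $A(z)=(A(z)+S)(1-(A(z)+S)^{-1}S)$ to the projected equation $zB(z)S\phi=S\psi$ is essentially the standard proof given in that reference.
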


Even if the proof of Proposition \ref{Prop_asymp} is a bit lengthy, we prefer to
present it here, in the core of the text, since several notations and auxiliary
operators are introduced in it.

\begin{proof}[Proof of Proposition \ref{Prop_asymp}]
For each $\lambda\in\R$, $\varepsilon>0$ and $\kappa\in O(\varepsilon)$, one has
$\im(\lambda-\kappa^2)>0$. Thus, the operator
$(\u+GR_0^\theta(\lambda-\kappa^2)G^*)^{-1}$ belongs to $\B(\C^N)$ and is continuous
in $\kappa\in O(\varepsilon)$ due to \eqref{eq_resolv_2}. For the other claims, we
distinguish the cases $\lambda\in\T^\theta$ and
$\lambda\in\sigma_{\rm p}(H^\theta)\setminus\T^\theta$, starting with the case
$\lambda\in\T^\theta$. All the operators defined below depend on the choice of
$\lambda$, but for simplicity we do not always write these dependencies.

(i) Assume that $\lambda\in\T^\theta$ and take $\kappa\in O(\varepsilon)$ with
$\varepsilon>0$ small enough. Then, it follows from Lemma \ref{lemma_sandwich} that
$$
GR_0^\theta(\lambda-\kappa^2)G^*
=-\sum_{j\in\{1,\dots,N\}}\tfrac{\v\;\!\P_j^\theta\v}
{\sqrt{(\lambda-\kappa^2-\lambda_j^\theta)^2-4}}
=-\sum_{j\in\{1,\dots,N\}}\tfrac{\v\;\!\P_j^\theta\v}
{\sqrt{(\lambda-\kappa^2-\lambda_j^\theta+2)
(\lambda-\kappa^2-\lambda_j^\theta-2)}}.
$$
Now, let
$
\N_\lambda:=\big\{j\in\{1,\dots,N\}\mid|\lambda-\lambda_j^\theta|=2\big\}
$
and set
$$
\vartheta_j(\kappa)
:=\tfrac1\kappa\sqrt{(\lambda-\kappa^2-\lambda_j^\theta+2)
(\lambda-\kappa^2-\lambda_j^\theta-2)},\quad j\in\N_\lambda.
$$
Then, we have
\begin{equation}\label{eq_vartheta}
\vartheta_j(\kappa)
=\begin{cases}
\sqrt{4+\kappa^2} &\hbox{if $\lambda=\lambda_j^\theta-2$}\\
i\;\!\sqrt{4-\kappa^2} &\hbox{if $\lambda=\lambda_j^\theta+2$}
\end{cases}
\quad\hbox{and}\quad
\lim_{\kappa\to0}\vartheta_j(\kappa)
=\begin{cases}
-2 &\hbox{if $\lambda=\lambda_j^\theta-2$}\\
2i &\hbox{if $\lambda=\lambda_j^\theta+2$.}
\end{cases}
\end{equation}
With these notations, we obtain
$$
\big(\u+GR_0^\theta(\lambda-\kappa^2)G^*\big)^{-1}
=\kappa\;\!\Bigg\{-\sum_{j\in\N_\lambda}\tfrac{\v\;\!\P_j^\theta\v}{\vartheta_j(\kappa)}
+\kappa\Bigg(\u-\sum_{j\notin\N_\lambda}\tfrac{\v\;\!\P_j^\theta\v}
{\sqrt{(\lambda-\kappa^2-\lambda_j^\theta)^2-4}}\Bigg)\Bigg\}^{-1}.
$$
Moreover, as shown in Lemma \ref{lemma_sandwich}, the function
$$
O(\varepsilon)\ni\kappa\mapsto
\u-\sum_{j\notin\N_\lambda}\tfrac{\v\;\!\P_j^\theta\v}
{\sqrt{(\lambda-\kappa^2-\lambda_j^\theta)^2-4}}\in\B(\C^N)
$$
extends continuously to a function
$
\widetilde O(\varepsilon)\ni\kappa\mapsto M_1(\kappa)\in\B(\C^N)
$
with $\|M_1(\kappa)\|_{\B(\C^N)}$ uniformly bounded as $\kappa\to0$. Therefore, one
has for $\kappa\in O(\varepsilon)$
\begin{equation}\label{eq_I_0}
\big(\u+GR_0^\theta(\lambda-\kappa^2)G^*\big)^{-1}
=\kappa\;\!I_0(\kappa)^{-1}
\quad\hbox{with}\quad
I_0(\kappa):=-\sum_{j\in\N_\lambda}\tfrac{\v\;\!\P_j^\theta\v}{\vartheta_j(\kappa)}
+\kappa\;\!M_1(\kappa).
\end{equation}
Now, due to \eqref{eq_vartheta}, one has
\begin{equation}\label{eq_I_0_0}
I_0(0)
:=\lim_{\kappa\to0}I_0(\kappa)
=\tfrac12\sum_{\{j\mid\lambda=\lambda_j^\theta-2\}}\v\;\!\P_j^\theta\v
+\tfrac i2\sum_{\{j\mid\lambda=\lambda_j^\theta+2\}}\v\;\!\P_j^\theta\v,
\end{equation}
and since $I_0(0)$ has a positive imaginary part one infers from \cite[Cor.~2.8]{RT16}
that the orthogonal projection $S_0$ on $\ker\big(I_0(0)\big)$ is equal to the Riesz
projection of $I_0(0)$ associated with the value $0\in\sigma\big(I_0(0)\big)$, and
that the conditions (i)-(ii) of Proposition \ref{prop_inverse} hold. Applying this
proposition to $I_0(\kappa)$, one infers that for $\kappa\in\widetilde O(\varepsilon)$
with $\varepsilon>0$ small enough the operator $I_1(\kappa):S_0\C^N\to S_0\C^N$
defined by
\begin{equation}\label{eq_I_1}
I_1(\kappa)
:=\sum_{j\ge0}(-\kappa)^jS_0\left(M_1(\kappa)\big(I_0(0)+S_0\big)^{-1}\right)^{j+1}S_0
\end{equation}
is uniformly bounded as $\kappa\to0$. Furthermore, $I_1(\kappa)$ is invertible in
$S_0\C^N$ with bounded inverse satisfying
$$
I_0(\kappa)^{-1}
=\big(I_0(\kappa)+S_0\big)^{-1}+\tfrac1\kappa\big(I_0(\kappa)+S_0\big)^{-1}
S_0I_1(\kappa)^{-1}S_0\big(I_0(\kappa)+S_0\big)^{-1}.
$$
It follows that for $\kappa\in O(\varepsilon)$ with $\varepsilon>0$ small enough, one
has
\begin{equation}\label{eq_18}
\big(\u+GR_0^\theta(\lambda-\kappa^2)G^*\big)^{-1}
=\kappa\;\!\big(I_0(\kappa)+S_0\big)^{-1}
+\big(I_0(\kappa)+S_0\big)^{-1}S_0I_1(\kappa)^{-1}S_0\big(I_0(\kappa)+S_0\big)^{-1},
\end{equation}
with the first term vanishing as $\kappa\to0$. To describe the second term as
$\kappa\to0$ we note that the relation $\big(I_0(0)+S_0\big)^{-1}S_0=S_0$ and the
definition \eqref{eq_I_1} imply for $\kappa\in\widetilde O(\varepsilon)$ with
$\varepsilon>0$ small enough that
\begin{equation}\label{form_I_1}
I_1(\kappa)=S_0M_1(0)S_0+\kappa\;\!M_2(\kappa),
\end{equation}
with $M_1(0):=\lim_{\kappa\to0} M_1(\kappa)$ and
\begin{align*}
M_2(\kappa)
&:=-\tfrac1\kappa\;\!S_0\sum_{j\notin\N_\lambda}
\bigg(\tfrac1{\sqrt{(\lambda-\kappa^2-\lambda_j^\theta)^2-4}}
-\tfrac1{\sqrt{(\lambda-\lambda_j^\theta+i\;\!0)^2-4}}\bigg)
\v\;\!\P_j^\theta\v\;\!S_0\\
&\quad-\sum_{j\ge0}(-\kappa)^jS_0
\left(M_1(\kappa)\big(I_0(0)+S_0\big)^{-1}\right)^{j+2}S_0.
\end{align*}
Also, we note that the equality
\begin{equation}\label{eq_23}
\tfrac1{\sqrt{(\lambda-\kappa^2-\lambda_j^\theta)^2-4}}
=\tfrac1{\sqrt{\big((\lambda-\lambda_j^\theta)^2-4\big)
\left(1-\frac{2\kappa^2(\lambda-\lambda_j^\theta)-\kappa^4}
{(\lambda-\lambda_j^\theta)^2-4}\right)}},\quad j\notin\N_\lambda,
\end{equation}
implies that
$$
\lim_{\kappa\to0}\tfrac1{\sqrt{(\lambda-\kappa^2-\lambda_j^\theta)^2-4}}
=\tfrac1{\sqrt{(\lambda-\lambda_j^\theta+i\;\!0)^2-4}}
=\begin{cases}
-\beta_j^\theta(\lambda)^{-2} &\hbox{if $\lambda<\lambda_j^\theta-2$}\\
-i\beta_j^\theta(\lambda)^{-2} &\hbox{if $\lambda\in I^\theta_j$}\\
\beta_j^\theta(\lambda)^{-2} &\hbox{if $\lambda>\lambda_j^\theta+2$,}
\end{cases}
$$
and that $\|M_2(\kappa)\|_{\B(\C^N)}$ is uniformly bounded as $\kappa\to0$. Now, we
have
$$
M_1(0)
=\u+\sum_{\{j\mid\lambda<\lambda_j^\theta-2\}}\tfrac{\v\;\!\P_j^\theta\v}
{\beta_j^\theta(\lambda)^2}
+i\sum_{\{j\mid\lambda\in I^\theta_j\}}\tfrac{\v\;\!\P_j^\theta\v}
{\beta_j^\theta(\lambda)^2}
-\sum_{\{j\mid\lambda>\lambda_j^\theta+2\}}\tfrac{\v\;\!\P_j^\theta\v}
{\beta_j^\theta(\lambda)^2},
$$
with the sum over $\{j\mid\lambda>\lambda_j^\theta+2\}$ vanishing if $\lambda$
is a left threshold (i.e. $\lambda=\lambda_k^\theta-2$ for some $k$) and the sum over
$\{j\mid\lambda<\lambda_j^\theta-2\}$ vanishing if $\lambda$ is a right
threshold (i.e. $\lambda=\lambda_k^\theta+2$ for some $k$). Thus,
$I_1(0)=S_0M_1(0)S_0$ has a positive imaginary part. Therefore, the result
\cite[Cor.~2.8]{RT16} applies to the orthogonal projection $S_1$ on
$\ker\big(I_1(0)\big)$, and Proposition \ref{prop_inverse} can be applied to
$I_1(\kappa)$ as it was done for $I_0(\kappa)$. So, for
$\kappa\in\widetilde O(\varepsilon)$ with $\varepsilon>0$ small enough, the matrix
$I_2(\kappa):S_1\C^N\to S_1\C^N$ defined by
$$
I_2(\kappa)
:=\sum_{j\ge0}(-\kappa)^jS_1\left(M_2(\kappa)\big(I_1(0)+S_1\big)^{-1}\right)^{j+1}S_1
$$
is uniformly bounded as $\kappa\to0$. Furthermore, $I_2(\kappa)$ is invertible in
$S_1\C^N$ with bounded inverse satisfying
$$
I_1(\kappa)^{-1}
=\big(I_1(\kappa)+S_1\big)^{-1}+\tfrac1\kappa\;\!\big(I_1(\kappa)+S_1\big)^{-1}
S_1I_2(\kappa)^{-1}S_1\big(I_1(\kappa)+S_1\big)^{-1}.
$$
This expression for $I_1(\kappa)^{-1}$ can now be inserted in \eqref{eq_18} to get for
$\kappa\in O(\varepsilon)$ with $\varepsilon>0$ small enough
\begin{align}
&\big(\u+GR_0^\theta(\lambda-\kappa^2)G^*\big)^{-1}\nonumber\\
&=\kappa\;\!\big(I_0(\kappa)+S_0\big)^{-1}
+\big(I_0(\kappa)+S_0\big)^{-1}S_0\big(I_1(\kappa)+S_1\big)^{-1}S_0
\big(I_0(\kappa)+S_0\big)^{-1}\nonumber\\
&\quad+\tfrac1\kappa\;\!\big(I_0(\kappa)+S_0\big)^{-1}S_0
\big(I_1(\kappa)+S_1\big)^{-1}S_1I_2(\kappa)^{-1}S_1\big(I_1(\kappa)+S_1\big)^{-1}S_0
\big(I_0(\kappa)+S_0\big)^{-1},\label{eq_F_second}
\end{align}
with the first two terms bounded as $\kappa\to0$.

We now concentrate on the last term and check once more that the assumptions of
Proposition \ref{prop_inverse} are satisfied. For this, we recall that
$\big(I_1(0)+S_1\big)^{-1}S_1=S_1$, and observe that for
$\kappa\in\widetilde O(\varepsilon)$ with $\varepsilon>0$ small enough
\begin{equation}\label{eq_I_2}
I_2(\kappa)=S_1M_2(0)S_1+\kappa\;\!M_3(\kappa),
\end{equation}
with
$$
M_2(0)=-S_0M_1(0)\big(I_0(0)+S_0\big)^{-1}M_1(0)S_0
\quad\hbox{and}\quad
M_3(\kappa)\in\O(1).
$$
The inclusion $M_3(\kappa)\in\O(1)$ follows from simple computations taking the
expansion \eqref{eq_23} into account. As observed above, one has $M_1(0)=Y+iZ^*Z$,
with $Y,Z$ a hermitian matrices. Therefore,
$$
I_1(0)=S_0M_1(0)S_0=S_0YS_0+i\;\!(ZS_0)^*(ZS_0),
$$
and one infers from \cite[Cor.~2.5]{RT16} that $Z S_0S_1=0=S_1S_0 Z^*$. Since
$S_1S_0=S_1=S_0S_1$, it follows that $ZS_1=0=S_1Z^*$. Therefore, we have
$$
I_2(0)
=-S_1M_1(0)\big(I_0(0)+S_0\big)^{-1}M_1(0)S_1
=-S_1Y\big(I_0(0)+S_0\big)^{-1}YS_1,
$$
and since $I_0(0)+S_0=A+iB^*B$ with $A,B$ hermitian matrices (see \eqref{eq_I_0_0}) we
have
\begin{align}
\im\big(I_0(0)+S_0\big)^{-1}
&=\tfrac1{2i}\big\{(A+iB^*B)^{-1}-\big((A+iB^*B)^{-1}\big)^*\big\}\nonumber\\
&=\tfrac1{2i}(A+iB^*B)^{-1}(-2i)B^*B(A-iB^*B)^{-1}\nonumber\\
&=-\big(B(A-iB^*B)^{-1}\big)^*\big(B(A-iB^*B)^{-1}\big),\label{eq_AB}
\end{align}
from which we infer that
$
\im\big(I_2(0)\big)
=\im\big(-S_1Y\big(I_0(0)+S_0\big)^{-1}YS_1\big)\ge0
$.
So, the operator $I_2(0)$ satisfies the conditions of \cite[Cor.~2.8]{RT16}, and we
can once again apply Proposition \ref{prop_inverse} to $I_2(\kappa)$ with $S_2$ the
orthogonal projection on $\ker\big(I_2(0)\big)$. Thus, for
$\kappa\in\widetilde O(\varepsilon)$ with $\varepsilon>0$ small enough, the operator
$I_3(\kappa):S_2\C^N\to S_2\C^N$ defined by
$$
I_3(\kappa)
:=\sum_{j\ge0}(-\kappa)^jS_2\Big(M_3(\kappa)\big(I_2(0)+S_2\big)^{-1}\Big)^{j+1}S_2
$$
is uniformly bounded as $\kappa\to0$. Furthermore, $I_3(\kappa)$ is invertible in
$S_2\C^N$ with bounded inverse satisfying
$$
I_2(\kappa)^{-1}
=\big(I_2(\kappa)+S_2\big)^{-1}
+\tfrac1\kappa\big(I_2(\kappa)+S_2\big)^{-1}S_2I_3(\kappa)^{-1}S_2
\big(I_2(\kappa)+S_2\big)^{-1}.
$$
This expression for $I_2(\kappa)^{-1}$ can now be inserted in \eqref{eq_F_second} to
get for $\kappa\in O(\varepsilon)$ with $\varepsilon>0$ small enough
\begin{align}
&\big(\u+GR_0^\theta(\lambda-\kappa^2)G^*\big)^{-1}\nonumber\\
&=\kappa\big(I_0(\kappa)+S_0\big)^{-1}
+\big(I_0(\kappa)+S_0\big)^{-1}S_0\big(I_1(\kappa)+S_1\big)^{-1}S_0
\big(I_0(\kappa)+S_0\big)^{-1}\nonumber\\
&\quad+\tfrac1\kappa\big(I_0(\kappa)+S_0\big)^{-1}S_0
\big(I_1(\kappa)+S_1\big)^{-1}S_1\big(I_2(\kappa)+S_2\big)^{-1}S_1
\big(I_1(\kappa)+S_1\big)^{-1}S_0\big(I_0(\kappa)+S_0\big)^{-1}\nonumber\\
&\quad+\tfrac1{\kappa^2}\big(I_0(\kappa)+S_0\big)^{-1}S_0
\big(I_1(\kappa)+S_1\big)^{-1}S_1\big(I_2(\kappa)+ S_2\big)^{-1}S_2 I_3(\kappa)^{-1}
S_2\big(I_2(\kappa)+S_2\big)^{-1}S_1\nonumber\\
&\qquad\cdot\big(I_1(\kappa)+S_1\big)^{-1}S_0\big(I_0(\kappa)+S_0\big)^{-1}.
\label{eq_sol_1}
\end{align}
Fortunately, the iterative procedure stops here, as can be shown as in the proof of
\cite[Prop.~3.3]{RT17}. In consequence, the function
$$
O(\varepsilon)\ni\kappa\mapsto
\big(\u+GR^\theta_0(\lambda-\kappa^2)G^*\big)^{-1}\in\B(\C^N)
$$
extends continuously to a function
$
\widetilde O(\varepsilon)\ni\kappa\mapsto\M^\theta(\lambda,\kappa)\in\B(\C^N)
$,
with $\M^\theta(\lambda,\kappa)$ given by the r.h.s. of \eqref{eq_sol_1}.

(ii) Assume that $\lambda\in\sigma_{\rm p}(H^\theta)\setminus\T^\theta$, take
$\varepsilon>0$, let $\kappa\in\widetilde O(\varepsilon)$, and set
$J_0(\kappa):=T_0+\kappa^2T_1(\kappa)$ with
$$
T_0
:=\u+\sum_{\{j\mid\lambda<\lambda_j^\theta-2\}}\tfrac{\v\;\!\P_j^\theta\v}
{\beta_j^\theta(\lambda)^2}
+i\sum_{\{j\mid\lambda\in I^\theta_j\}}\tfrac{\v\;\!\P_j^\theta\v}
{\beta_j^\theta(\lambda)^2}
-\sum_{\{j\mid\lambda>\lambda_j^\theta+2\}}\tfrac{\v\;\!\P_j^\theta\v}
{\beta_j^\theta(\lambda)^2}
$$
and
$$
T_1(\kappa)
:=-\tfrac1{\kappa^2}\sum_{j\in\{1,\dots,N\}}
\bigg(\tfrac1{\sqrt{(\lambda-\kappa^2-\lambda_j^\theta)^2-4}}
-\tfrac1{\sqrt{(\lambda-\lambda_j^\theta+i\;\!0)^2-4}}\bigg)
\v\;\!\P^\theta_j\v.
$$
Then, one infers from \eqref{eq_23} that $\|T_1(\kappa)\|_{\B(\C^N)}$ is uniformly
bounded as $\kappa\to0$. Also, the assumptions of \cite[Cor.~2.8]{RT16} hold for the
operator $T_0$, and thus the orthogonal projection $S$ on $\ker(T_0)$ is equal to the
Riesz projection of $T_0$ associated with the value $0\in\sigma(T_0)$. It thus follows
from Proposition \ref{prop_inverse} that for $\kappa\in\widetilde O(\varepsilon)$ with
$\varepsilon>0$ small enough, the operator $J_1(\kappa):S\C^N\to S\C^N$ defined by
$$
J_1(\kappa)
:=\sum_{j\ge0}(-\kappa^2)^jS\;\!\big(T_1(\kappa)(T_0+S)^{-1}\big)^{j+1}S
$$
is uniformly bounded as $\kappa\to0$. Furthermore, $J_1(\kappa)$ is invertible in
$S\C^N$ with bounded inverse satisfying
$$
J_0(\kappa)^{-1}
=\big(J_0(\kappa)+S\big)^{-1}
+\tfrac1{\kappa^2}\big(J_0(\kappa)+S\big)^{-1}SJ_1(\kappa)^{-1}S
\big(J_0(\kappa)+S\big)^{-1}.
$$
It follows that for $\kappa\in O(\varepsilon)$ with $\varepsilon>0$ small enough one
has
\begin{equation}\label{eq_sol_2}
\big(\u+GR^\theta_0(\lambda-\kappa^2)G^*\big)^{-1}
=\big(J_0(\kappa)+S\big)^{-1}
+\tfrac1{\kappa^2}\big(J_0(\kappa)+S\big)^{-1}SJ_1(\kappa)^{-1}S
\big(J_0(\kappa)+S\big)^{-1}.
\end{equation}
The iterative procedure stops here, for the same reason as the one presented in the
proof of \cite[Prop.~3.3]{RT17} once we observe that
$$
J_1(\kappa)=ST_1(0)S+\kappa\;\!T_2(\kappa)
\quad\hbox{with}\quad T_2(\kappa)\in\O(1).
$$
Therefore, \eqref{eq_sol_2} implies that the function
$$
O(\varepsilon)\ni\kappa\mapsto
\big(\u+GR^\theta_0(\lambda-\kappa^2)G^*\big)^{-1}\in\B(\C^N)
$$
extends continuously to a function
$
\widetilde O(\varepsilon)\ni\kappa\mapsto
\M^\theta(\lambda,\kappa)\in\B(\C^N)
$,
with $\M^\theta(\lambda,\kappa)$ given by
\begin{equation}\label{eq_expansion_2}
\M^\theta(\lambda,\kappa)
=\big(J_0(\kappa)+S\big)^{-1}
+\tfrac1{\kappa^2}\big(J_0(\kappa)+S\big)^{-1}SJ_1(\kappa)^{-1}S
\big(J_0(\kappa)+S\big)^{-1}.\qedhere
\end{equation}
\end{proof}

\begin{Remark}\label{remark_no_accumu}
(a) A direct inspection shows that point (ii) of the proof of Proposition
\ref{Prop_asymp} applies in fact to all
$\lambda\in\sigma(H^\theta)\setminus\T^\theta$. So, the expansion \eqref{eq_sol_2}
holds for all $\lambda\in\sigma(H^\theta)\setminus\T^\theta$. Now, if
$\lambda\in\sigma(H^\theta)\setminus\big(\T^\theta\cup\sigma_{\rm p}(H^\theta)\big)$,
then the projection $S$ in point (ii) vanishes due to the definition of $T_0$ and
\eqref{eq_ker_T_0}. Therefore, for
$\lambda\in\sigma(H^\theta)\setminus\big(\T^\theta\cup\sigma_{\rm p}(H^\theta)\big)$,
the expansion \eqref{eq_sol_2} reduces to the equation
\begin{equation}\label{eq_case_3}
\big(\u+GR^\theta_0(\lambda-\kappa^2)G^*\big)^{-1}
=\big(T_0+\kappa^2T_1(\kappa)\big)^{-1}.
\end{equation}

(b) The asymptotic expansions of Proposition \ref{Prop_asymp} imply that the point
spectrum $\sigma_{\rm p}(H^\theta)$ is finite. Indeed, the eigenvalues of $H^\theta$
cannot accumulate at a point which is a threshold due to the expansion
\eqref{eq_sol_1} and the relation \eqref{eq_resolv_2}. They cannot accumulate at a
point which is an eigenvalue of $H^\theta$ due to the expansion \eqref{eq_sol_2} and
the relation \eqref{eq_resolv_2}. And finally, they cannot accumulate at a point of
$\sigma(H^\theta)\setminus\big(\T^\theta\cup\sigma_{\rm p}(H^\theta)\big)$ due to the
equation \eqref{eq_case_3} and the relation \eqref{eq_resolv_2}. Since the operator
$H^\theta$ is bounded, this implies that $\sigma_{\rm p}(H^\theta)$ is finite.
\end{Remark}

We close the section with some auxiliary results that can be deduced from the
expansions of Proposition \ref{Prop_asymp}. The notations are borrowed from the proof
of Proposition \ref{Prop_asymp} (with the only change that we extend by $0$ operators
defined originally on subspaces of $\C^N$ to get operators defined on all of $\C^N$),
and the proofs are given in the Appendix.

\begin{Lemma}\label{lemme_com}
Take $2\ge\ell\ge m\ge0$ and $\kappa\in\widetilde O(\varepsilon)$ with $\varepsilon>0$
small enough. Then, one has in $\B(\C^N)$
$$
\big[S_\ell,\big(I_m(\kappa)+S_m\big)^{-1}\big]\in\O(\kappa).
$$
\end{Lemma}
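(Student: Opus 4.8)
The plan is to reduce the commutator estimate to two elementary inputs: that the projections $S_0,S_1,S_2$ commute, in the relevant range, both with one another and with the operators $I_m(0)$, and that $I_m(\kappa)$ differs from $I_m(0)$ only by a term of order $\kappa$. The device is the algebraic identity
$$
[P,A^{-1}]=A^{-1}[A,P]\;\!A^{-1},
$$
valid for any invertible $A\in\B(\C^N)$ and any $P\in\B(\C^N)$. Applying it with $P:=S_\ell$ and $A:=I_m(\kappa)+S_m$, which is invertible with uniformly bounded inverse as $\kappa\to0$ (see the proof of Proposition \ref{Prop_asymp}), it then suffices to prove $\big[I_m(\kappa)+S_m,S_\ell\big]\in\O(\kappa)$ for $2\ge\ell\ge m\ge0$.

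First I would record the commutation relations at $\kappa=0$. Since the ranges are nested, $\Ran(S_2)\subset\Ran(S_1)\subset\Ran(S_0)$, one has $S_\ell=S_mS_\ell=S_\ell S_m$ whenever $\ell\ge m$, hence $[S_m,S_\ell]=0$. Next, the proof of Proposition \ref{Prop_asymp} shows, via \cite[Cor.~2.8]{RT16} and the fact that $\im\big(I_m(0)\big)\ge0$, that for $m\in\{0,1,2\}$ the projection $S_m$ coincides with the Riesz projection of $I_m(0)$ at the eigenvalue $0$; in particular $I_m(0)S_m=S_mI_m(0)=0$. Combining this with $S_\ell=S_mS_\ell=S_\ell S_m$ gives $I_m(0)S_\ell=I_m(0)S_mS_\ell=0$ and $S_\ell I_m(0)=S_\ell S_mI_m(0)=0$, so $[I_m(0),S_\ell]=0$ for $2\ge\ell\ge m\ge0$.

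It then remains to control the $\kappa$-dependence. For $m=1$ and $m=2$ the expansions \eqref{form_I_1} and \eqref{eq_I_2} read $I_m(\kappa)=I_m(0)+\kappa\;\!M_{m+1}(\kappa)$ with $M_{m+1}(\kappa)\in\O(1)$, and for $m=0$ the estimate $I_0(\kappa)=I_0(0)+\O(\kappa)$ follows from \eqref{eq_I_0}, \eqref{eq_vartheta} and \eqref{eq_I_0_0}; in every case $I_m(\kappa)-I_m(0)\in\O(\kappa)$. Consequently
$$
\big[I_m(\kappa)+S_m,S_\ell\big]
=[S_m,S_\ell]+[I_m(0),S_\ell]+\big[I_m(\kappa)-I_m(0),S_\ell\big]
=\big[I_m(\kappa)-I_m(0),S_\ell\big]\in\O(\kappa),
$$
and feeding this into $[S_\ell,A^{-1}]=A^{-1}[A,S_\ell]\;\!A^{-1}$, together with the uniform boundedness of $(I_m(\kappa)+S_m)^{-1}$, gives $\big[S_\ell,(I_m(\kappa)+S_m)^{-1}\big]\in\O(\kappa)$.

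The one point requiring genuine care---and the natural place for the argument to break---is the two-sided annihilation $I_m(0)S_m=S_mI_m(0)=0$. The inclusion $\Ran(S_m)\subset\ker\big(I_m(0)\big)$ gives $I_m(0)S_m=0$ for free, but $S_mI_m(0)=0$ amounts to $\ker\big(I_m(0)\big)\subset\ker\big(I_m(0)^*\big)$, which fails for a generic matrix; here it rests on the dissipativity of $I_m(0)$ (equivalently, on $0$ being a semisimple eigenvalue) established in the proof of Proposition \ref{Prop_asymp}. Granting that, the remainder is routine bookkeeping with the nested projections.
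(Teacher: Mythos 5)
Your proof is correct and follows essentially the same route as the paper's: the identity $[S_\ell,A^{-1}]=A^{-1}[A,S_\ell]\;\!A^{-1}$ with $A=I_m(\kappa)+S_m$, the vanishing of $[S_m,S_\ell]$ and $[I_m(0),S_\ell]$ from the nested projections, and the expansion $I_m(\kappa)=I_m(0)+\O(\kappa)$. You are in fact slightly more explicit than the paper about why $S_mI_m(0)=0$ (via the Riesz-projection property of $S_m$ coming from \cite[Cor.~2.8]{RT16}), a point the paper leaves implicit.
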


Given $\lambda\in\T^\theta$, we recall that
$\N_\lambda=\big\{j\in\{1,\dots,N\}\mid|\lambda-\lambda_j^\theta|=2\big\}$.

\begin{Lemma}\label{lemma_relations}
Let $\lambda\in\T^\theta$.
\begin{enumerate}
\item[(a)] For each $j\in\N_\lambda$, one has
$\P_j^\theta\v\;\!S_0=0=S_0\v\;\!\P_j^\theta$.
\item[(b)] For each $j\in\{1,\dots,n\}$ such that $\lambda\in I^\theta_j$, one has
$\P_j^\theta\v S_1=0=S_1\v\;\!\P_j^\theta$.
\item[(c)] One has $\re\big(M_1(0)\big)S_2=0=S_2\re\big(M_1(0)\big)$.
\item[(d)] One has $M_1(0)S_2=0=S_2M_1(0)$.
\end{enumerate}
\end{Lemma}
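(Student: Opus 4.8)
The plan is to deduce all four identities from the sign structure of the operators $I_0(0)$, $I_1(0)$ and $I_2(0)$ introduced in the proof of Proposition \ref{Prop_asymp}, together with the nesting relations $S_1=S_0S_1=S_1S_0$ and $S_2=S_1S_2=S_2S_1$ and the identities $ZS_1=0=S_1Z^*$ established there. In each item I will prove only the left-hand equality and obtain the right-hand one either by adjunction — using that $\v$, $\P_j^\theta$, $S_\ell$ and $\re\big(M_1(0)\big)$ are self-adjoint — or by the symmetric version of the argument.

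For (a), recall from \eqref{eq_I_0_0} that $\re\big(I_0(0)\big)=\tfrac12\sum_{\{j\,\mid\,\lambda=\lambda_j^\theta-2\}}\v\P_j^\theta\v$ and $\im\big(I_0(0)\big)=\tfrac12\sum_{\{j\,\mid\,\lambda=\lambda_j^\theta+2\}}\v\P_j^\theta\v$, both of which are sums over $j\in\N_\lambda$ of the positive semidefinite operators $\v\P_j^\theta\v=(\P_j^\theta\v)^*(\P_j^\theta\v)$. For $\xi\in\Ran(S_0)=\ker\big(I_0(0)\big)$ one has $\langle I_0(0)\xi,\xi\rangle_{\C^N}=0$, so, separating its vanishing real and imaginary parts and using that each summand equals $\tfrac12\big\|\P_j^\theta\v\xi\big\|_{\C^N}^2$, we get $\P_j^\theta\v\xi=0$ for every $j\in\N_\lambda$, i.e.\ $\P_j^\theta\v S_0=0$. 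Item (b) is handled in the same way, with $I_1(0)=S_0M_1(0)S_0$ in place of $I_0(0)$: its imaginary part is $S_0\big(\sum_{\{j\,\mid\,\lambda\in I^\theta_j\}}\beta_j^\theta(\lambda)^{-2}\,\v\P_j^\theta\v\big)S_0$, again a sum of positive semidefinite operators, so for $\xi\in\Ran(S_1)\subseteq\Ran(S_0)$ the relations $S_0\xi=\xi$ and $\langle I_1(0)\xi,\xi\rangle_{\C^N}=0$ force $\P_j^\theta\v\xi=0$ for every $j$ with $\lambda\in I^\theta_j$. (One can also read (a) and (b) off from \cite[Cor.~2.5]{RT16}.)

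Item (c) is the least routine point, and the only place I expect a genuine subtlety. The plan is to use the identity $I_2(0)=-S_1\,Y\,T^{-1}\,Y\,S_1$ from the proof of Proposition \ref{Prop_asymp}, with $Y:=\re\big(M_1(0)\big)$ and $T:=I_0(0)+S_0$ invertible in $\B(\C^N)$, and to note from \eqref{eq_I_0_0} that $\re(T)=S_0+\tfrac12\sum_{\{j\,\mid\,\lambda=\lambda_j^\theta-2\}}\v\P_j^\theta\v$ and $\im(T)=\tfrac12\sum_{\{j\,\mid\,\lambda=\lambda_j^\theta+2\}}\v\P_j^\theta\v$ are \emph{both} positive semidefinite. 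For $\xi\in\Ran(S_2)=\ker\big(I_2(0)\big)\subseteq\Ran(S_1)$, the relations $S_1\xi=\xi$ and $Y=Y^*$ give $0=\langle I_2(0)\xi,\xi\rangle_{\C^N}=-\big\langle T^{-1}(Y\xi),Y\xi\big\rangle_{\C^N}$; setting $\zeta:=T^{-1}(Y\xi)$ this reads $\langle\zeta,T\zeta\rangle_{\C^N}=0$, hence $\langle T\zeta,\zeta\rangle_{\C^N}=0$, so $\langle\re(T)\zeta,\zeta\rangle_{\C^N}=\langle\im(T)\zeta,\zeta\rangle_{\C^N}=0$, and positivity yields $\re(T)\zeta=\im(T)\zeta=0$, whence $T\zeta=0$ and $\zeta=0$ by invertibility of $T$, so $Y\xi=T\zeta=0$. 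This gives $\re\big(M_1(0)\big)S_2=0$, and $S_2\,\re\big(M_1(0)\big)=0$ by adjunction. The one thing to be careful about — the main obstacle, such as it is — is that one must exploit the positivity of \emph{both} $\re(T)$ and $\im(T)$ (equivalently, that $\langle T\eta,\eta\rangle_{\C^N}$ lies in the closed first quadrant of $\C$ for all $\eta$); the positivity of the imaginary part alone, which sufficed for (a) and (b), would not rule out a non-trivial $\zeta$.

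Finally, (d) follows from (c). Writing $M_1(0)=Y+iZ^*Z$ with $Z^*Z=\im\big(M_1(0)\big)$, the relations $ZS_1=0$ and $S_2=S_1S_2$ give $Z^*Z\,S_2=Z^*(ZS_1)S_2=0$, so $M_1(0)S_2=Y S_2+iZ^*Z\,S_2=0$ by (c); symmetrically, $S_1Z^*=0$ and $S_2=S_2S_1$ give $S_2Z^*Z=0$, hence $S_2M_1(0)=S_2Y+iS_2Z^*Z=0$, using also $S_2Y=0$ from (c).
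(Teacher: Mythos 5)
Your proof is correct and follows essentially the same route as the paper's: (a) and (b) from the positivity of the real and imaginary parts of $I_0(0)$ and $I_1(0)$, (c) from the factorisation $I_2(0)=-S_1\re\big(M_1(0)\big)\big(I_0(0)+S_0\big)^{-1}\re\big(M_1(0)\big)S_1$ together with the positivity of both $\re\big(I_0(0)+S_0\big)$ and $\im\big(I_0(0)+S_0\big)$ and the invertibility of $I_0(0)+S_0$, and (d) from (c) combined with $ZS_1=0$. The only cosmetic difference is in (c), where you argue vector-by-vector via $\zeta=\big(I_0(0)+S_0\big)^{-1}\re\big(M_1(0)\big)\xi$, whereas the paper phrases the identical step as the inclusion of the range of $(A-iB^*B)^{-1}\re\big(M_1(0)\big)S_2$ in $\ker(A)\cap\ker(B)=\{0\}$.
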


\subsection{Continuity of the scattering matrix}\label{sec_cont}

Based on the above asymptotic expansions, we now establish continuity properties of
the channel scattering matrices for the pair $(H^\theta,H^\theta_0)$ for each
$\theta\in[0,2\pi]$. Our approach is similar to that of \cite[Sec.~4]{RT17}, with one
major difference: Here, the scattering channels open at energies $\lambda_j^\theta-2$
and close at energies $\lambda_j^\theta+2$ for $j\in\{1,\dots,N\}$, while in
\cite{RT17} the scattering channels also open at specific energies but do no close
before reaching infinity.

Since  the stationary formula for the channel scattering matrix has already been
introduced in Section \ref{sec_intro}, we only need to provide more precise statements
about the continuity. Before presenting the result about the continuity at thresholds,
we define for each fixed $\lambda\in\T^\theta$, $\kappa\in\widetilde O(\varepsilon)$
with $\varepsilon>0$ small enough, and $2\ge\ell\ge m\ge0$, the operators
$$
C_{\ell m}(\kappa):=\big[S_\ell,\big(I_m(\kappa)+S_m\big)^{-1}\big]\in\B(\C^N),
$$
and note that $C_{\ell m}(\kappa)\in\O(\kappa)$ due to Lemma \ref{lemme_com}. In fact,
the formulas \eqref{eq_I_0}, \eqref{form_I_1} and \eqref{eq_I_2} imply that
$$
C_{\ell m}'(0):=\lim_{\kappa\to0}\tfrac1\kappa\;\!C_{\ell m}(\kappa)
$$
exists in $\B(\C^N)$. In other cases, we use the notation
$F(\kappa)\in\Oas(\kappa^n)$, $n\in\N$, for an operator $F(\kappa)\in\O(\kappa^n)$
such that $\lim_{\kappa\to0}\kappa^{-n}F(\kappa)$ exists in $\B(\C^N)$. We also note
that if $\kappa\in(0,\varepsilon)$ or $i\kappa\in(0,\varepsilon)$ with
$\varepsilon>0$, then $\kappa\in\widetilde O(\varepsilon)$ and
$-\kappa^2\in(-\varepsilon^2,\varepsilon^2)\setminus\{0\}$.

\begin{Theorem}\label{thm_cont}
Let $\lambda\in\T^\theta$, take $\kappa\in(0,\varepsilon)$ or
$i\kappa\in(0,\varepsilon)$ with $\varepsilon>0$ small enough, and let
$j,j'\in\{1,\dots,N\}$.
\begin{enumerate}
\item[(a)] If $\lambda\in I^\theta_j\cap I^\theta_{j'}$, then the limit
$\lim_{\kappa\to0}S^\theta(\lambda-\kappa^2)_{jj'}$ exists and is given by
$$
\lim_{\kappa\to0}S^\theta(\lambda-\kappa^2)_{jj'}
=\delta_{jj'}
-2i\;\!\beta_j^\theta(\lambda)^{-1}\P_j^\theta\v\;\!S_0\big(I_1(0)+S_1\big)^{-1}
S_0\v\;\!\P_{j'}^\theta\beta_{j'}^\theta(\lambda)^{-1}.
$$
\item[(b)] If $\lambda\in\overline{I^\theta_j}\cap\overline{I^\theta_{j'}}$ and
$-\kappa^2>0$, then the limit $\lim_{\kappa\to0}S^\theta(\lambda-\kappa^2)_{jj'}$
exists and is given by
$$
\lim_{\kappa\to0}S^\theta(\lambda-\kappa^2)_{jj'}
=\begin{cases}
0 &\hbox{if $\lambda>\lambda_j^\theta-2$, $\lambda=\lambda_{j'}^\theta-2$,}\\
0 &\hbox{if $\lambda=\lambda_j^\theta-2$, $\lambda>\lambda_{j'}^\theta-2$,}\\
\delta_{jj'}-\P_j^\theta\v\big(I_0(0)+S_0\big)^{-1}\v\;\!\P_{j'}^\theta\\
+\P_j^\theta\v\;\!C_{10}'(0)S_1\big(I_2(0)+S_2\big)^{-1}S_1C_{10}'(0)\v\;\!
\P_{j'}^\theta
&\hbox{if $\lambda_j^\theta-2=\lambda=\lambda_{j'}^\theta-2$.}
\end{cases}
$$
\item[(c)] If $\lambda\in\overline{I^\theta_j}\cap\overline{I^\theta_{j'}}$ and
$-\kappa^2<0$, then the limit
$\lim_{\kappa\to0}S^\theta(\lambda-\kappa^2)_{jj'}$ exists and is given by
$$
\lim_{\kappa\to0}S^\theta(\lambda-\kappa^2)_{jj'}
=\begin{cases}
0 &\hbox{if $\lambda<\lambda_j^\theta+2$, $\lambda=\lambda_{j'}^\theta+2$,}\\
0 &\hbox{if $\lambda=\lambda_j^\theta+2$, $\lambda<\lambda_{j'}^\theta+2$,}\\
\delta_{jj'}-i\;\!\P_j^\theta\v\big(I_0(0)+S_0\big)^{-1}\v\;\!\P_{j'}^\theta\\
+i\;\!\P_j^\theta\v\;\!C_{10}'(0)S_1\big(I_2(0)+S_2\big)^{-1}S_1C_{10}'(0)\v\;\!
\P_{j'}^\theta
&\hbox{if $\lambda_j^\theta+2=\lambda=\lambda_{j'}^\theta+2$.}
\end{cases}
$$
\end{enumerate}
\end{Theorem}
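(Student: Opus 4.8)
The plan is to start from the stationary formula \eqref{eq_S_lambda}. For $\kappa$ as in the statement the energy $\mu:=\lambda-\kappa^2$ is real, and for $|\kappa|$ small enough it lies in $\sigma(H^\theta_0)\setminus\big(\T^\theta\cup\sigma_{\rm p}(H^\theta)\big)$ (using that $\sigma_{\rm p}(H^\theta)$ is finite, Remark \ref{remark_no_accumu}); moreover $M^\theta(\mu+i0)$ coincides with the boundary value $\M^\theta(\lambda,\kappa)$ of the continuous extension obtained in Proposition \ref{Prop_asymp}. Hence
$$
S^\theta(\lambda-\kappa^2)_{jj'}
=\delta_{jj'}-2i\,\beta_j^\theta(\lambda-\kappa^2)^{-1}\,\P_j^\theta\v\,\M^\theta(\lambda,\kappa)\,\v\P_{j'}^\theta\,\beta_{j'}^\theta(\lambda-\kappa^2)^{-1},
$$
and it remains to analyse the right-hand side as $\kappa\to0$, replacing $\M^\theta(\lambda,\kappa)$ by the explicit expansion \eqref{eq_sol_1}. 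I would first record the behaviour of the prefactors: by \eqref{def_beta} and the definition of $\vartheta_j$ one has $\beta_j^\theta(\lambda-\kappa^2)^2=|\kappa|\,|\vartheta_j(\kappa)|$ when $j\in\N_\lambda$, so that $\beta_j^\theta(\lambda-\kappa^2)^{-1}=(2|\kappa|)^{-1/2}\big(1+\O(\kappa^2)\big)$ there by \eqref{eq_vartheta}, whereas $\beta_j^\theta(\lambda-\kappa^2)^{-1}\to\beta_j^\theta(\lambda)^{-1}\in(0,\infty)$ when $\lambda\in I^\theta_j$. These divergent prefactors are real and positive; the phase that distinguishes (b) from (c) will enter only through the factor $\kappa/|\kappa|$, which equals $-i$ when $i\kappa\in(0,\varepsilon)$ and $1$ when $\kappa\in(0,\varepsilon)$.

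The core of the argument is a power-counting in $\kappa$ performed separately on the left factor $\P_j^\theta\v(\cdot)$ and the right factor $(\cdot)\v\P_{j'}^\theta$ of each of the four blocks of \eqref{eq_sol_1}, which are of respective orders $\kappa^{+1},\kappa^0,\kappa^{-1},\kappa^{-2}$. The tools are: the expansions $(I_m(\kappa)+S_m)^{-1}=(I_m(0)+S_m)^{-1}+\O(\kappa)$; the identities $(I_m(0)+S_m)^{-1}S_m=S_m$, coming from $S_m$ being the Riesz projection of $I_m(0)$ at $0$ (\cite[Cor.~2.8]{RT16}); the commutator bound of Lemma \ref{lemme_com}; and above all the relations of Lemma \ref{lemma_relations}. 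Thus, when $j\in\N_\lambda$ the relation $\P_j^\theta\v S_0=0$ turns $\P_j^\theta\v(I_0(\kappa)+S_0)^{-1}S_0$ into $-\kappa\,\P_j^\theta\v(I_0(0)+S_0)^{-1}M_1(0)S_0+\O(\kappa^2)$, and absorbing the neighbouring factor $S_0(I_1(\kappa)+S_1)^{-1}S_1$ (and using $\P_j^\theta\v S_1=0$, which holds since $S_1\le S_0$) one identifies the coefficient as $\P_j^\theta\v C_{10}'(0)S_1$; when instead $\lambda\in I^\theta_j$ the relation $\P_j^\theta\v S_1=0$ plays the analogous role one level deeper; and the relations $\re(M_1(0))S_2=0=M_1(0)S_2$ (hence also $M_2(0)S_2=0$) furnish the second-order cancellation that makes the $\kappa^{-2}$ block vanish after sandwiching. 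Matching the surviving orders in $\kappa$ against the $\beta$-prefactors then yields the three cases. In (a), where $j,j'\notin\N_\lambda$, the $\kappa^{+1}$ block tends to $0$ while the $\kappa^{-1}$ and $\kappa^{-2}$ blocks are $\O(\kappa)$ and $\O(\kappa^2)$; only the $\kappa^0$ block survives, and its limit $S_0(I_1(0)+S_1)^{-1}S_0$ gives the stated formula. In (b) and (c): if exactly one of $j,j'$ lies in $\N_\lambda$, the single factor $|\kappa|^{-1/2}$ is beaten by the extra power of $\kappa$ produced on that side, every block tends to $0$, and the limit is $\delta_{jj'}=0$; if both $j,j'\in\N_\lambda$ (which forces $\lambda^\theta_j=\lambda^\theta_{j'}$), the product $\beta_j^\theta(\lambda-\kappa^2)^{-1}\beta_{j'}^\theta(\lambda-\kappa^2)^{-1}\sim(2|\kappa|)^{-1}$ is compensated exactly by the $\kappa^{+1}$ of the first block, producing $-\tfrac{i\kappa}{|\kappa|}\,\P_j^\theta\v(I_0(0)+S_0)^{-1}\v\P_{j'}^\theta$, and by the $\kappa^{-1}$ block processed through the $C_{10}'(0)$-identity above (and its mirror on the right), producing $+\tfrac{i\kappa}{|\kappa|}\,\P_j^\theta\v C_{10}'(0)S_1(I_2(0)+S_2)^{-1}S_1C_{10}'(0)\v\P_{j'}^\theta$; since $\tfrac{i\kappa}{|\kappa|}$ equals $1$ in (b) and $i$ in (c), and the $\kappa^0$ and $\kappa^{-2}$ blocks are $\O(\kappa^2)$ and drop out, this is precisely the announced limit.

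The main obstacle is the bookkeeping itself: each of the four blocks must be pushed, on both sides, through the finite list of alternatives (``$j\in\N_\lambda$'' versus ``$\lambda\in I^\theta_j$'', and similarly for $j'$), and at each step one must pin down the exact order in $\kappa$ and, whenever that order cancels the $\beta$-divergence, the precise limiting coefficient --- including the $\kappa/|\kappa|$ phase, which is the sole source of the left/right asymmetry between (b) and (c). The delicate point is the $\kappa^{-2}$ block: one must check that it vanishes in (a) and in the ``one interior, one threshold'' sub-cases of (b) and (c), and that, together with the $\kappa^{-1}$ block, it produces in the ``both threshold'' sub-cases exactly the $(I_2(0)+S_2)^{-1}$-term and nothing else. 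This uses the full force of Lemma \ref{lemma_relations}(c)--(d) and a careful ordering of the commutations via Lemma \ref{lemme_com}. Following the template of \cite[Sec.~4]{RT17}, I would relegate the routine estimates --- notably the verification that every remainder is $o(1)$ once multiplied by the $\beta$-prefactors --- to an appendix.
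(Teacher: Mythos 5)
Your proposal is correct and follows essentially the same route as the paper's proof: starting from the stationary formula \eqref{eq_S_lambda} with $M^\theta(\lambda-\kappa^2+i0)=\M^\theta(\lambda,\kappa)$, performing the block-by-block power counting on the rearranged expansion \eqref{eq_grosse} via Lemma \ref{lemme_com} and Lemma \ref{lemma_relations}, and exploiting the improved estimates $C_{20}(\kappa)\in\Oas(\kappa^3)$, $C_{21}(\kappa)\in\Oas(\kappa^2)$ (coming from $M_1(0)S_2=0$) to kill the $\kappa^{-2}$ block. The limiting coefficients you identify, including the $C_{10}'(0)$-terms and the $i\kappa/|\kappa|$ phase distinguishing (b) from (c), agree with the paper's.
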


A detailed proof of this theorem is given in the Appendix. Let us however mention that
it is based on the following formula for the operator $\M^\theta(\lambda,\kappa)$,
which is obtained by rewriting the r.h.s. of \eqref{eq_sol_1} as in
\cite[Sec.~3.3]{RT16} and \cite[Sec.~4]{RT17}:
\begin{align}
&\M^\theta(\lambda,\kappa)\nonumber\\
&=\kappa\big(I_0(\kappa)+S_0\big)^{-1}\nonumber\\
&\quad+\Big(S_0\big(I_0(\kappa)+S_0\big)^{-1}-C_{00}(\kappa)\Big)S_0
\big(I_1(\kappa)+S_1\big)^{-1}S_0
\Big(\big(I_0(\kappa)+S_0\big)^{-1}S_0+C_{00}(\kappa)\Big)\nonumber\\
&\quad+\tfrac1\kappa\Big\{\Big(S_1\big(I_0(\kappa)+S_0\big)^{-1}-C_{10}(\kappa)\Big)
\big(I_1(\kappa)+ S_1\big)^{-1}-\Big(S_0\big(I_0(\kappa)+S_0\big)^{-1}
-C_{00}(\kappa)\Big)C_{11}(\kappa)\Big\}\nonumber\\
&\qquad\cdot S_1\big(I_2(\kappa)+S_2\big)^{-1}S_1\Big\{\big(I_1(\kappa)+S_1\big)^{-1}
\Big(\big(I_0(\kappa)+S_0\big)^{-1}S_1+C_{10}(\kappa)\Big)\nonumber\\
&\qquad+C_{11}(\kappa)\Big(\big(I_0(\kappa)+S_0\big)^{-1}S_0
+C_{00}(\kappa)\Big)\Big\}\nonumber\\
&\quad+\tfrac1{\kappa^2}\Big\{\Big[\Big(S_2\big(I_0(\kappa)+S_0\big)^{-1}
-C_{20}(\kappa)\Big)\big(I_1(\kappa)+S_1\big)^{-1}\nonumber\\
&\qquad-\Big(S_0\big(I_0(\kappa)+S_0\big)^{-1}-C_{00}(\kappa)\Big)C_{21}(\kappa)\Big]
\big(I_2(\kappa)+S_2\big)^{-1}\nonumber\\
&\qquad-\Big[\Big(S_1\big(I_0(\kappa)+S_0\big)^{-1}-C_{10}(\kappa)\Big)
\big(I_1(\kappa)+S_1\big)^{-1}\nonumber\\
&\qquad-\Big(S_0\big(I_0(\kappa)+S_0\big)^{-1}-C_{00}(\kappa)\Big)C_{11}(\kappa)\Big]
C_{22}(\kappa)\Big\}S_2I_3(\kappa)^{-1}S_2\nonumber\\
&\qquad\cdot\Big\{\big(I_2(\kappa)+S_2\big)^{-1}\Big[\big(I_1(\kappa)+S_1\big)^{-1}
\Big(\big(I_0(\kappa)+S_0\big)^{-1}S_2+C_{20}(\kappa)\Big)\nonumber\\
&\qquad+C_{21}(\kappa)\Big(\big(I_0(\kappa)+S_0\big)^{-1}S_0
+C_{00}(\kappa)\Big)\Big]\nonumber\\
&\qquad+C_{22}(\kappa)\Big[\big(I_1(\kappa)+S_1\big)^{-1}
\Big(\big(I_0(\kappa)+S_0\big)^{-1}S_1+C_{10}(\kappa)\Big)\nonumber\\
&\qquad+C_{11}(\kappa)\Big(\big(I_0(\kappa)+S_0\big)^{-1}S_0
+C_{00}(\kappa)\Big)\Big]\Big\}.\label{eq_grosse}
\end{align}
The interest of this formula is that the projections $S_\ell$ (which lead to
simplifications in the proof of the theorem) have been moved at the beginning or at
the end of each term.

Finally, we present the result about the continuity of the scattering matrix at
embedded eigenvalues not located at thresholds. As for the previous theorem, the proof
is given in the Appendix.

\begin{Theorem}\label{thm_cont_bis}
Let $\lambda\in\sigma_{\rm p}(H^\theta)\setminus\T^\theta$, take
$\kappa\in(0,\varepsilon)$ or $i\kappa\in(0,\varepsilon)$ with $\varepsilon>0$ small
enough, and let $j,j'\in\{1,\dots,N\}$. Then, if
$\lambda\in I^\theta_j\cap I^\theta_{j'}$, the limit
$\lim_{\kappa\to0}S^\theta(\lambda-\kappa^2)_{jj'}$ exists and is given by
\begin{equation}\label{eq_S_vp}
\lim_{\kappa\to0}S^\theta(\lambda-\kappa^2)_{jj'}
=\delta_{jj'}-2i\;\!\beta_j^\theta(\lambda)^{-1}\P_j^\theta\v\big(J_0(0)+S\big)^{-1}
\v\;\!\P_{j'}^\theta\beta_{j'}^\theta(\lambda)^{-1}.
\end{equation}
\end{Theorem}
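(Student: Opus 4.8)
The plan is to feed the resolvent expansion of part~(ii) of the proof of Proposition~\ref{Prop_asymp} into the stationary formula~\eqref{eq_S_lambda} for the channel scattering matrix, and then to check that its only singular contribution disappears once it is squeezed between $\P_j^\theta\v$ and $\v\;\!\P_{j'}^\theta$.

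\textbf{Step 1: reduction via the stationary formula.} Fix $\lambda\in\sigma_{\rm p}(H^\theta)\setminus\T^\theta$ with $\lambda\in I^\theta_j\cap I^\theta_{j'}$, and take $\kappa$ with $\kappa\in(0,\varepsilon)$ or $i\kappa\in(0,\varepsilon)$ for $\varepsilon>0$ small. Since $\T^\theta$ is finite and $\sigma_{\rm p}(H^\theta)$ is finite (Remark~\ref{remark_no_accumu}(b)), and since $I^\theta_j\cap I^\theta_{j'}$ is open and (as $\lambda\notin\T^\theta$) contains $\lambda$ in its interior, one has for $\varepsilon$ small enough that $\lambda-\kappa^2\in(I^\theta_j\cap I^\theta_{j'})\setminus(\T^\theta\cup\sigma_{\rm p}(H^\theta))$ with $\lambda-\kappa^2\in\R$. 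As $\kappa'\in O(\varepsilon)$ tends to $\kappa$, the point $\lambda-\kappa'^2\in\C_+$ tends to $\lambda-\kappa^2$ from the upper half-plane, so $\M^\theta(\lambda,\kappa)=M^\theta\big((\lambda-\kappa^2)+i0\big)$, and~\eqref{eq_S_lambda} gives
$$
S^\theta(\lambda-\kappa^2)_{jj'}
=\delta_{jj'}-2i\;\!\beta_j^\theta(\lambda-\kappa^2)^{-1}\;\!\P_j^\theta\v\;\!
\M^\theta(\lambda,\kappa)\;\!\v\;\!\P_{j'}^\theta\;\!\beta_{j'}^\theta(\lambda-\kappa^2)^{-1}.
$$
Because $\lambda\in I^\theta_j\cap I^\theta_{j'}$, the numbers $\beta_j^\theta(\lambda),\beta_{j'}^\theta(\lambda)$ are strictly positive, hence $\kappa\mapsto\beta_j^\theta(\lambda-\kappa^2)^{-1}$ and $\kappa\mapsto\beta_{j'}^\theta(\lambda-\kappa^2)^{-1}$ are continuous at $\kappa=0$. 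It thus suffices to show that $\kappa\mapsto\P_j^\theta\v\;\!\M^\theta(\lambda,\kappa)\;\!\v\;\!\P_{j'}^\theta$ extends continuously to $\kappa=0$, with value $\P_j^\theta\v\big(J_0(0)+S\big)^{-1}\v\;\!\P_{j'}^\theta$; then~\eqref{eq_S_vp} follows.

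\textbf{Step 2: killing the singular term.} By~\eqref{eq_expansion_2},
$$
\M^\theta(\lambda,\kappa)=\big(J_0(\kappa)+S\big)^{-1}
+\tfrac1{\kappa^2}\big(J_0(\kappa)+S\big)^{-1}S\;\!J_1(\kappa)^{-1}S\big(J_0(\kappa)+S\big)^{-1}.
$$
Since $J_0(\kappa)=T_0+\kappa^2T_1(\kappa)\to T_0=J_0(0)$ as $\kappa\to0$, the first term is continuous at $\kappa=0$ and contributes precisely $\P_j^\theta\v\big(J_0(0)+S\big)^{-1}\v\;\!\P_{j'}^\theta$. For the second term, the key observation is that $\P_j^\theta\v\;\!S=0=\P_{j'}^\theta\v\;\!S$: indeed, since $\lambda\in I^\theta_j\cap I^\theta_{j'}$, the imaginary part of $T_0$ contains the positive semidefinite summands $\beta_j^\theta(\lambda)^{-2}\v\;\!\P_j^\theta\v$ and $\beta_{j'}^\theta(\lambda)^{-2}\v\;\!\P_{j'}^\theta\v$, so taking the imaginary part of $\langle T_0\xi,\xi\rangle=0$ for $\xi\in\ker T_0=\Ran S$ forces $\P_j^\theta\v\;\!\xi=\P_{j'}^\theta\v\;\!\xi=0$, exactly as in the proof of Proposition~\ref{proposition_kernel}; taking adjoints yields also $S\v\;\!\P_j^\theta=0=S\v\;\!\P_{j'}^\theta$. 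Combining these equalities with $\big(J_0(0)+S\big)^{-1}S=S=S\big(J_0(0)+S\big)^{-1}$ (which holds because $S$ is the Riesz projection of $T_0$ at $0$, whence $T_0S=ST_0=0$) and with the resolvent identity $\big(J_0(\kappa)+S\big)^{-1}=\big(J_0(0)+S\big)^{-1}-\kappa^2\big(J_0(\kappa)+S\big)^{-1}T_1(\kappa)\big(J_0(0)+S\big)^{-1}$, one obtains $\P_j^\theta\v\big(J_0(\kappa)+S\big)^{-1}S\in\Oas(\kappa^2)$ and $S\big(J_0(\kappa)+S\big)^{-1}\v\;\!\P_{j'}^\theta\in\Oas(\kappa^2)$. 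Finally, because the iterative scheme of part~(ii) of the proof of Proposition~\ref{Prop_asymp} stops at the first step (following~\cite[Prop.~3.3]{RT17}), $J_1(0)=ST_1(0)S$ is invertible in $S\C^N$ and $J_1(\kappa)^{-1}$ stays uniformly bounded as $\kappa\to0$; hence the second term of $\P_j^\theta\v\;\!\M^\theta(\lambda,\kappa)\;\!\v\;\!\P_{j'}^\theta$ equals $\kappa^{-2}\,\Oas(\kappa^2)\,\O(1)\,\Oas(\kappa^2)\in\Oas(\kappa^2)$ and vanishes as $\kappa\to0$. This gives the required continuity and the stated value.

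\textbf{Main obstacle.} The delicate part is the treatment of the $\kappa^{-2}$-singular term of~\eqref{eq_expansion_2}: it hinges on $j$ and $j'$ both labelling bands that are \emph{open} at $\lambda$, so that $\P_j^\theta\v$ and $\v\;\!\P_{j'}^\theta$ annihilate $\Ran S$, and on $J_1(\kappa)^{-1}$ growing no faster than $\kappa^{-2}$, i.e. on $J_1(0)$ being invertible on $S\C^N$ — precisely the termination of the iteration established in Proposition~\ref{Prop_asymp}. The remaining ingredients (the identification $\M^\theta(\lambda,\kappa)=M^\theta((\lambda-\kappa^2)+i0)$, the continuity of $(J_0(\kappa)+S)^{-1}$ and $T_1(\kappa)$ at $\kappa=0$, and the $\Oas$-bookkeeping) are routine, so the detailed argument is naturally deferred to the Appendix as announced.
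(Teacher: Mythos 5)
Your proof is correct and follows essentially the same route as the paper: both start from the expansion \eqref{eq_expansion_2}, both kill the $\kappa^{-2}$-singular term by combining the relations $\P_j^\theta\v\;\!S=0=S\v\;\!\P_{j'}^\theta$ (which you derive by the positivity argument underlying \cite[Lemma~2.5]{RT16}, the reference the paper cites) with the fact that $S$ can be moved past $\big(J_0(\kappa)+S\big)^{-1}$ at the cost of an $\Oas(\kappa^2)$ error, and both then read off the limit from the regular term $\big(J_0(0)+S\big)^{-1}$ via the stationary formula \eqref{eq_start}. The only cosmetic difference is that you implement the $\Oas(\kappa^2)$ estimate through the resolvent identity rather than through the commutator $[S,(J_0(\kappa)+S)^{-1}]$, which is the same computation.
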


\section{Structure of the wave operators}\label{sec_structure}
\setcounter{equation}{0}

In this section, we establish new stationary formulas both for the wave operators
$W^\theta_{\pm}$ at fixed value $\theta\in[0,2\pi]$ and the wave operators for the
initial pair of Hamiltonians $(H,H_0)$. First, we recall from \cite[Eq.~2.7.5]{Yaf92}
that $W_-^\theta$ satisfies for suitable $f,g\in\h$ the equation
$$
\big\langle W_-^\theta f,g\big\rangle_\h
=\int_\R\lim_{\varepsilon\searrow0}\tfrac\varepsilon\pi
\big\langle R_0^\theta(\lambda- i\varepsilon)f,R^\theta(\lambda- i\varepsilon)g
\big\rangle_\h\;\!\d\lambda.
$$
We also recall from \cite[Sec.~1.4]{Yaf92} that, given
$
\delta_\varepsilon(H_0^\theta-\lambda)
:=\frac{\pi^{-1}\varepsilon}{(H_0^\theta-\lambda)^2+\varepsilon^2}
$
with $\varepsilon>0$ and $\lambda\in\R$, the limit
$
\lim_{\varepsilon\searrow0}
\big\langle\delta_\varepsilon(H_0^\theta-\lambda)f,g\big\rangle_\h
$
exists for a.e. $\lambda\in\R$ and verifies the relation
$$
\langle f,g\rangle_\h
=\int_\R\lim_{\varepsilon\searrow0}\big\langle
\delta_\varepsilon(H_0^\theta-\lambda)f,g\big\rangle_\h\;\!\d\lambda.
$$
So, by taking \eqref{eq_resolv_1} into account and using the fact that
$\lim_{\varepsilon\searrow0}\|\delta_\varepsilon(H_0^\theta-\lambda)\|_{\B(\h)}=0$ if
$\lambda\notin\sigma(H^\theta_0)$, we obtain that
$$
\big\langle(W_-^\theta-1)f,g\big\rangle_\h
=-\int_{\sigma(H_0^\theta)}\lim_{\varepsilon\searrow0}\big\langle G^*
M^\theta(\lambda+i\varepsilon)G\delta_\varepsilon(H_0^\theta-\lambda)f,
R_0^\theta(\lambda-i\varepsilon)g\big\rangle_\h\;\!\d\lambda,
$$
with
$$
M^\theta(z):=\big(\u+GR_0^\theta(z)G^*\big)^{-1},\quad z\in\C\setminus\R.
$$

In the following sections, we derive an expression for the operator $(W_-^\theta-1)$
in the spectral representation of $H_0^\theta$; that is, for the operator
$\F^\theta(W_-^\theta-1)(\F^{\theta})^*$. For that purpose, we recall that
$G=\v\gamma_0$, with $\gamma_0:\h\to\C^N$ as in \eqref{def_gamma0}. We also define the
set
$$
\textstyle\Drond^\theta
:=\big\{\zeta\in\Hrond^\theta\mid\zeta=\sum_{j=1}^N\zeta_j,
~\zeta_j\in C^\infty_{\rm c}\big(I^\theta_j\setminus
\big(\T^\theta\cup\sigma_{\rm p}(H^\theta)\big);\P^\theta_j\C^N\big)\big\},
$$
which is dense in $\Hrond^\theta$ because $\T^\theta$ is countable and
$\sigma_{\rm p}(H^\theta)$ is closed and of Lebesgue measure $0$, as a consequence of
Remark \ref{remark_no_accumu}(b). Finally, we prove a small lemma useful for the
following computations:

\begin{Lemma}\label{lemma_short}
For $\zeta\in\Drond^\theta$ and $\lambda\in\sigma(H^\theta_0)$, one has
\begin{enumerate}
\item[(a)]
$
\gamma_0(\F^{\theta})^*\zeta
=\pi^{-1/2}\sum_{j\in\{1,\dots,N\}}\int_{I^\theta_j}\beta_j^\theta(\mu)^{-1}
\zeta_j(\mu)\;\!\d\mu\in\C^N
$,
\item[(b)]
$
\slim_{\varepsilon\searrow0}\gamma_0(\F^{\theta})^*
\delta_\varepsilon(X^\theta-\lambda)\zeta
=\pi^{-1/2}\sum_{\{j\mid\lambda\in I^\theta_j\}}\beta_j^\theta(\lambda)^{-1}
\zeta_j(\lambda)\in\C^N
$.
\end{enumerate}
\end{Lemma}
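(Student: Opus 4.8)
The plan is to prove Lemma~\ref{lemma_short} by a direct computation, using the explicit formulas \eqref{eq_adjoint} for $(\F^\theta)^*$ and \eqref{def_gamma0} for $\gamma_0$, together with the change of variables $\mu = 2\cos(\omega)+\lambda_j^\theta$ which links the spectral variable to the angular variable.

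First I would prove (a). Starting from \eqref{eq_adjoint}, for $\zeta\in\Drond^\theta$ one has
$$
\big((\F^\theta)^*\zeta\big)(\omega)
=\big(2\pi\sin(\omega)\big)^{1/2}\sum_{j=1}^N\P_j^\theta\zeta\big(2\cos(\omega)+\lambda_j^\theta\big),
\quad\omega\in[0,\pi).
$$
Since $\zeta = \sum_j \zeta_j$ with $\zeta_j$ supported in $I^\theta_j$ and valued in $\P^\theta_j\C^N$, only the term with matching indices survives: $\P_j^\theta\zeta\big(2\cos(\omega)+\lambda_j^\theta\big) = \zeta_j\big(2\cos(\omega)+\lambda_j^\theta\big)$ whenever $2\cos(\omega)+\lambda_j^\theta\in I^\theta_j$, i.e. whenever $\omega$ lies in the open interval where this holds. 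Applying $\gamma_0$ as in \eqref{def_gamma0}, which integrates componentwise against $\tfrac{\d\omega}{\pi}$, and then performing the substitution $\mu = 2\cos(\omega)+\lambda_j^\theta$ (so $\d\mu = -2\sin(\omega)\,\d\omega$ and $\sin(\omega) = \tfrac12\sqrt{4-(\mu-\lambda_j^\theta)^2} = \tfrac12\beta_j^\theta(\mu)^2$ for $\mu\in I^\theta_j$), I would obtain
$$
\gamma_0(\F^\theta)^*\zeta
=\sum_{j=1}^N\int_0^\pi\big(2\pi\sin(\omega)\big)^{1/2}\zeta_j\big(2\cos(\omega)+\lambda_j^\theta\big)\tfrac{\d\omega}\pi
=\pi^{-1/2}\sum_{j=1}^N\int_{I^\theta_j}\beta_j^\theta(\mu)^{-1}\zeta_j(\mu)\,\d\mu,
$$
where the factor $\big(2\pi\sin(\omega)\big)^{1/2}\tfrac1\pi\cdot\tfrac1{2\sin(\omega)} = \pi^{-1/2}\big(2\sin(\omega)\big)^{-1/2} = \pi^{-1/2}\beta_j^\theta(\mu)^{-1}$ after noting $2\sin(\omega) = \beta_j^\theta(\mu)^2$. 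This gives (a); the smoothness and compact support of $\zeta_j$ away from thresholds guarantees the integrand is bounded and the integral well-defined.

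For (b), the idea is the same but with $\zeta$ replaced by $\delta_\varepsilon(X^\theta-\lambda)\zeta$, whose $j$-th component is $\mu\mapsto\tfrac{\pi^{-1}\varepsilon}{(\mu-\lambda)^2+\varepsilon^2}\zeta_j(\mu)$. Applying part (a) to this element (it still lies in $\Drond^\theta$, or at least in a class for which the computation of (a) is valid, since multiplication by the smooth bounded Poisson kernel preserves the relevant support and regularity properties), one gets
$$
\gamma_0(\F^\theta)^*\delta_\varepsilon(X^\theta-\lambda)\zeta
=\pi^{-1/2}\sum_{j=1}^N\int_{I^\theta_j}\beta_j^\theta(\mu)^{-1}\tfrac{\pi^{-1}\varepsilon}{(\mu-\lambda)^2+\varepsilon^2}\zeta_j(\mu)\,\d\mu.
$$
Now I would pass to the limit $\varepsilon\searrow0$. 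For each fixed $j$, the function $\mu\mapsto\beta_j^\theta(\mu)^{-1}\zeta_j(\mu)$ is continuous (indeed smooth) on $I^\theta_j$ with compact support bounded away from the endpoints of $I^\theta_j$, so the standard approximate-identity property of the Poisson kernel $\tfrac{\pi^{-1}\varepsilon}{(\mu-\lambda)^2+\varepsilon^2}$ gives, for $\lambda\in I^\theta_j$, convergence to $\beta_j^\theta(\lambda)^{-1}\zeta_j(\lambda)$, and for $\lambda\notin\overline{I^\theta_j}$ (or more precisely, $\lambda$ outside the support of $\zeta_j$) convergence to $0$. Summing over $j$ and keeping only the indices with $\lambda\in I^\theta_j$ yields the claimed formula. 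The convergence is in $\C^N$ (strong convergence of the operators $\gamma_0(\F^\theta)^*\delta_\varepsilon(X^\theta-\lambda)$ on $\Drond^\theta$), which is what the notation $\slim$ refers to here.

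The only mildly delicate point — and the one I expect to require a line of care rather than being truly hard — is justifying that the computation of part (a) applies verbatim to $\delta_\varepsilon(X^\theta-\lambda)\zeta$, whose components are no longer in $C^\infty_{\rm c}$ but are still smooth and rapidly decaying; this is immediate since the substitution and the integral manipulations in (a) only use continuity and integrability of the components against $\beta_j^\theta(\mu)^{-1}$, both of which persist. Everything else is a routine change of variables and an appeal to the elementary properties of the Poisson kernel.
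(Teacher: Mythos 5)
Your proof is correct and follows essentially the same route as the paper: part (a) is the direct computation from \eqref{eq_adjoint} with the substitution $\mu=2\cos(\omega)+\lambda_j^\theta$, and part (b) reduces to the Dirac-delta (Poisson kernel) approximate-identity property applied to the continuous, compactly supported function $\mu\mapsto\beta_j^\theta(\mu)^{-1}\zeta_j(\mu)$. The constants and the identity $2\sin(\omega)=\beta_j^\theta(\mu)^2$ check out, so there is nothing to add.
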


\begin{proof}
(a) follows from a simple computation taking \eqref{eq_adjoint} into account. For (b),
it is sufficient to note that the map $\mu\mapsto\beta_j^\theta(\mu)^{-1}\zeta_j(\mu)$
extends trivially to a continuous function on $\R$ with compact support in
$I_j^\theta$, and then to use the convergence of the Dirac delta sequence
$\delta_\varepsilon(\;\!\cdot\;\!-\lambda)$.
\end{proof}

Taking the previous observations into account, we obtain for
$\zeta,\xi\in\Drond^\theta$ the equalities
\begin{align}
&\big\langle\F^\theta(W_-^\theta-1)(\F^{\theta })^*\xi,
\zeta\big\rangle_{\Hrond^\theta}\nonumber\\
&=-\int_{\sigma(H^\theta_0)}\lim_{\varepsilon\searrow0}
\left\langle\gamma_0^*\v M^\theta(\lambda+i\varepsilon)\v\;\!\gamma_0(\F^{\theta})^*
\delta_\varepsilon(X^\theta-\lambda)\xi,
(\F^{\theta})^*\big(X^\theta-\lambda+i\varepsilon\big)^{-1}\zeta
\right\rangle_\h\;\!\d\lambda\nonumber\\
&=-\int_{\sigma(H_0^\theta)}\lim_{\varepsilon\searrow0}
\left\langle\v M^\theta(\lambda+i\varepsilon)\v\;\!\gamma_0(\F^{\theta})^*
\delta_\varepsilon(X^\theta-\lambda)\xi,
\gamma_0(\F^{\theta})^*\big(X^\theta-\lambda+i\varepsilon\big)^{-1}\zeta
\right\rangle_{\C^N}\;\!\d\lambda\nonumber\\
&=-\pi^{-1/2}\int_{\sigma(H^\theta_0)}\lim_{\varepsilon\searrow0}
\left\langle\v M^\theta(\lambda+i\varepsilon)\v\;\!\gamma_0(\F^{\theta})^*
\delta_\varepsilon(X^\theta-\lambda)\xi,
\sum_{j=1}^N\int_{I^\theta_j}\tfrac{\beta_j^\theta(\mu)^{-1}}{\mu-\lambda+i\varepsilon}
\;\!\zeta_j(\mu)\;\!\d\mu\right\rangle_{\C^N}\d\lambda\nonumber\\
&=-\pi^{-1/2}\sum_{j=1}^N\int_{I^\theta_j}\lim_{\varepsilon\searrow0}
\left\langle\v M^\theta(\lambda+i\varepsilon)\v\;\!\gamma_0(\F^{\theta})^*
\delta_\varepsilon(X^\theta-\lambda)\xi,
\int_{I_j^\theta}\tfrac{\beta_j^\theta(\mu)^{-1}}{\mu-\lambda+i\varepsilon}\;\!
\zeta_j(\mu)\;\!\d\mu\right\rangle_{\C^N}\d\lambda\label{eq_leading}\\
&\quad-\pi^{-1/2}\sum_{j=1}^N\int\limits_{\sigma(H^\theta_0)\setminus I^\theta_j}
\lim_{\varepsilon\searrow0}\left\langle\v M^\theta(\lambda+i\varepsilon)\v\;\!\gamma_0
(\F^{\theta})^*\delta_\varepsilon(X^\theta-\lambda)\xi,
\int_{I_j^\theta}\tfrac{\beta_j^\theta(\mu)^{-1}}{\mu-\lambda+i\varepsilon}\;\!
\zeta_j(\mu)\;\!\d\mu\right\rangle_{\C^N}\d\lambda.\label{eq_remainder}
\end{align}
In consequence, the expression for the operator
$\F^\theta(W_-^\theta-1)(\F^{\theta})^*$ is given by two terms, \eqref{eq_leading} and
\eqref{eq_remainder}, which we study separately in the next two sections.

\subsection{Main term of the wave operators}\label{sec_main}

We start this section with a key lemma which will allow us to rewrite the term
\eqref{eq_leading} in a rescaled energy representation, see \cite{BSB,IT19,SB16} for
similar constructions. For that purpose, we first define for $\theta\in[0,2\pi]$ and
$j\in\{1,\dots,N\}$ the unitary operator $\V_j^\theta:\ltwo(I_j^\theta)\to\ltwo(\R)$
given by
$$
\big(\V_j^\theta\xi\big)(s)
:=\tfrac{2^{1/2}}{\cosh(s)}\;\!\xi\big(\lambda_j^\theta+2\tanh(s)\big),
\quad\hbox{$\xi\in\ltwo(I_j^\theta)$, a.e. $s\in\R$,}
$$
with adjoint $(\V^{\theta}_j)^*:\ltwo(\R)\to\ltwo(I_j^\theta)$ given by
$$
\big((\V^{\theta}_j)^*f\big)(\lambda)
=\left(\tfrac2{4-(\lambda-\lambda_j^\theta)^2}\right)^{1/2}
f\Big(\arctanh\Big(\tfrac{\lambda-\lambda_j^\theta}2\Big)\Big),
\quad\hbox{$f\in\ltwo(\R)$, a.e. $\lambda\in I_j^\theta$.}
$$
Secondly, we define for any $\varepsilon>0$ the integral operator
$\Theta_{j,\varepsilon}^\theta$ on
$C_{\rm c}^\infty(I^\theta_j)\subset\ltwo(I_j^\theta)$ with kernel
$$
\Theta_{j,\varepsilon}^\theta(\lambda,\mu)
:=\tfrac i{2\pi(\mu-\lambda+i\varepsilon)}\;\!\beta^\theta_j(\lambda)\;\!
\beta_j^\theta(\mu)^{-1},\quad\lambda,\mu\in I_j^\theta.
$$
Thirdly, we write $D$ for the self-adjoint realisation of the operator
$-i\frac\d{\d s}$ in $\ltwo(\R)$ and $X$ for the operator of multiplication by the
variable in $\ltwo(\R)$. Finally, we define $b_\pm(X)\in\B\big(\ltwo(\R)\big)$ the
operators of multiplication by the bounded continuous functions
$$
b_\pm(s):=\big(\e^{s/2}\pm\e^{-s/2}\big)\big(\e^s+\e^{-s}\big)^{-1/2},\quad s\in\R,
$$
and note that $b_+$ is non-vanishing and satisfies $\lim_{s\to\pm\infty}b_+(s)=1$,
whereas $b_-$ vanishes at $s=0$ and satisfies $\lim_{s\to\pm\infty}b_-(s)=\pm 1$.

\begin{Lemma}\label{lemma_Pi}
For any $j\in\{1,\dots,N\}$, $f\in C^\infty_{\rm c}(\R)$ and $s\in\R$, one has
\begin{equation}\label{eq_kernel}
\lim_{\varepsilon\searrow0}\big(\V_j^\theta\Theta_{j,\varepsilon}^\theta
(\V_j^{\theta})^*f\big)(s)
=\big(\Pi(X,D)f\big)(s)
\end{equation}
with
$$
\Pi(X,D):=-\tfrac12\big(b_+(X)\tanh(\pi D)b_+(X)^{-1}
-ib_-(X)\cosh(\pi D)^{-1}b_+(X)^{-1}-1\big)\in\B\big(\ltwo(\R)\big).
$$
\end{Lemma}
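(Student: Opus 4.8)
The plan is to compute the limit on the left-hand side of \eqref{eq_kernel} by first expressing the operator $\V_j^\theta\Theta_{j,\varepsilon}^\theta(\V_j^{\theta})^*$ as an explicit integral operator on $\ltwo(\R)$ and then identifying the $\varepsilon\searrow0$ limit of its kernel. Concretely, I would change variables $\lambda=\lambda_j^\theta+2\tanh(s)$ and $\mu=\lambda_j^\theta+2\tanh(t)$ in the kernel $\Theta_{j,\varepsilon}^\theta(\lambda,\mu)$, using $4-(\lambda-\lambda_j^\theta)^2=4\cosh(s)^{-2}$ so that $\beta_j^\theta(\lambda)=\sqrt2\,\cosh(s)^{-1/2}$ (for $\lambda\in I_j^\theta$), and similarly for $\mu$; the Jacobian factors from $(\V_j^\theta)$ and $(\V_j^{\theta})^*$ together with $\d\mu$ must be carefully tracked. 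After simplification the integral operator should take the form $\big(\V_j^\theta\Theta_{j,\varepsilon}^\theta(\V_j^{\theta})^*f\big)(s)=\int_\R K_\varepsilon(s,t)f(t)\,\d t$ for a kernel $K_\varepsilon(s,t)$ that, in the limit $\varepsilon\searrow0$, becomes (in the principal-value sense) a convolution-type kernel times multiplicative prefactors built from $\cosh(s),\cosh(t)$. The key analytic input is the classical identity
$$
\lim_{\varepsilon\searrow0}\frac1{2\pi i}\int_\R\frac{g(t)}{\tanh(t)-\tanh(s)-i\varepsilon'}\,\d t
$$
reducing, after another change of variable, to the Fourier multiplier computation for the kernel $\frac1{\sinh(\pi(s-t))}$ or $\frac1{2i\sinh(\tfrac{s-t}2\mp i0)}$; these are the standard kernels whose Fourier transforms are $\tanh(\pi\,\cdot\,)$ and $\cosh(\pi\,\cdot\,)^{-1}$, cf.\ the computations in \cite{BSB,KR06,KR07,Ric16}.

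The main steps in order: (1) substitute and simplify to reduce $\V_j^\theta\Theta_{j,\varepsilon}^\theta(\V_j^{\theta})^*$ to an explicit integral operator on $\ltwo(\R)$ whose kernel depends on $s,t$ only through $\tanh(s)-\tanh(t)$ up to explicit $\cosh$-prefactors; (2) take the $\varepsilon\searrow0$ limit, producing a principal-value singular integral operator and possibly a residue (delta) contribution — the residue from the pole $\mu=\lambda$ should account for the ``$-1$'' and for part of the $b_-$-term; (3) recognise the principal-value part as a convolution operator in the $s$-variable after pulling out the prefactors, and compute its symbol using the known Fourier transforms of $1/\sinh$ and $1/\cosh$, thereby producing $\tanh(\pi D)$ and $\cosh(\pi D)^{-1}$; (4) reassemble the prefactors into $b_\pm(X)$ and $b_+(X)^{-1}$, using the identities $b_+(s)b_+(t)^{-1}$-type factors and the elementary relations between $\cosh,\e^{\pm s/2}$ that convert $\frac{\cosh(s)^{-1/2}}{\cosh(t)^{-1/2}}$ and the residue term into the stated combination $b_+(X)\tanh(\pi D)b_+(X)^{-1}-ib_-(X)\cosh(\pi D)^{-1}b_+(X)^{-1}-1$; (5) confirm the resulting operator $\Pi(X,D)$ is bounded on $\ltwo(\R)$, which is immediate since $b_\pm$ are bounded and $\tanh(\pi D),\cosh(\pi D)^{-1}$ are bounded Fourier multipliers.

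I expect the main obstacle to be step (2)–(3): correctly extracting the boundary-value distribution $\lim_{\varepsilon\searrow0}\frac1{\mu-\lambda+i\varepsilon}=\pv\frac1{\mu-\lambda}-i\pi\delta(\mu-\lambda)$ after the nonlinear change of variables, and then bookkeeping which piece of the answer comes from the delta (contributing the ``$-1$'' and the $b_-$ piece) versus the principal value (contributing the $\tanh(\pi D)$ piece). Getting the $b_\pm$ prefactors to come out exactly right — rather than off by a bounded conjugation — will require care: one must split the symmetric factor $\cosh(s)^{-1/2}\cosh(t)^{-1/2}$ appearing from $\beta_j^\theta(\lambda)\beta_j^\theta(\mu)^{-1}$ and the Jacobians into an $X$-dependent prefactor and a $D$-dependent multiplier, and the asymmetry between $b_+$ (on the left, with $b_+^{-1}$ on the right) and $b_-$ will emerge from whether one writes $\e^{s/2}+\e^{-s/2}$ or $\e^{s/2}-\e^{-s/2}$ when combining the residue with the prefactors. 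Once the convolution kernel is pinned down, the Fourier-transform identities are standard and the rest is routine; it is helpful that $f\in C^\infty_{\rm c}(\R)$ makes all interchanges of limits and integrals legitimate.
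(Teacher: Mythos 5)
Your plan follows essentially the same route as the paper's proof: the Sokhotski--Plemelj decomposition of the kernel $\tfrac i{2\pi(\mu-\lambda+i\varepsilon)}$ (giving a principal value plus the $\tfrac12f(s)$ term), the change of variables $\lambda=\lambda_j^\theta+2\tanh(s)$, the algebraic splitting of the resulting kernel into $\csch\big((s-t)/2\big)\pm\sech\big((s-t)/2\big)$ pieces with $\e^{\pm s/2}$ prefactors, and the standard Fourier transforms producing $\tanh(\pi D)$ and $\cosh(\pi D)^{-1}$, followed by reassembly into $b_\pm(X)$. The only bookkeeping correction: in the paper the delta contribution accounts solely for the ``$-1$'' term, while the whole $b_-(X)\cosh(\pi D)^{-1}b_+(X)^{-1}$ piece comes from the splitting of the principal-value kernel, not partly from the residue.
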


The proof of this lemma consists in a direct computation together with the use of
formulas for the Fourier transforms of the functions appearing in the r.h.s. of
\eqref{eq_kernel}. Details are given in the Appendix.

\begin{Remark}\label{rem_form_Pi}
Since the functions appearing in $\Pi(X,D)$ have limits at $\pm\infty$, the operator
$\Pi(X,D)$ can be rewritten as
\begin{align*}
\Pi(X,D)
&=-\tfrac12\big(\tanh(\pi D)-i\tanh(X/2)\cosh(\pi D)^{-1}
+b_+(X)\big[\tanh(\pi D),b_+(X)^{-1}\big]\big)\\
&\quad-ib_-(X)\big[\cosh(\pi D)^{-1},b_+(X)^{-1}\big]-1\big)\\
&=-\tfrac12\big(\tanh(\pi D)-i\tanh(X)\cosh(\pi D)^{-1}-1\big)+K
\end{align*}
with
\begin{align*}
K&:=\tfrac i2\Big(\big(\tanh(X/2)-\tanh(X)\big)\cosh(\pi D)^{-1}
+ib_+(X)\big[\tanh(\pi D),b_+(X)^{-1}\big]\\
&\quad+b_-(X)\big[\cosh(\pi D)^{-1},b_+(X)^{-1}\big]\Big)\in\K\big(\ltwo(\R)\big).
\end{align*}
See for example \cite[Thm.~4.1]{Sim05} and \cite[Thm.~C]{Cor75} for a justification of
the compactness of the operator $K$. Note also that an operator similar to $\Pi(X,D)$
already appeared in \cite{IT19} in the context of potential scattering on the discrete
half-line.
\end{Remark}

Now, define the unitary operator $\V^\theta:\Hrond^\theta\to\ltwo(\R;\C^N)$ by
$$
\V^\theta\zeta
:=\sum_{j=1}^N\big(\V^\theta_j\otimes\P_j^\theta\big)\zeta|_{I^\theta_j},
\quad\zeta\in\Hrond^\theta,
$$
with adjoint $(\V^{\theta})^*:\ltwo(\R;\C^N)\to\Hrond^\theta$ given by
$$
\big((\V^{\theta})^*f\big)(\lambda)
=\sum_{\{j\mid\lambda\in I^\theta_j\}}
\left(\big((\V^{\theta}_j)^*\otimes\P_j^\theta\big)f\right)(\lambda),
\quad f\in\ltwo(\R;\C^N),~\hbox{a.e. $\lambda\in I^\theta$,}
$$
and for any $\varepsilon>0$, define the integral operator $\Theta^\theta_\varepsilon$
on $\Drond^\theta\subset\Hrond^\theta$ by
$$
\Theta^\theta_\varepsilon\zeta
:=\sum_{j=1}^N\big(\Theta^\theta_{j,\varepsilon}\otimes1_N\big)\zeta_j,
\quad\zeta=\sum_{j=1}^N\zeta_j\in\Drond^\theta.
$$
Then, using the results that precede, one obtains a simpler expression for the term
\eqref{eq_leading}:

\begin{Proposition}\label{prop_main_term}
For any $\xi\in\Drond^\theta$ and $f\in\V^\theta\Drond^\theta$, one has the equality
\begin{align*}
&-\pi^{-1/2}\sum_{j=1}^N\int_{I^\theta_j}\lim_{\varepsilon\searrow0}
\bigg\langle\v M^\theta(\lambda+i\varepsilon)\v\;\!\gamma_0(\F^{\theta})^*
\delta_\varepsilon(X^\theta-\lambda)\xi,
\int_{I_j^\theta}\tfrac{\beta_j^\theta(\mu)^{-1}}{\mu-\lambda+i\varepsilon}\;\!
\big((\V^{\theta})^*f\big)_j(\mu)\;\!\d\mu\bigg\rangle_{\C^N}\d\lambda\\
&=\left\langle(\V^{\theta})^*\big(\Pi(X,D)^*\otimes1_N\big)\V^\theta
\big(S^\theta(X^\theta)-1\big)\xi,(\V^{\theta})^*f\right\rangle_{\Hrond^\theta}
\end{align*}
with $S^\theta(X^\theta):=\F^\theta S^\theta(\F^{\theta})^*$.
\end{Proposition}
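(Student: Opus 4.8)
The plan is to compute, for a.e.\ fixed $\lambda$, the pointwise limit as $\varepsilon\searrow0$ of the integrand on the left-hand side --- no interchange of limit and integral will be needed, since the expression is already written with the $\varepsilon$-limit taken inside the $\lambda$-integral --- then to recognise the scattering matrix in that limit via \eqref{eq_S_lambda}, and finally to re-read the resulting $\lambda$-integral as an inner product in $\Hrond^\theta$. Throughout, write $\zeta:=(\V^\theta)^*f\in\Drond^\theta$, so that $\zeta=\sum_{j=1}^N\zeta_j$ with $\zeta_j\in C^\infty_{\rm c}\big(I^\theta_j\setminus(\T^\theta\cup\sigma_{\rm p}(H^\theta));\P^\theta_j\C^N\big)$, and set $f_j:=(\V^\theta_j\otimes\P^\theta_j)\zeta_j\in C^\infty_{\rm c}(\R;\P^\theta_j\C^N)$, so that $f=\sum_jf_j$.

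First I would rewrite the inner $\mu$-integral: by the definition of the kernel $\Theta^\theta_{j,\varepsilon}$ and since $\beta^\theta_j(\lambda)>0$, one has for $\lambda\in I^\theta_j$ the identity $\int_{I_j^\theta}\tfrac{\beta_j^\theta(\mu)^{-1}}{\mu-\lambda+i\varepsilon}\,\zeta_j(\mu)\,\d\mu=-2\pi i\,\beta_j^\theta(\lambda)^{-1}\big((\Theta^\theta_{j,\varepsilon}\otimes1_N)\zeta_j\big)(\lambda)$. Since the components of $f_j=(\V^\theta_j\otimes\P^\theta_j)\zeta_j$ lie in $C^\infty_{\rm c}(\R)$, Lemma \ref{lemma_Pi} applies to each of them after conjugation by $\V^\theta_j$, giving $\lim_{\varepsilon\searrow0}(\Theta^\theta_{j,\varepsilon}\otimes1_N)\zeta_j=(\Theta^\theta_j\otimes1_N)\zeta_j$ pointwise a.e.\ on $I^\theta_j$, where $\Theta^\theta_j:=(\V^\theta_j)^*\Pi(X,D)\V^\theta_j\in\B\big(\ltwo(I^\theta_j)\big)$. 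On the other hand, since $\xi\in\Drond^\theta$ is supported away from $\T^\theta\cup\sigma_{\rm p}(H^\theta)$, the limit $M^\theta(\lambda+i0)$ exists there (from the resolvent expansions of Proposition \ref{Prop_asymp}), while Lemma \ref{lemma_short}(b) gives $\lim_{\varepsilon\searrow0}\gamma_0(\F^\theta)^*\delta_\varepsilon(X^\theta-\lambda)\xi=\pi^{-1/2}\sum_{\{k\mid\lambda\in I^\theta_k\}}\beta^\theta_k(\lambda)^{-1}\xi_k(\lambda)$. As $\C^N$ is finite dimensional, these two convergences combine into the pointwise limit of the whole integrand.

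Next, I would combine \eqref{eq_S_lambda} with $\xi_k(\lambda)=\P^\theta_k\xi_k(\lambda)$, $\P^\theta_j\P^\theta_k=\delta_{jk}\P^\theta_j$ and the definitions $S^\theta(\lambda)_{jk}=\P^\theta_jS^\theta(\lambda)\P^\theta_k$, $S^\theta(X^\theta)=\F^\theta S^\theta(\F^\theta)^*$, to obtain, for a.e.\ $\lambda$, the identity $\beta^\theta_j(\lambda)^{-1}\P^\theta_j\v\,M^\theta(\lambda+i0)\v\sum_{\{k\mid\lambda\in I^\theta_k\}}\beta^\theta_k(\lambda)^{-1}\xi_k(\lambda)=\tfrac i2\,\P^\theta_j\big((S^\theta(X^\theta)-1)\xi\big)(\lambda)$. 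Inserting the two pointwise limits into the left-hand side, moving $\P^\theta_j$ onto the first entry (legitimate because $\big((\Theta^\theta_j\otimes1_N)\zeta_j\big)(\lambda)\in\P^\theta_j\C^N$), and performing the elementary bookkeeping of the scalar factors $-\pi^{-1/2}$, $-2\pi i$, $\pi^{-1/2}$, $\tfrac i2$ --- which collapse to $1$ regardless of the inner-product convention --- reduces the left-hand side to $\sum_{j=1}^N\int_{I^\theta_j}\big\langle\P^\theta_j\big((S^\theta(X^\theta)-1)\xi\big)(\lambda),\big((\Theta^\theta_j\otimes1_N)\zeta_j\big)(\lambda)\big\rangle_{\C^N}\d\lambda$, an absolutely convergent integral by Cauchy--Schwarz.

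Finally I would read each $\lambda$-integral as an inner product in $\ltwo(I^\theta_j;\C^N)$ and move $\Theta^\theta_j$ onto the other entry via $(\Theta^\theta_j)^*=(\V^\theta_j)^*\Pi(X,D)^*\V^\theta_j$; since $\Pi(X,D)^*\otimes1_N$ acts as the identity on the $\C^N$ factor, since $\V^\theta_j\otimes\P^\theta_j$ is the $j$-th building block of $\V^\theta$, and since the ranges $\P^\theta_j\C^N$ ($j=1,\dots,N$) are mutually orthogonal (so that all cross terms in the sum over $j$ drop out), summing over $j$ and using that $\V^\theta$ is unitary produces exactly $\big\langle(\V^\theta)^*\big(\Pi(X,D)^*\otimes1_N\big)\V^\theta\big(S^\theta(X^\theta)-1\big)\xi,(\V^\theta)^*f\big\rangle_{\Hrond^\theta}$. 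The main obstacle lies in the second step: the $\varepsilon\searrow0$ limit of the singular Cauchy-type $\mu$-integral cannot be taken naively but only after conjugation by $\V^\theta_j$, so that Lemma \ref{lemma_Pi} applies --- this is precisely why one restricts to $f\in\V^\theta\Drond^\theta$, ensuring that $\V^\theta_j\zeta_j$ is smooth and compactly supported --- and then one must track carefully the many $\beta^\theta_j$, $\P^\theta_j$ and numerical factors so that they recombine into the stated formula.
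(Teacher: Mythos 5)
Your proof is correct and follows essentially the same route as the paper: it combines Lemma \ref{lemma_short}(b) and Lemma \ref{lemma_Pi} (via conjugation by $\V_j^\theta$) to compute the pointwise $\varepsilon$-limit of the integrand, inserts the stationary formula \eqref{eq_S_lambda} to produce $S^\theta(X^\theta)-1$, and then reassembles the result as the stated inner product. The only difference is cosmetic — you keep the operators $(\V_j^\theta)^*\Pi(X,D)\V_j^\theta$ on $\ltwo(I_j^\theta)$ instead of changing variables to the rescaled energy $s$ as the paper does — and your bookkeeping of the scalar factors and projections checks out.
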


\begin{proof}
Using Lemma \ref{lemma_short}(b), Lemma \ref{lemma_Pi}, and \eqref{eq_S_lambda}, one
obtains
\begin{align*}
&-\pi^{-1/2}\sum_{j=1}^N\int_{I^\theta_j}\lim_{\varepsilon\searrow0}
\bigg\langle\v M^\theta(\lambda+i\varepsilon)\v\;\!\gamma_0(\F^{\theta})^*
\delta_\varepsilon(X^\theta-\lambda)\xi,
\int_{I_j^\theta}\tfrac{\beta_j^\theta(\mu)^{-1}}{\mu-\lambda+i\varepsilon}
\big((\V^{\theta})^*f\big)_j(\mu)\;\!\d\mu\bigg\rangle_{\C^N}\d\lambda\\
&=-2i\pi^{1/2}\sum_{j=1}^N\int_{I^\theta_j}\lim_{\varepsilon\searrow0}
\bigg\langle\beta_j^{\theta}(\lambda)^{-1}\v M^\theta(\lambda+i\varepsilon)
\v\;\!\gamma_0(\F^{\theta})^*\delta_\varepsilon(X^\theta-\lambda)\xi,\\
&\quad\left(\tfrac2{4-(\lambda-\lambda_j^\theta)^2}\right)^{1/2}
\big(\big(\V^\theta_j\Theta_{j,\varepsilon}^\theta(\V^{\theta}_j)^*
\otimes\P_j^\theta\big)f\big)
\Big(\arctanh\Big(\tfrac{\lambda-\lambda_j^\theta}2\Big)\Big)
\bigg\rangle_{\C^N}\;\!\d\lambda\\
&=-2i\sum_{j=1}^N\int_{I^\theta_j}\Bigg\langle\beta_j^{\theta}(\lambda)^{-1}
\v\;\!\M^\theta(\lambda,0)\v\;\!\sum_{\{k\mid\lambda\in I^\theta_k\}}
\beta_k^\theta(\lambda)^{-1}\xi_k(\lambda),\\
&\quad\left(\tfrac2{4-(\lambda-\lambda_j^\theta)^2}\right)^{1/2}
\big(\big(\Pi(X,D)\otimes\P_j^\theta\big)f\big)
\Big(\arctanh\Big(\tfrac{\lambda-\lambda_j^\theta}2\Big)\Big)
\Bigg\rangle_{\C^N}\;\!\d\lambda\\
&=\sum_{j=1}^N\int_{I^\theta_j}\Bigg\langle-2i\sum_{\{k\mid\lambda\in I^\theta_k\}}
\beta_j^{\theta}(\lambda)^{-1}\P_j^\theta\v\;\!\M^\theta(\lambda,0)
\v\;\!\P_k^\theta\beta_k^\theta(\lambda)^{-1}\xi_k(\lambda),\\
&\quad\left(\tfrac2{4-(\lambda-\lambda_j^\theta)^2}\right)^{1/2}
\big(\big(\Pi(X,D)\otimes1_N\big)f\big)
\Big(\arctanh\Big(\tfrac{\lambda-\lambda_j^\theta}2\Big)\Big)
\Bigg\rangle_{\C^N}\;\!\d\lambda\\
&=\sum_{j=1}^N\int_{I^\theta_j}\left\langle\P_j^\theta\big(S^\theta(\lambda)-1\big)
\xi(\lambda),\left(\tfrac2{4-(\lambda-\lambda_j^\theta)^2}\right)^{1/2}
\big(\big(\Pi(X,D)\otimes1_N\big)f\big)
\Big(\arctanh\Big(\tfrac{\lambda-\lambda_j^\theta}2\Big)\Big)
\right\rangle_{\C^N}\d\lambda\\
&=\sum_{j=1}^N\int_\R\left\langle\tfrac{2^{1/2}}{\cosh(s)}\;\!\P_j^\theta
\big(\big(S^\theta(X^\theta)-1\big)\xi\big)\big(\lambda_j^\theta+2\tanh(s)\big),
\big(\big(\Pi(X,D)\otimes1_N\big)f\big)(s)\right\rangle_{\C^N}\d s\\
&=\left\langle\V^\theta\big(S^\theta(X^\theta)-1\big)\xi,
\big(\Pi(X,D)\otimes1_N\big)f\right\rangle_{\ltwo(\R;\C^N)}\\
&=\left\langle(\V^{\theta})^*\big(\Pi(X,D)^*\otimes1_N\big)\V^\theta
\big(S^\theta(X^\theta)-1\big)\xi,(\V^{\theta})^*f\right\rangle_{\Hrond^\theta}
\end{align*}
as desired.
\end{proof}

Remark \ref{rem_form_Pi} and Proposition \ref{prop_main_term} imply that the operator
defined by \eqref{eq_leading} extends continuously to the operator
\begin{equation}\label{eq_form_leading}
-\tfrac12(\V^{\theta})^*\big\{\big(\tanh(\pi D)
+i\cosh(\pi D)^{-1}\tanh(X)-1\big)\otimes1_N\big\}\V^\theta
\big(S^\theta(X^\theta)-1\big)+K^\theta\in\B(\Hrond^\theta)
\end{equation}
with
\begin{equation}\label{eq_compact_leading}
K^\theta:=(\V^{\theta})^*(K^*\otimes1_N)\;\!\V^\theta\big(S^\theta(X^\theta)-1\big)
\in\K(\Hrond^\theta).
\end{equation}

\subsection{Remainder term of the wave operators}

We prove in this section that the operator defined by the remainder term
\eqref{eq_remainder} in the expression for $(W_-^\theta-1)$ extends to a compact
operator under generic conditions. In the next section, we deal with the remaining
exceptional cases. Our proof is based on two lemmas. The first lemma complements the
continuity properties established in Section \ref{sec_cont}, and it is similar to
Lemma 5.3 of \cite{RT17} in the continuous setting. Its technical proof is given in
the Appendix.

\begin{Lemma}\label{lemma_continuity}
For any $\theta\in[0,2\pi]$ and $j,j'\in\{1,\dots,N\}$ such that
$\lambda_j^\theta\ne\lambda_{j'}^\theta$, the function
\begin{equation}\label{eq_def_function}
I^\theta_{j'}\setminus\big(I_j^\theta\cup\T^\theta\cup\sigma_{\rm p}(H^\theta)\big)
\ni\lambda\mapsto\beta_j^\theta(\lambda)^{-2}\;\!\P_j^\theta\v\;\!\M^\theta(\lambda,0)
\v\;\!\P_{j'}^\theta\in\B(\C^N)
\end{equation}
extends to a continuous function on $\overline{I^\theta_{j'}\setminus I_j^\theta}$.
\end{Lemma}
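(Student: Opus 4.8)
Fix $\lambda_\star\in\overline{I^\theta_{j'}\setminus I^\theta_j}$ and study the limit of the function at $\lambda_\star$. Since $\lambda_\star$ is approached within $I^\theta_{j'}\setminus I^\theta_j$ by points $\lambda=\lambda_\star-\kappa^2$ with $\kappa\in(0,\varepsilon)$ (then $\lambda<\lambda_\star$) and/or $i\kappa\in(0,\varepsilon)$ (then $\lambda>\lambda_\star$), and since for such $\kappa$ one has $\M^\theta(\lambda,0)=\M^\theta(\lambda_\star,\kappa)$ by the continuous extension of Proposition \ref{Prop_asymp}, it is enough to study
$\kappa\mapsto\beta^\theta_j(\lambda_\star-\kappa^2)^{-2}\,\P^\theta_j\v\,\M^\theta(\lambda_\star,\kappa)\,\v\P^\theta_{j'}$
as $\kappa\to0$ in $\widetilde O(\varepsilon)$ along the real, resp.\ imaginary, axis. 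I would distinguish three cases according to the nature of $\lambda_\star$, and check in each that the relevant one- or two-sided limits exist and agree (the latter because all the auxiliary quantities $\vartheta_{j''}(\kappa),M_1(\kappa),I_m(\kappa),J_0(\kappa),J_1(\kappa),C_{\ell m}(\kappa)$ depend on $\kappa$ only through $\kappa$ and $\kappa^2$, hence have direction-free expansions).

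If $\lambda_\star\notin\T^\theta\cup\sigma_{\rm p}(H^\theta)$, there is nothing to prove: near $\lambda_\star$ the operator $\u+GR^\theta_0(\lambda+i0)G^*$ is continuous in $\lambda$ by \eqref{eq_sandwich} and invertible, so $\M^\theta(\lambda,0)$ is continuous, while $\beta^\theta_j$ is continuous and nonvanishing near $\lambda_\star$ since $\lambda_\star\notin\overline{I^\theta_j}$ (it is not in $I^\theta_j$ and, not being a threshold, differs from $\lambda^\theta_j\pm2$). If $\lambda_\star\in\sigma_{\rm p}(H^\theta)\setminus\T^\theta$, the same reason makes $\beta^\theta_j$ continuous and nonvanishing at $\lambda_\star$, so it suffices to show that $\P^\theta_j\v\,\M^\theta(\lambda_\star,\kappa)\,\v\P^\theta_{j'}$ converges. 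Here I would use the expansion \eqref{eq_expansion_2}: since $\lambda_\star\in I^\theta_{j'}$, the operator $\im(T_0)=\sum_{\{j''\mid\lambda_\star\in I^\theta_{j''}\}}\beta^\theta_{j''}(\lambda_\star)^{-2}\v\P^\theta_{j''}\v$ is a sum of positive operators one of which is $\beta^\theta_{j'}(\lambda_\star)^{-2}\v\P^\theta_{j'}\v$, whence $S\C^N=\ker(T_0)\subseteq\ker(\P^\theta_{j'}\v)$ and $S\v\P^\theta_{j'}=0$; writing $S(J_0(\kappa)+S)^{-1}\v\P^\theta_{j'}=-\kappa^2\,S(J_0(\kappa)+S)^{-1}T_1(\kappa)(J_0(0)+S)^{-1}\v\P^\theta_{j'}\in\Oas(\kappa^2)$ absorbs the $\kappa^{-2}$ in \eqref{eq_expansion_2} and gives the desired (direction-free) limit.

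The remaining case $\lambda_\star\in\T^\theta$ is the delicate one. Since $\lambda_\star\in\overline{I^\theta_{j'}}$, one has $\lambda_\star\in I^\theta_{j'}$ or $\lambda_\star=\lambda^\theta_{j'}\pm2$, and in both cases Lemma \ref{lemma_relations}(a)--(b) gives $S_1\v\P^\theta_{j'}=0$, hence also $S_2\v\P^\theta_{j'}=0$. Moreover $\beta^\theta_j(\lambda_\star-\kappa^2)^{-2}$ has a finite limit as $\kappa\to0$ unless $\lambda_\star=\lambda^\theta_j\pm2$, and when it does the continuity of $\kappa\mapsto\M^\theta(\lambda_\star,\kappa)$ at $\kappa=0$ finishes the case. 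So the only nontrivial point is $\lambda_\star=\lambda^\theta_j\pm2$, where $j\in\N_{\lambda_\star}$, where the approach is one-sided (the side keeping $\lambda_\star-\kappa^2\notin I^\theta_j$), and where, by the computation in the proof of Proposition \ref{Prop_asymp}(i), $\beta^\theta_j(\lambda_\star-\kappa^2)^{-2}=\big(|\kappa|\,|\vartheta_j(\kappa)|\big)^{-1}$ with $|\vartheta_j(\kappa)|\to2$. There I would plug the rearranged expansion \eqref{eq_grosse} into $\P^\theta_j\v\,\M^\theta(\lambda_\star,\kappa)\,\v\P^\theta_{j'}$ and use: $\P^\theta_j\v S_0=0$ (Lemma \ref{lemma_relations}(a), so $\P^\theta_j\v$ annihilates $S_0,S_1,S_2$); $S_1\v\P^\theta_{j'}=0=S_2\v\P^\theta_{j'}$; that each $C_{\ell m}(\kappa)\in\Oas(\kappa)$ (Lemma \ref{lemme_com} and the remark after Theorem \ref{thm_cont}); and, crucially, that the commutators involving $S_2$ — namely $C_{20},C_{21},C_{22}$ — are in fact $\Oas(\kappa^2)$, a consequence of Lemma \ref{lemma_relations}(c)--(d) (the relations $M_1(0)S_2=0=S_2M_1(0)$ propagate to $M_2(0)S_2=0=S_2M_2(0)$ and annihilate the order-$\kappa$ coefficient of these commutators). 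Since in \eqref{eq_grosse} the projections $S_\ell$ sit at the two ends of each summand, and the inverses $(I_m(\kappa)+S_m)^{-1}$ and $I_3(\kappa)^{-1}$ are uniformly bounded, these facts make every term $\Oas(\kappa)$, so the product with $\beta^\theta_j(\lambda_\star-\kappa^2)^{-2}\sim\tfrac12|\kappa|^{-1}$ converges. This is a close analogue of the proof of \cite[Lemma~5.3]{RT17}.

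The step I expect to be the main obstacle is precisely this power counting in \eqref{eq_grosse} at $\lambda_\star=\lambda^\theta_j\pm2$: one must verify line by line in \eqref{eq_grosse}, including the most singular term carrying $\kappa^{-2}$, that the cancellations produced by $\P^\theta_j\v S_\ell=0$ and $S_\ell\v\P^\theta_{j'}=0$ (for $\ell\ge1$), reinforced by the extra order of vanishing of $C_{20},C_{21},C_{22}$ — and, in the "band-touching" configuration $\lambda^\theta_j+2=\lambda^\theta_{j'}-2$ (or $\lambda^\theta_j-2=\lambda^\theta_{j'}+2$), by the additional relation $S_0\v\P^\theta_{j'}=0$ of Lemma \ref{lemma_relations}(a) — really lower each summand to order $\kappa$, just enough to beat the $|\kappa|^{-1}$ blow-up of $\beta^\theta_j$. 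Once this bookkeeping is carried out, collecting the limits from the three cases yields the continuous extension on $\overline{I^\theta_{j'}\setminus I^\theta_j}$.
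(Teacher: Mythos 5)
Your overall strategy coincides with the paper's: reduce to limits at the points of $\overline{I_{j'}^\theta\setminus I_j^\theta}$ lying in $\T^\theta\cup\sigma_{\rm p}(H^\theta)$, use the expansion \eqref{eq_expansion_2} together with $S\v\;\!\P_{j'}^\theta=0$ at embedded eigenvalues, and use the rearranged expansion \eqref{eq_grosse} together with Lemma \ref{lemma_relations} and the orders of the commutators $C_{\ell m}$ at thresholds; the eigenvalue case is handled correctly and essentially as in the paper. However, one step fails as stated. In the threshold case you claim that whenever $\beta_j^\theta(\lambda_\star-\kappa^2)^{-2}$ stays bounded (i.e. $\lambda_\star\ne\lambda_j^\theta\pm2$), ``the continuity of $\kappa\mapsto\M^\theta(\lambda_\star,\kappa)$ at $\kappa=0$ finishes the case''. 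But for $\lambda_\star\in\T^\theta$ the operator $\M^\theta(\lambda_\star,\kappa)$ is in general \emph{not} continuous at $\kappa=0$: Proposition \ref{Prop_asymp} only extends the function to $\widetilde O(\varepsilon)$, which excludes the origin, and the expansion \eqref{eq_sol_1} carries genuine $\kappa^{-1}$ and $\kappa^{-2}$ singularities whenever $S_1\ne0$ or $S_2\ne0$. What rescues these cases is precisely the right multiplication by $\v\;\!\P_{j'}^\theta$: the relations $S_1\v\;\!\P_{j'}^\theta=0=S_2\v\;\!\P_{j'}^\theta$ (which you do record) combined with $C_{00},C_{10},C_{11}\in\Oas(\kappa)$, $C_{21}\in\Oas(\kappa^2)$ and $C_{20}\in\Oas(\kappa^3)$ make the right-hand factors of the singular terms of \eqref{eq_grosse} of order $\Oas(\kappa)$ and $\Oas(\kappa^2)$ respectively. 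In other words, the line-by-line power counting you reserve for $\lambda_\star=\lambda_j^\theta\pm2$ must also be carried out at every other threshold of $\overline{I_{j'}^\theta\setminus I_j^\theta}$; this is exactly what the paper does in its ``cases 1 and 3''.

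A more minor point: your claim that $C_{22}(\kappa)\in\Oas(\kappa^2)$ does not follow from Lemma \ref{lemma_relations}(c)--(d). One has $C_{22}(\kappa)=\kappa\;\!\big(I_2(\kappa)+S_2\big)^{-1}\big[M_3(\kappa),S_2\big]\big(I_2(\kappa)+S_2\big)^{-1}$, and the relations $M_1(0)S_2=0=S_2M_1(0)$ propagate to $M_2(0)$ (which is what yields $C_{21}\in\Oas(\kappa^2)$) but say nothing about $M_3(0)$; only $C_{22}\in\Oas(\kappa)$ is available. Fortunately $\Oas(\kappa)$ suffices: in \eqref{eq_grosse} every occurrence of $C_{22}$ is multiplied by a factor which is itself $\Oas(\kappa)$ after sandwiching between $\P_j^\theta\v$ and $\v\;\!\P_{j'}^\theta$, so the bookkeeping closes anyway. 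With these two corrections your argument matches the paper's proof.
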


The second lemma deals with a factor in the remainder term \eqref{eq_remainder}. For
its proof (also given in the Appendix) and for later use, we recall that the dilation
group $\{V_t\}_{t\in \R}$ given by $V_tf:=\e^{t/2}f(\e^t\;\!\cdot\;\!)$,
$f\in\ltwo\big((0,\infty)\big)$, is a strongly continuous unitary group in
$\ltwo\big((0,\infty)\big)$ with self-adjoint generator denoted by $A_+$.

\begin{Lemma}\label{lemma_compact}
Let $\theta\in[0,2\pi]$ and $j,j'\in\{1,\dots,N\}$ be such that either
$\lambda_{j'}^\theta<\lambda_j^\theta$ and $\lambda_j^\theta-2<\lambda_{j'}^\theta+2$,
or $\lambda_j^\theta<\lambda_{j'}^\theta$ and
$\lambda_{j'}^\theta-2<\lambda_j^\theta+2$. Then, the integral operator $\vartheta$ on
$C_{\rm c}^\infty(I_j^\theta)\subset\ltwo(I_j^\theta)$ given by
$$
(\vartheta f)(\lambda)
:=\int_{I_j^\theta}\tfrac1{\mu-\lambda}\;\!\beta_j^\theta(\lambda)^2
\beta_{j'}^\theta(\lambda)^{-1}\beta_j^\theta(\mu)^{-1}f(\mu)\;\!\d\mu,
\quad f\in C_{\rm c}^\infty(I_j^\theta),~\lambda\in I_{j'}^\theta\setminus I_j^\theta,
$$
extends continuously to a compact operator from $\ltwo(I_j^\theta)$ to
$\ltwo(I_{j'}^\theta\setminus I_j^\theta)$.
\end{Lemma}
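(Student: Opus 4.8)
The plan is to prove the stronger statement that $\vartheta$ extends to a \emph{Hilbert--Schmidt} operator, which is automatically compact. Write
\[
K(\lambda,\mu):=\tfrac1{\mu-\lambda}\;\!\beta_j^\theta(\lambda)^2\;\!
\beta_{j'}^\theta(\lambda)^{-1}\;\!\beta_j^\theta(\mu)^{-1},\quad
\lambda\in I_{j'}^\theta\setminus I_j^\theta,~\mu\in I_j^\theta,
\]
for the integral kernel of $\vartheta$; it suffices to check that
$K\in\ltwo\big((I_{j'}^\theta\setminus I_j^\theta)\times I_j^\theta\big)$, since then $K$ defines a Hilbert--Schmidt operator which, by density of $C^\infty_{\rm c}(I_j^\theta)$ and a direct check on such functions, coincides with $\vartheta$. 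By symmetry it is enough to treat the first case of the hypothesis, $\lambda_{j'}^\theta<\lambda_j^\theta$ and $\lambda_j^\theta-2<\lambda_{j'}^\theta+2$, in which $I_{j'}^\theta\setminus I_j^\theta=(\lambda_{j'}^\theta-2,\lambda_j^\theta-2]$; note that then $\mu-\lambda>0$ everywhere on the domain, so $K$ is continuous in the interior and its singularities all sit on the boundary. The second case, $\lambda_j^\theta<\lambda_{j'}^\theta$ and $\lambda_{j'}^\theta-2<\lambda_j^\theta+2$, is handled identically, the role of the shared endpoint $\lambda_j^\theta-2$ being now played by $\lambda_j^\theta+2$.

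Fix a small $\delta>0$ and split the domain into three pieces: $A=\{\lambda\le\lambda_j^\theta-2-\delta\}$; the corner $B=\{\lambda>\lambda_j^\theta-2-\delta,~\mu<\lambda_j^\theta-2+\delta\}$; and $C=\{\lambda>\lambda_j^\theta-2-\delta,~\mu\ge\lambda_j^\theta-2+\delta\}$. On $A$ and $C$ one has $\mu-\lambda\ge\delta$ and $\beta_j^\theta(\lambda)^2$ bounded, so $|K|\le C\;\!\beta_{j'}^\theta(\lambda)^{-1}\beta_j^\theta(\mu)^{-1}$; since $\beta_{j'}^\theta(\cdot)^{-2}$ and $\beta_j^\theta(\cdot)^{-2}$ are integrable over $I_{j'}^\theta$ and $I_j^\theta$ respectively (each is of the order of the inverse square root of the distance to a band endpoint), the contributions $\iint_A|K|^2$ and $\iint_C|K|^2$ are finite (on $C$ the $\lambda$-integration is moreover over a bounded interval and $\beta_{j'}^\theta(\lambda)^{-1}$ is bounded). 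The essential region is $B$, and here the \emph{only} place where the hypothesis enters is that the overlap condition $\lambda_{j'}^\theta-2<\lambda_j^\theta-2<\lambda_{j'}^\theta+2$ puts $\lambda_j^\theta-2$ in the interior of $I_{j'}^\theta$, so that $\beta_{j'}^\theta(\lambda)^{-1}$ stays bounded near the corner and contributes no competing singularity.

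It remains to bound $\iint_B|K|^2$. Setting $x:=(\lambda_j^\theta-2)-\lambda$ and $y:=\mu-(\lambda_j^\theta-2)$, so that $x,y\in(0,\delta)$ and $\mu-\lambda=x+y$, the identities $\beta_j^\theta(\lambda)^4=x(4+x)$ and $\beta_j^\theta(\mu)^4=y(4-y)$, together with the boundedness of $\beta_{j'}^\theta(\lambda)^{-1}$ on $B$, give a pointwise estimate $|K(\lambda,\mu)|^2\le C\,x\,y^{-1/2}(x+y)^{-2}$, and the proof is reduced to the convergence of
\[
\int_0^\delta\!\!\int_0^\delta\frac{x\,y^{-1/2}}{(x+y)^2}\;\!\d x\;\!\d y.
\]
I would establish this by integrating first in $x$, where an elementary computation gives $\int_0^\delta\frac{x}{(x+y)^2}\,\d x=\ln\!\big(\tfrac{\delta+y}{y}\big)-1+\tfrac{y}{\delta+y}$, which is $\O(1+|\ln y|)$ as $y\to0$, and then using that $y\mapsto y^{-1/2}(1+|\ln y|)$ is integrable near $0$. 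Putting the three regions together yields $K\in\ltwo$, hence $\vartheta$ is Hilbert--Schmidt and in particular compact. The one genuinely delicate point is this final, borderline-logarithmic estimate at the common endpoint of the two bands, where the singularities $(\mu-\lambda)^{-1}$ and $\beta_j^\theta(\mu)^{-1}$ collide and are tamed exactly by the vanishing of the factor $\beta_j^\theta(\lambda)^2$ in the numerator.
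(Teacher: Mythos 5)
Your proposal is correct, and it takes a genuinely different route from the paper. The paper first conjugates by the unitaries $U_1,U_2$ to move the kernel to $(0,\alpha)\times(0,4)$, splits it with smooth cutoffs $\eta,\eta^\bot$ into four pieces, observes that three of them are Hilbert--Schmidt by inspection, and then handles the corner piece not by an $\ltwo$ estimate but by factoring it as $m(X_+)\varphi(A_+)\cdot(\hbox{bounded})$ with $m\in C_0\big((0,\infty)\big)$ and $\varphi\in C_0(\R)$ a function of the dilation generator $A_+$, invoking the compactness of $m(X_+)\varphi(A_+)$. You instead show directly that the full kernel is square-integrable, the only delicate point being the corner, where your homogeneity count is right: the squared kernel is $\O\big(x\,y^{-1/2}(x+y)^{-2}\big)$, homogeneous of degree $-3/2>-2$, hence integrable (your explicit $x$-integration giving an $\O(1+|\ln y|)$ bound is fine), and the hypothesis $\lambda_j^\theta-2<\lambda_{j'}^\theta+2$ enters exactly where you say, to keep $\beta_{j'}^\theta(\lambda)^{-1}$ bounded at the corner. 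Your argument is more elementary and yields the stronger conclusion that $\vartheta$ is Hilbert--Schmidt. What the paper's heavier machinery buys is reusability: the same factorization $m(X_+)\varphi(A_+)$ is recycled in Proposition \ref{prop_exceptional} for the exceptional case $\lambda_j^0-2=\lambda_{j'}^0+2$, where $\beta_{j'}^\theta(\lambda)^{-1}$ also blows up at the corner, the squared kernel becomes homogeneous of degree $-2$, the Hilbert--Schmidt property genuinely fails, and compactness is recovered only from the extra vanishing of the scattering factor $\tilde n^0_{N,N/2}$ at the touching energy --- a situation your direct $\ltwo$ computation would not survive, so the two proofs are complementary rather than interchangeable across the whole paper.
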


\begin{Remark}\label{remark_bad}
The proof of Lemma \ref{lemma_compact} does not work in the exceptional cases
$\lambda_{j'}^\theta<\lambda_j^\theta$, $\lambda_j^\theta-2=\lambda_{j'}^\theta+2$ and
$\lambda_j^\theta<\lambda_{j'}^\theta$, $\lambda_{j'}^\theta-2=\lambda_j^\theta+2$.
These exceptional cases will be discussed in the next
section. A direct inspection shows that the condition
$\lambda_j^\theta-2=\lambda_{j'}^\theta+2$ is verified if and only if $\theta=0$,
$N\in2\N$, and $(j,j')=(N,N/2)$ (it can also be verified for some $N$, $j$, and $j'$
when $\theta=2\pi$, but this gives nothing new since the cases $\theta=0$ and
$\theta=2\pi$ are equivalent, see Remark \ref{rem_A_theta}(a)).
\end{Remark}

Putting together the results of both lemmas leads to the compactness of the operator
defined by the remainder term \eqref{eq_remainder}:

\begin{Proposition}\label{prop_NUS2}
If $\theta\ne0$ or $N\notin2\N$, then the operator defined by \eqref{eq_remainder}
extends continuously to a compact operator in $\Hrond^\theta$.
\end{Proposition}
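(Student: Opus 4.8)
The strategy is to pass to the limit $\varepsilon\searrow0$ inside \eqref{eq_remainder} and to recognise the result as a finite sum of off-diagonal integral operators, each of which is shown to be compact by combining Lemmas \ref{lemma_continuity} and \ref{lemma_compact}. Let $\xi,\zeta\in\Drond^\theta$. Since the supports of $\xi$ and $\zeta$ are compact subsets of $\bigcup_kI_k^\theta$ avoiding $\T^\theta\cup\sigma_{\rm p}(H^\theta)$, the boundary value $M^\theta(\lambda+i\;\!0)=\M^\theta(\lambda,0)$ exists on the relevant range of $\lambda$ by Proposition \ref{Prop_asymp} and Remark \ref{remark_no_accumu}; moreover, for $\lambda\notin I^\theta_j$ the factor $\int_{I_j^\theta}\frac{\beta_j^\theta(\mu)^{-1}}{\mu-\lambda+i\varepsilon}\zeta_j(\mu)\,\d\mu$ has no principal-value singularity as $\varepsilon\to0$, and $\gamma_0(\F^\theta)^*\delta_\varepsilon(X^\theta-\lambda)\xi$ converges strongly by Lemma \ref{lemma_short}(b). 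Hence the limit may be taken under the integral sign. Splitting $\sigma(H_0^\theta)\setminus I_j^\theta$ (up to a Lebesgue-null set) into the pieces $I_k^\theta\setminus I_j^\theta$ with $\lambda_k^\theta\ne\lambda_j^\theta$, and using $\xi_k=\P_k^\theta\xi_k$, $\zeta_j=\P_j^\theta\zeta_j$, the term \eqref{eq_remainder} becomes a finite sum, over pairs $(j,k)$ with $\lambda_j^\theta\ne\lambda_k^\theta$, of the bilinear forms
\[
-\pi^{-1}\int_{I_k^\theta\setminus I_j^\theta}\!\!\int_{I_j^\theta}
\frac{\beta_k^\theta(\lambda)^{-1}\beta_j^\theta(\mu)^{-1}}{\mu-\lambda}
\big\langle\P_j^\theta\v\,\M^\theta(\lambda,0)\,\v\,\P_k^\theta\,\xi_k(\lambda),\,\zeta_j(\mu)\big\rangle_{\C^N}\,\d\mu\,\d\lambda .
\]
Because $\F^\theta$ is unitary, it suffices to show that each such pair $(j,k)$ contributes an operator $T_{jk}$ in $\Hrond^\theta$ (sending $\xi$ to the element supported in the $j$-th band with $\P_j^\theta$-component $\mu\mapsto-\pi^{-1}\int_{I_k^\theta\setminus I_j^\theta}\frac{\beta_k^\theta(\lambda)^{-1}\beta_j^\theta(\mu)^{-1}}{\mu-\lambda}\P_j^\theta\v\,\M^\theta(\lambda,0)\,\v\,\P_k^\theta\,\xi_k(\lambda)\,\d\lambda$, and depending only on the $k$-th band component of $\xi$) which extends continuously to a compact operator.

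For this I would factor the kernel of $T_{jk}$ as the product of the scalar kernel $\frac{\beta_j^\theta(\lambda)^2\,\beta_k^\theta(\lambda)^{-1}\,\beta_j^\theta(\mu)^{-1}}{\mu-\lambda}$ and the matrix-valued function $\lambda\mapsto\beta_j^\theta(\lambda)^{-2}\,\P_j^\theta\v\,\M^\theta(\lambda,0)\,\v\,\P_k^\theta$. Since $\lambda_j^\theta\ne\lambda_k^\theta$, Lemma \ref{lemma_continuity} (with $j'=k$) shows that the latter extends to a continuous, hence bounded, $\B(\C^N)$-valued function on $\overline{I_k^\theta\setminus I_j^\theta}$, so it defines a bounded multiplication operator on $\ltwo(I_k^\theta;\C^N)$. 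The former is exactly the integral kernel of the adjoint of the operator $\vartheta$ of Lemma \ref{lemma_compact} (again with $j'=k$), which is compact as soon as the bands $I_j^\theta$ and $I_k^\theta$ overlap strictly. Writing $T_{jk}$ as the composition of the bounded restriction $\xi\mapsto\xi_k$, the bounded multiplication operator just described, the compact operator $\vartheta^*\otimes1_N$, and the bounded embedding $\ltwo(I_j^\theta;\P_j^\theta\C^N)\hookrightarrow\Hrond^\theta$ (together with the obvious $\P^\theta$- and $\v$-multiplications), we conclude that $T_{jk}$ is compact.

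It remains to verify that the hypothesis of Lemma \ref{lemma_compact} holds for every pair $(j,k)$ occurring in the sum. Since $\lambda_j^\theta=2\cos\big(\frac{\theta+2\pi j}N\big)\in[-2,2]$ for all $j$, two distinct bands $I_j^\theta$ and $I_k^\theta$ always overlap or are tangent; by Remark \ref{remark_bad} they are merely tangent exactly when $\theta=0$, $N\in2\N$ and $\{j,k\}=\{N,N/2\}$ (the case $\theta=2\pi$ reducing to $\theta=0$ since $A^{2\pi}=A^0$, see Remark \ref{rem_A_theta}(a)). The assumption $\theta\ne0$ or $N\notin2\N$ therefore rules out tangency, so Lemma \ref{lemma_compact} applies to every $(j,k)$; pairs with $\lambda_j^\theta=\lambda_k^\theta$ contribute nothing since then $I_k^\theta\setminus I_j^\theta=\varnothing$. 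As a finite sum of operators each extending to a compact operator, the operator defined by \eqref{eq_remainder} thus extends continuously to a compact operator in $\Hrond^\theta$.

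The only delicate point in this assembly is the weight bookkeeping near the band edges: one must peel off exactly one factor $\beta_j^\theta(\lambda)^2$ so that the matrix part is precisely the object controlled by Lemma \ref{lemma_continuity} while the scalar remainder is precisely the kernel controlled by Lemma \ref{lemma_compact}. The genuinely analytic content — the continuity up to the boundary of $\lambda\mapsto\beta_j^\theta(\lambda)^{-2}\P_j^\theta\v\,\M^\theta(\lambda,0)\,\v\,\P_{j'}^\theta$ and the compactness of $\vartheta$, both resting on the vanishing of $\P_j^\theta\v\,\M^\theta(\lambda,0)\,\v\,\P_{j'}^\theta$ at the thresholds of the $j$-th band — is carried by those two lemmas, whose proofs are deferred to the Appendix; here it only remains to put the pieces together.
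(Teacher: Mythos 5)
Your proposal is correct and follows essentially the same route as the paper: rewrite \eqref{eq_remainder} via Lemma \ref{lemma_short}(b) as a finite sum over off-diagonal pairs, peel off the factor $\beta_j^\theta(\lambda)^2$ so that the matrix part $\beta_j^\theta(\lambda)^{-2}\P_j^\theta\v\,\M^\theta(\lambda,0)\,\v\,\P_{j'}^\theta$ is exactly the bounded multiplication operator furnished by Lemma \ref{lemma_continuity} while the scalar kernel is exactly the compact operator of Lemma \ref{lemma_compact}, and then sum the resulting compact compositions. Your verification that the non-overlap (tangency) situation excluded by Lemma \ref{lemma_compact} occurs only for $\theta=0$ (or equivalently $2\pi$), $N$ even, $\{j,j'\}=\{N,N/2\}$ matches Remark \ref{remark_bad}, so the argument is complete.
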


\begin{proof}
Let $\xi,\zeta\in\Drond^\theta$. Then, Lemma \ref{lemma_short}(b) implies that
\eqref{eq_remainder} is equal to
\begin{align*}
&-\pi^{-1/2}\sum_{j=1}^N\int\limits_{\sigma(H^\theta_0)\setminus I^\theta_j}
\lim_{\varepsilon\searrow0}\bigg\langle\v M^\theta(\lambda+i\varepsilon)\v\;\!\gamma_0
(\F^{\theta})^*\delta_\varepsilon(X^\theta-\lambda)\xi,
\int_{I_j^\theta}\tfrac{\beta_j^\theta(\mu)^{-1}}{\mu-\lambda+i\varepsilon}\;\!
\zeta_j(\mu)\;\!\d\mu\bigg\rangle_{\C^N}\d\lambda\\
&=\pi^{-1}\sum_{j=1}^N\int\limits_{\sigma(H^\theta_0)\setminus I^\theta_j}\bigg\langle\v
\M^\theta(\lambda,0)\v\;\!\sum_{\{k\mid\lambda\in I^\theta_k\}}
\beta_k^\theta(\lambda)^{-1}\xi_k(\lambda),\int_{I_j^\theta}
\tfrac{\beta_j^\theta(\mu)^{-1}}{\mu-\lambda}\;\!\zeta_j(\mu)\;\!\d\mu
\bigg\rangle_{\C^N}\d\lambda\\
&=\pi^{-1}\sum_{j,\,j'=1}^N\int\limits_{I_{j'}^\theta\setminus I^\theta_j}\bigg\langle\v
\M^\theta(\lambda,0)\v\;\!\beta_{j'}^\theta(\lambda)^{-1}\xi_{j'}(\lambda),
\int_{I_j^\theta}\tfrac{\beta_j^\theta(\mu)^{-1}}{\mu-\lambda}\;\!\zeta_j(\mu)\;\!\d\mu
\bigg\rangle_{\C^N}\d\lambda\\
&=\pi^{-1}\sum_{j,\,j'=1}^N\int\limits_{I_{j'}^\theta\setminus I^\theta_j}\bigg\langle
\beta_j^\theta(\lambda)^{-2}\P_j^\theta\v\M^\theta(\lambda,0)\v\;\!\P_{j'}^\theta
\xi_{j'}(\lambda),\int_{I_j^\theta}\tfrac{\beta_j^\theta(\lambda)^2
\beta_{j'}^\theta(\lambda)^{-1}\beta_j^\theta(\mu)^{-1}}{\mu-\lambda}\;\!\zeta_j(\mu)
\;\!\d\mu\bigg\rangle_{\C^N}\d\lambda.
\end{align*}
Now, we know from Lemma \ref{lemma_continuity} that the function
\begin{equation}\label{eq_def_n}
I^\theta_{j'}\setminus\big(I_j^\theta\cup\T^\theta\cup\sigma_{\rm p}(H^\theta)\big)
\ni\lambda \mapsto n_{j,j'}^\theta(\lambda)
:=\beta_j^\theta(\lambda)^{-2}\P_j^\theta\v\M^\theta(\lambda,0)\v\;\!\P_{j'}^\theta
\end{equation}
extends to a continuous function on $\overline{I_{j'}^\theta\setminus I_j^\theta}$. We
denote by $N_{j,j'}^\theta$ the corresponding bounded multiplication operator from
$\ltwo\big(I_{j'}^\theta\setminus I_j^\theta;\P_{j'}^\theta\C^N\big)$ to
$\ltwo\big(I_{j'}^\theta\setminus I_j^\theta;\P_{j}^\theta\C^N\big)$. Furthermore,
Lemma \ref{lemma_compact} implies that the integral operator $\vartheta_{j,j'}^\theta$
on
$
C^\infty_{\rm c}\big(I^\theta_j\setminus
\big(\T^\theta\cup\sigma_{\rm p}(H^\theta)\big);\P^\theta_j\C^N\big)
$
given by
$$
\big(\vartheta_{j,j'}^\theta\zeta_j\big)(\lambda)
:=\int_{I_j^\theta}\tfrac{\beta_j^\theta(\lambda)^2\beta_{j'}^\theta(\lambda)^{-1}
\beta_j^\theta(\mu)^{-1}}{\mu-\lambda}\;\!\zeta_j(\mu)\;\!\d\mu,
\quad\zeta_j\in C^\infty_{\rm c}\big(I^\theta_j\setminus
\big(\T^\theta\cup\sigma_{\rm p}(H^\theta)\big);\P^\theta_j\C^N\big),
~\lambda\in I_{j'}^\theta\setminus I_j^\theta,
$$
extends continuously to a compact operator from
$\ltwo\big(I_j^\theta;\P_j^\theta\C^N\big)$ to
$\ltwo\big(I_{j'}^\theta\setminus I_j^\theta;\P_j^\theta\C^N\big)$. Therefore, we
obtain that
\begin{align*}
\eqref{eq_remainder}
&=\pi^{-1}\sum_{j,\,j'=1}^N\int_{I_{j'}^\theta\setminus I^\theta_j}
\left\langle n_{j,j'}^\theta(\lambda)\xi_{j'}(\lambda),
\big(\vartheta_{j,j'}^\theta\zeta_j\big)(\lambda)\right\rangle_{\C^N}\d\lambda\\
&=\pi^{-1}\sum_{j,\,j'=1}^N\left\langle N_{j,j'}^\theta(1_{j,j'}^\theta)^*
\xi_{j'}, \vartheta_{j,j'}^\theta\zeta_j
\right\rangle_{\ltwo(I_{j'}^\theta\setminus I_j^\theta;\P_j^\theta\C^N)}\\
&=\sum_{j=1}^N\left\langle\pi^{-1}\sum_{j'=1}^N(\vartheta_{j,j'}^\theta)^*
N_{j,j'}^\theta(1_{j,j'}^\theta)^*\xi_{j'},\zeta_j
\right\rangle_{\ltwo(I_j^\theta;\P_j^\theta\C^N)}\\
&=\left\langle k^\theta\xi,\zeta\right\rangle_{\oplus_{j=1}^N
\ltwo(I_j^\theta;\P_j^\theta\C^N)},
\end{align*}
with $1_{j,j'}^\theta$ the inclusion of
$\ltwo\big(I_{j'}^\theta\setminus I_j^\theta;\P_{j'}^\theta\C^N\big)$ into
$\ltwo\big(I_{j'}^\theta;\P_{j'}^\theta\C^N\big)$ and $k^\theta$ the compact operator
from $\Hrond^\theta$ to $\oplus_{j=1}^N\ltwo(I_j^\theta;\P_j^\theta\C^N)$ given by
$$
\big(k^\theta\xi\big)_j
:=\pi^{-1}\sum_{j'=1}^N(\vartheta_{j,j'}^\theta)^*N_{j,j'}^\theta
(1_{j,j'}^\theta)^*\xi_{j'},\quad\xi\in\Hrond^\theta,~j\in\{1,\dots,N\}.
$$
Since the Hilbert spaces $\oplus_{j=1}^N\ltwo(I_j^\theta;\P_j^\theta\C^N)$ and
$\Hrond^\theta$ are isomorphic, this implies that the operator defined by
\eqref{eq_remainder} extends continuously to a compact operator in $\Hrond^\theta$.
\end{proof}

\subsection{Exceptional case}\label{sec_except}

In this section, we consider the exceptional cases
$\lambda_j^\theta-2=\lambda_{j'}^\theta+2$ and
$\lambda_{j'}^\theta-2=\lambda_{j}^\theta+2$, which take place for the values
$\theta=0$, $N\in2\N$, $(j,j')=(N,N/2)$ (first case) and $(j,j')=(N/2,N)$ (second
case). As mentioned in Remark \ref{remark_bad}, the proof of Lemma \ref{lemma_compact}
does not work in these cases, and therefore one cannot infer that the operator defined
by remainder term \eqref{eq_remainder} is compact. Further analysis is necessary, and
this is precisely the content of this section.

First, we recall from Remark \ref{rem_A_theta} that the eigenvalues
$\lambda_{N/2}^0,\lambda_N^0\in\R$ of $A^0$ are $\lambda_{N/2}^0=-2$, $\lambda_N^0=2$
and the eigenvectors $\xi_{N/2}^0,\xi_N^0\in\C^N$ of $A^0$ have components
\begin{equation}\label{eq_vectors}
\big(\xi_{N/2}^0\big)_j=(-1)^j
\quad\hbox{and}\quad
\big(\xi_N^0)_j=1,\quad j\in\{1,\dots,N\}.
\end{equation}
Also, we define $\L:=\sp\{\v\;\!\xi_{N/2}^0,\v\;\!\xi_N^0\}$, and note that $\L$ is a
subspace of $\C$ of (complex) dimension $1$ or $2$ because $\v\ne0$. We start by
determining the range of the projections $S_0,S_1,S_2$ appearing in the asymptotic
expansion \eqref{eq_sol_1} when $\lambda=0:$

\begin{Lemma}\label{lemma_exceptional}
Let $\theta=0$, $N\in2\N$, and $\lambda=0$. Then, $S_0\C^N=\L^\bot$ and $S_1=S_2=0$.
\end{Lemma}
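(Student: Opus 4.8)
The plan is to carry out part~(i) of the proof of Proposition~\ref{Prop_asymp} explicitly for the data $\theta=0$, $N\in2\N$, $\lambda=0$, and to read off the ranges of the projections $S_0$, $S_1$, $S_2$ one after the other. First I would note that $\lambda_N^0=2$ and $\lambda_{N/2}^0=-2$, so that $0=\lambda_N^0-2=\lambda_{N/2}^0+2$ is a threshold, $\N_0=\{N/2,N\}$, and part~(i) of that proof applies. Formula~\eqref{eq_I_0_0} then gives $I_0(0)=\tfrac12\v\P_N^0\v+\tfrac i2\v\P_{N/2}^0\v$. Since $\v$ is real, each $\v\P_j^0\v$ is self-adjoint and nonnegative; and since $\v\ne0$, the vectors $\v\xi_N^0$ and $\v\xi_{N/2}^0$ of \eqref{eq_vectors} are nonzero, so $\v\P_N^0\v$ and $\v\P_{N/2}^0\v$ are nonzero rank-one operators with ranges $\sp\{\v\xi_N^0\}$ and $\sp\{\v\xi_{N/2}^0\}$. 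Writing $I_0(0)=C+iD$ with $C,D\ge0$ self-adjoint, one has $\ker(I_0(0))=\ker C\cap\ker D=(\v\xi_N^0)^\bot\cap(\v\xi_{N/2}^0)^\bot=\L^\bot$, which yields $S_0\C^N=\L^\bot$ whether $\dim\L$ equals $1$ or $2$.

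For $S_1$ I would first simplify $M_1(0)$: since $|\lambda_j^0|\le2$ for all $j$, the sums over $\{j\mid0<\lambda_j^0-2\}$ and over $\{j\mid0>\lambda_j^0+2\}$ are empty, so $M_1(0)=\u+i\sum_{j\notin\{N/2,N\}}\beta_j^0(0)^{-2}\,\v\P_j^0\v$ with $\beta_j^0(0)^2=\sqrt{4-(\lambda_j^0)^2}>0$, whence $\re(M_1(0))=\u$ and $\im(M_1(0))\ge0$. Next I would take $\xi\in S_0\C^N=\L^\bot$ with $I_1(0)\xi=S_0M_1(0)S_0\xi=0$, pair with $\xi$ and use $S_0\xi=\xi$ to split into real and imaginary parts, obtaining $\langle\u\xi,\xi\rangle=0$ and $\sum_{j\notin\{N/2,N\}}\beta_j^0(0)^{-2}\,\|\P_j^0\v\xi\|^2=0$. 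The imaginary part forces $\P_j^0\v\xi=0$ for all $j\notin\{N/2,N\}$; the membership $\xi\in\L^\bot$, together with the fact that $\v$ is real, forces $\P_N^0\v\xi=\P_{N/2}^0\v\xi=0$ as well; and since the $\xi_j^0$ form an orthogonal basis of $\C^N$, the sum $\sum_{j=1}^N\P_j^0$ is the identity operator, so $\v\xi=0$. Hence $\xi$ is supported on $\{j\mid v(j)=0\}$, on which $\u=\sgn(\diag v)$ acts as the identity, so $0=\langle\u\xi,\xi\rangle=\|\xi\|^2$ and $\xi=0$. Thus $\ker(I_1(0))=\{0\}$, i.e. $S_1=0$; and since $I_2(0)$ is then the zero operator on the trivial space $S_1\C^N=\{0\}$, also $S_2=0$.

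The identification of $S_0$ and the reduction of $M_1(0)$ are routine; the step needing genuine care is the proof that $\ker(S_0M_1(0)S_0)=\{0\}$, where one must use three facts at once: the strict positivity of the coefficients $\beta_j^0(0)^{-2}$ of $\im(M_1(0))$ on the open channels $j\notin\{N/2,N\}$, which kills those channels; the orthogonality relations contained in $\xi\in\L^\bot$, which dispose of the two closed channels $j\in\{N/2,N\}$; and the sign structure of $\u$, which equals $+1$ exactly where $v$ vanishes — together these force $\xi=0$.
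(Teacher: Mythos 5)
Your proof is correct and follows essentially the same route as the paper's: identify $S_0\C^N=\L^\bot$ from the positivity of the real and imaginary parts of $I_0(0)$, then show that $\ker\big(S_0M_1(0)S_0\big)$ is trivial on $\L^\bot$, whence $S_1=0$ and trivially $S_2=0$. The only (valid) divergence is your final step: instead of the paper's componentwise inspection of the conditions $\xi\in\u\L$ and $\v\xi=0$ using \eqref{eq_vectors}, you extract $\v\xi=0$ and $\langle\u\xi,\xi\rangle=0$ from the vanishing quadratic form and conclude with the convention $\u_{jj}=+1$ on $\{j\mid v(j)=0\}$, which bypasses the explicit form of the eigenvectors at that stage.
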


\begin{proof}
(i) Since
\begin{equation}\label{eq_I(0)}
2\;\!I_0(0)
=\sum_{\{j\mid\lambda_j^\theta=2\}}\v\;\!\P_j^0\v
+i\sum_{\{j\mid\lambda_j^\theta=-2\}}\v\;\!\P_j^0\v
=\v\;\!\P^0_N\v+i\;\!\v\;\!\P^0_{N/2}\v,
\end{equation}
one infers that $\xi\in\ker\big(I_0(0)\big)$ if and only if $\xi\bot\v\xi^0_N$ and
$\xi\bot\v\xi^0_{N/2}$, which shows that $S_0\C^N=\L^\bot$.

(ii) One has $\xi\in S_1\C^N$ if and only if $\xi\in S_0\C^N$ and
$\xi\in\ker\big(S_0 M_1(0)S_0\big)$. But, since $\lambda_j^0-2=0\Leftrightarrow j=N$
and $\lambda_j^0+2=0\Leftrightarrow j=N/2$, we have
$$
S_0M_1(0)S_0
=S_0\u S_0+i\sum_{\{j\mid0\in I^0_j\}}
\tfrac1{\beta_j^0(0)^2}\;\!S_0\v\;\!\P^0_j\v S_0
=S_0\u S_0+i\sum_{j\ne N/2,N}\tfrac1{\beta_j^0(0)^2}\;\!S_0\v\;\!\P^0_j\v S_0.
$$
Therefore, $\xi\in S_1\C^N$ if and only if $\xi\in S_0\C^N$, $\xi\in\ker(S_0\u S_0)$,
and $\xi\in\ker(S_0\v\;\!\P^0_j\v S_0)$ for all $j\ne N/2,N$. Due to point (i), these
conditions hold if and only if $\xi\in\L^\bot$, $\u\xi\in\L$, and
$\v\xi=c_1\xi^0_{N/2}+c_2\xi^0_N$ for some $c_1,c_2\in\C$. But, since the first and
third conditions are equivalent to $\v\xi=0$, these three conditions reduce to
$\xi\in\u\L$ and $\v\xi=0$. Finally, a direct inspection taking into account the
formulas \eqref{eq_vectors} shows that this can be satisfied only if $\xi=0$. Thus
$S_1=0$, and so $S_2=0$ too.
\end{proof}

Now, consider the function studied in Lemma \ref{lemma_continuity} for the values
$\theta=0$, $N\in2\N$, $(j,j')=(N,N/2)$, when $\lambda\nearrow0$. Using the notation
$\lambda=0-\kappa^2$ with $\kappa>0$ and the equalities
$\P^0_N\v S_0=0=S_0\v\P^0_{N/2}$ and $S_1=S_2=0$ from Lemma \ref{lemma_relations}(a)
and Lemma \ref{lemma_exceptional}, we infer from \eqref{eq_grosse} that
\begin{align*}
&\beta^0_N(-\kappa^2)^{-2}\;\!\P^0_N\v\;\!\M^0(0,\kappa)\v\;\!\P^0_{N/2}\\
&=\kappa\;\!\beta^0_N(-\kappa^2)^{-2}\P^0_N\v\left(\big(I_0(\kappa)+S_0\big)^{-1}
-\tfrac1\kappa\;\!C_{00}(\kappa)S_0\;\!I_1(\kappa)^{-1}S_0C_{00}(\kappa)\right)
\v\;\!\P^0_{N/2}.
\end{align*}
Using then the expansions
$$
\kappa\;\!\beta^0_N(-\kappa^2)^{-2}=1/2+\O(\kappa^2),\quad
\big(I_0(\kappa)+S_0\big)^{-1}\v\;\!\P^0_{N/2}=I_0(0)^{-1}\v\;\!\P^0_{N/2}+\O(\kappa),
\quad C_{00}(\kappa)\in\O(\kappa),
$$
we get
\begin{equation}\label{eq_Wuhan3}
\beta^0_N(-\kappa^2)^{-2}\;\!\P^0_N\v\;\!\M^0(0,\kappa)\v\;\!\P^0_{N/2}
=\tfrac12\;\!\P_N^0\v I_0(0)^{-1}\v\;\!\P_{N/2}^0+\O(\kappa).
\end{equation}

Similarly, for the values $\theta=0$, $N\in2\N$, $(j,j')=(N/2,N)$, when
$\lambda\searrow0$, using the notation $\lambda=0-\kappa^2$ with $i\kappa>0$ and
arguments as above, we infer from \eqref{eq_grosse} that
\begin{align*}
&\beta^0_{N/2}(-\kappa^2)^{-2}\;\!\P^0_{N/2}\v\;\!\M^0(0,\kappa)\v\;\!\P^0_N\\
&=\kappa\;\!\beta^0_{N/2}(-\kappa^2)^{-2}\P^0_{N/2}\v
\left(\big(I_0(\kappa)+S_0\big)^{-1}-\tfrac1\kappa\;\!C_{00}(\kappa)S_0
\;\!I_1(\kappa)^{-1}S_0C_{00}(\kappa)\right)\v\;\!\P^0_N\\
&=-\tfrac i2\;\!\P_{N/2}^0\v I_0(0)^{-1}\v\;\!\P_N^0+\O(\kappa).
\end{align*}

\begin{Lemma}\label{lemma_independent}
Let $\theta=0$, $N\in2\N$, and $\lambda=0$. Then, the following statements are
equivalent:
\begin{enumerate}
\item[(i)] The vectors $\v\xi^0_N$ and $\v\xi^0_{N/2}$ are linearly independent,
\item[(ii)] $\P_N^0 \v I_0(0)^{-1}\v\;\!\P_{N/2}^0=0$,
\item[(iii)] $\P_{N/2}^0 \v I_0(0)^{-1}\v\;\!\P_N^0=0$.
\end{enumerate}
\end{Lemma}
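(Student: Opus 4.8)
The plan is to make the rank-one structure of $I_0(0)$ explicit, to reduce statements (ii) and (iii) to two orthogonality relations, and then to check these relations by hand. First I would set $a:=\v\;\!\xi_N^0$ and $b:=\v\;\!\xi^0_{N/2}$, which are nonzero \emph{real} vectors of $\C^N$ since $\v$ is a nonzero real diagonal matrix and, by \eqref{eq_vectors}, all entries of $\xi_N^0$ and $\xi^0_{N/2}$ equal $\pm1$. Because $\P^0_N$ and $\P^0_{N/2}$ are the rank-one orthogonal projections onto $\sp\{\xi_N^0\}$ and $\sp\{\xi^0_{N/2}\}$ and $\|\xi_N^0\|^2=\|\xi^0_{N/2}\|^2=N$, formula \eqref{eq_I(0)} reads
$$
I_0(0)\;\!\xi=\tfrac1{2N}\big(\langle a,\xi\rangle_{\C^N}\;\!a+i\;\!\langle b,\xi\rangle_{\C^N}\;\!b\big),\quad\xi\in\C^N,
$$
so that $\Ran\big(I_0(0)\big)\subseteq\L=\sp\{a,b\}$. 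By Lemma \ref{lemma_exceptional}, $\ker\big(I_0(0)\big)=S_0\C^N=\L^\bot$; hence $\Ran\big(I_0(0)\big)=\L$ and $I_0(0)$ restricts to a bijection of $\L$, whose inverse is the operator denoted $I_0(0)^{-1}$ in (ii)--(iii) and already used to obtain \eqref{eq_Wuhan3}. Using $\v=\v^*$ one has $\Ran\big(\v\;\!\P^0_{N/2}\big)=\C\;\!b$, $\Ran\big(\v\;\!\P^0_N\big)=\C\;\!a$, $\ker\big(\P^0_N\v\big)=\{a\}^\bot$ and $\ker\big(\P^0_{N/2}\v\big)=\{b\}^\bot$, so that (ii) is equivalent to $a\bot I_0(0)^{-1}b$ and (iii) to $b\bot I_0(0)^{-1}a$.

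Next I would prove (i)$\Rightarrow$(ii) and (i)$\Rightarrow$(iii). Assume $a$ and $b$ are linearly independent. Then $x:=I_0(0)^{-1}b\in\L$ satisfies $I_0(0)\;\!x=b$, i.e. $\tfrac1{2N}\big(\langle a,x\rangle_{\C^N}\;\!a+i\;\!\langle b,x\rangle_{\C^N}\;\!b\big)=b$; comparing coefficients in the basis $\{a,b\}$ of $\L$ forces $\langle a,x\rangle_{\C^N}=0$, i.e. $a\bot I_0(0)^{-1}b$, which is (ii). Symmetrically, $y:=I_0(0)^{-1}a\in\L$ satisfies $\tfrac1{2N}\big(\langle a,y\rangle_{\C^N}\;\!a+i\;\!\langle b,y\rangle_{\C^N}\;\!b\big)=a$, hence $\langle b,y\rangle_{\C^N}=0$, which is (iii).

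For the converse I would argue by contraposition. If $a$ and $b$ are linearly dependent, then, both being nonzero real vectors, $b=\mu\;\!a$ for some $\mu\in\R\setminus\{0\}$, so $\L=\C\;\!a$ and $I_0(0)\;\!\xi=\tfrac{1+i\mu^2}{2N}\langle a,\xi\rangle_{\C^N}\;\!a$; thus $I_0(0)$ acts on $\L$ as multiplication by $c:=\tfrac{(1+i\mu^2)\|a\|^2}{2N}$, which is nonzero since $\re(c)=\|a\|^2/(2N)>0$. Hence $I_0(0)^{-1}b=\mu\;\!c^{-1}a$ and $I_0(0)^{-1}a=c^{-1}a$ are nonzero multiples of $a$, so $\langle a,I_0(0)^{-1}b\rangle_{\C^N}\ne0$ and $\langle b,I_0(0)^{-1}a\rangle_{\C^N}\ne0$, i.e. both (ii) and (iii) fail. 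Together with the previous step this establishes the equivalence of (i), (ii) and (iii). The argument is elementary once the rank-one form of $I_0(0)$ is in hand; the one point deserving care — and the main (mild) obstacle — is to fix the meaning of $I_0(0)^{-1}$ as the inverse of $I_0(0)|_\L$ with $\L=\Ran(I_0(0))$, and to carry out cleanly the reduction of (ii) and (iii) to the two orthogonality statements above.
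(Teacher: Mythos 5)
Your proof is correct and follows essentially the same route as the paper: both exploit that $2I_0(0)=\v\P_N^0\v+i\;\!\v\P_{N/2}^0\v$ is diagonal in the basis $\{\v\xi_N^0,\v\xi_{N/2}^0\}$ of $\L$ when these vectors are independent (the paper phrases this via a change-of-basis operator $B$, you via direct coefficient comparison), and both prove the converse by contraposition with an explicit computation in the dependent case. Your explicit clarification that $I_0(0)^{-1}$ means $\big(I_0(0)|_\L\big)^{-1}$ matches the paper's usage.
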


\begin{proof}
(i)$\Rightarrow$(ii) Since $\v\xi^0_N$ and $\v\xi^0_{N/2}$ are linearly independent,
there exists an invertible operator $B\in\B(\L,\C^2)$ such that
$
B\v\xi^0_N=\left(\begin{smallmatrix}1\\ 0\end{smallmatrix}\right)$ and
$B\v\xi^0_{N/2}=\left(\begin{smallmatrix}0\\1\end{smallmatrix}\right)$.
It thus follows from \eqref{eq_I(0)} that
$$
2 B \;\!I_0(0) B^*
=\big(B\v\;\!\P^0_N\big)\big(B\v\;\!\P^0_N\big)^*
+i\big(B\v\;\!\P^0_{N/2}\big)\big(B\v\;\!\P^0_{N/2}\big)^*
=\begin{pmatrix}1&0\\0&0\end{pmatrix}
+\begin{pmatrix}0&0\\0&i\end{pmatrix}
=\begin{pmatrix}1&0\\0&i\end{pmatrix}.
$$
So, we obtain that
$$
(B^*)^{-1}I_0(0)^{-1}B^{-1}
=\big(B I_0(0) B^*\big)^{-1}
=2\begin{pmatrix}1&0\\0&i\end{pmatrix}^{-1}
=2\begin{pmatrix}1&0\\0&-i\end{pmatrix},
$$
which is equivalent to
$
\big(I_0(0)|_\L\big)^{-1}
=2B^*\left(\begin{smallmatrix}1&0\\0&-i\end{smallmatrix}\right)B
$.
Using this equality, we can then show that all the matrix coefficients of
$\P_N^0\v I_0(0)^{-1}\v\;\!\P_{N/2}^0$ are zero. Namely, for any
$j,j'\in\{1,\dots,N\}$ we have
\begin{align*}
\left\langle\xi^0_j,\big(\P_N^0 \v I_0(0)^{-1}\v\;\!\P_{N/2}^0\big)
\xi^0_{j'}\right\rangle_{\C^N}
&=\delta_{j,N}\;\!\delta_{j',N/2}\left\langle\v\xi_N^0,
I_0(0)^{-1}\v\xi^0_{N/2}\right\rangle_{\C^N}\\
&=2\;\!\delta_{j,N}\;\!\delta_{j',N/2}\left\langle B\v\xi_N^0,
\begin{pmatrix}1&0\\0&-i\end{pmatrix}B\v\xi^0_{N/2}\right\rangle_{\C^2}\\
&=2\;\!\delta_{j,N}\;\!\delta_{j',N/2}\left\langle\begin{pmatrix}1\\0\end{pmatrix},
\begin{pmatrix}1&0\\0&-i\end{pmatrix}\begin{pmatrix}0\\1\end{pmatrix}
\right\rangle_{\C^2}\\
&=0.
\end{align*}

(ii)$\Rightarrow$(i) Suppose now that $\v\xi^0_N$ and $\v\xi^0_{N/2}$ are linearly
dependent. Then there exists $\alpha\in\C^*$ such that
$\v\xi^0_{N/2}=\alpha\v\xi^0_N$, and
$$
2\;\!I_0(0)
=\v\;\!\P^0_N\v+i\;\!\v\;\!\P^0_{N/2}\v
=\big(1+i\;\!|\alpha|^2\big)\v\;\!\P_N^0\v.
$$
Defining
$
\big(\big|\xi^0_N\big\rangle\big\langle\xi^0_{N/2}\big|\big)\xi
:=\big\langle\xi^0_{N/2},\xi\big\rangle_{\C^N}\;\!\xi^0_N
$
for any $\xi\in\C^N$, we thus obtain that
\begin{align*}
\P_N^0 \v I_0(0)^{-1}\v\;\!\P_{N/2}^0
&=\tfrac1{2(1+i\;\!|\alpha|^2)}\;\!\P_N^0 \v\big(\v\;\!\P_N^0\P_N^0\v\big)^{-1}
\v\;\!\P_{N/2}^0\\
&=\tfrac1{2(1+i\;\!|\alpha|^2)}\;\!\big(\v\;\!\P_N^0)^{-1}\v\;\!\P_{N/2}^0\\
&=\tfrac\alpha{2(1+i\;\!|\alpha|^2)}\big|\xi^0_N\big\rangle
\big\langle\xi^0_{N/2}\big|\\
&\ne0.
\end{align*}
(i)$\Leftrightarrow$(iii) This equivalence can be shown as the equivalence
(i)$\Leftrightarrow$(ii).
\end{proof}

\begin{Remark}\label{rem_components_V}
In general, the vectors $\v\xi^0_N$ and $\v\xi^0_{N/2}$ are linearly independent.
Indeed, an inspection using \eqref{eq_vectors} shows that $\v\xi^0_N$ and
$\v\xi^0_{N/2}$ are linearly dependent if only if the matrix $\v$ is of the special
form \eqref{eq_spf}. Accordingly, we shall call \emph{degenerate case} the very
exceptional case where $\theta=0$, $N\in2\N$, and $\v\xi_N^0$ and $\v\xi^0_{N/2}$ are
linearly dependent.
\end{Remark}

Using Lemma \ref{lemma_independent} and results of the previous section we can prove
the compactness of the operator defined by the remainder term \eqref{eq_remainder}
when the vectors $\v\xi^0_N$ and $\v\xi^0_{N/2}$ are linearly independent:

\begin{Proposition}\label{prop_exceptional}
If  $\theta=0$, $N\in2\N$, and the vectors $\v\xi^0_N$ and $\v\xi^0_{N/2}$ are
linearly independent, then the operator defined by \eqref{eq_remainder} extends
continuously to a compact operator in $\Hrond^\theta$.
\end{Proposition}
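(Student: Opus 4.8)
The plan is to rerun the argument of Proposition \ref{prop_NUS2} while isolating the two pairs of indices $(j,j')=(N,N/2)$ and $(j,j')=(N/2,N)$, which by Remark \ref{remark_bad} are exactly the pairs for which Lemma \ref{lemma_compact} is not available. First I would note that the chain of equalities at the start of the proof of Proposition \ref{prop_NUS2} uses only Lemma \ref{lemma_short}(b) and a reindexation, not the hypothesis ``$\theta\ne0$ or $N\notin2\N$'', and that Lemma \ref{lemma_continuity} (which needs only $\lambda_j^\theta\ne\lambda_{j'}^\theta$, true for the exceptional pairs since $\lambda_N^0=2\ne-2=\lambda_{N/2}^0$) still produces bounded multiplication operators $N_{j,j'}^0$. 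Hence for $\xi,\zeta\in\Drond^0$ one again has $\eqref{eq_remainder}=\langle k^0\xi,\zeta\rangle$ with $\big(k^0\xi\big)_j=\pi^{-1}\sum_{j'=1}^N(\vartheta_{j,j'}^0)^*N_{j,j'}^0(1_{j,j'}^0)^*\xi_{j'}$, where $\vartheta_{j,j'}^0$ is now regarded simply as the integral operator defined on $C^\infty_{\rm c}$-functions. By Remark \ref{remark_bad}, Lemma \ref{lemma_compact} applies to every pair with $(j,j')\notin\{(N,N/2),(N/2,N)\}$, so the corresponding summands of $k^0$ are compact, exactly as in Proposition \ref{prop_NUS2}. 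It remains to prove that the summands coming from $(N,N/2)$ and $(N/2,N)$ are compact.

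I would treat the pair $(N,N/2)$; the pair $(N/2,N)$ is symmetric. Here $I_N^0=(0,4)$ and $I_{N/2}^0=(-4,0)$, so the two bands only touch, at $\lambda=0$. After Fubini, the $(N,N/2)$-summand of $k^0$ is the integral operator from $\ltwo\big(I_{N/2}^0;\P_{N/2}^0\C^N\big)$ to $\ltwo\big(I_N^0;\P_N^0\C^N\big)$ whose operator-valued kernel $K(\mu,\lambda)$ ($\mu\in I_N^0$, $\lambda\in I_{N/2}^0$) satisfies $\|K(\mu,\lambda)\|\lesssim\frac{\beta_N^0(\lambda)^2\,\beta_{N/2}^0(\lambda)^{-1}\,\beta_N^0(\mu)^{-1}}{\mu-\lambda}\,\|n_{N,N/2}^0(\lambda)\|$, and the plan is to show $K$ is Hilbert--Schmidt. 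By Lemma \ref{lemma_continuity} the function $n_{N,N/2}^0$ of \eqref{eq_def_n} extends continuously to $[-4,0]$, hence is bounded there; moreover, writing $\lambda=-\kappa^2$ with $\kappa>0$, formula \eqref{eq_Wuhan3} combined with the equivalence (i)$\Leftrightarrow$(ii) of Lemma \ref{lemma_independent} --- the sole point at which the linear independence of $\v\xi_N^0$ and $\v\xi^0_{N/2}$ is used --- gives $n_{N,N/2}^0(\lambda)=\tfrac12\P_N^0\v I_0(0)^{-1}\v\P_{N/2}^0+\O(\kappa)=\O(|\lambda|^{1/2})$ as $\lambda\nearrow0$. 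On $I_N^0\times I_{N/2}^0$ one has $\mu>0>\lambda$, hence $\mu-\lambda=\mu+|\lambda|$; away from $(0,0)$ the $\beta^0_\bullet$-factors contribute only square-integrable endpoint singularities to $K$, while near $(0,0)$ the elementary asymptotics $\beta_N^0(\lambda)^2\asymp|\lambda|^{1/2}$, $\beta_{N/2}^0(\lambda)^{-1}\asymp|\lambda|^{-1/4}$, $\beta_N^0(\mu)^{-1}\asymp\mu^{-1/4}$ together with the above bound on $n_{N,N/2}^0$ give $\|K(\mu,\lambda)\|^2\lesssim|\lambda|^{3/2}\mu^{-1/2}(\mu+|\lambda|)^{-2}$; a short computation (substitute $x=yt$ in the inner integral) then shows $\int_{(0,\delta)^2}x^{3/2}y^{-1/2}(x+y)^{-2}\,\d x\,\d y<\infty$. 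Thus $K$ is a Hilbert--Schmidt kernel and the summand is compact; the pair $(N/2,N)$ is handled in the same way, now with $\lambda\searrow0$ and the equivalence (i)$\Leftrightarrow$(iii) of Lemma \ref{lemma_independent}.

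Since $k^0$ is then a finite sum of compact operators, it is compact, and the operator defined by \eqref{eq_remainder} extends continuously to a compact operator in $\Hrond^0$, as claimed. I expect the decisive step --- and the place where the non-degeneracy hypothesis genuinely enters --- to be this Hilbert--Schmidt estimate at the touching point: the kernel $\vartheta_{N,N/2}^0$ by itself is \emph{not} Hilbert--Schmidt near $(0,0)$ (this is precisely the obstruction flagged in Remark \ref{remark_bad}), and it is only the extra half power of $|\lambda|$ coming from the vanishing $n_{N,N/2}^0(0)=0$ in the non-degenerate case that converts the borderline-divergent integral into a convergent one. The rest --- keeping track of the $\beta^0_\bullet$-asymptotics and verifying that every endpoint singularity of $K$ away from the touching point is square-integrable --- is routine.
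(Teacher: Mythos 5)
Your proposal is correct, and it shares with the paper the two decisive ingredients: the reduction to the two summands $(j,j')=(N,N/2)$ and $(N/2,N)$ of the operator $k^0$ from the proof of Proposition \ref{prop_NUS2}, and the extra vanishing $n^0_{N,N/2}(-\kappa^2)=\O(\kappa)$ obtained from \eqref{eq_Wuhan3} together with Lemma \ref{lemma_independent}, which is indeed the only place where the linear independence of $\v\xi^0_N$ and $\v\xi^0_{N/2}$ enters. Where you diverge from the paper is in the final compactness mechanism. The paper does not estimate a kernel: it keeps the factorisation from the proof of Lemma \ref{lemma_compact} (with $\alpha=4$), writes the exceptional summand as $U_1^*(1_{(0,4)})^*(4-X_+)^{-1/4}\eta^\bot(X_+)\overline\varphi(A_+)\,m(X_+)\,\tilde n^0_{N,N/2}(X_+)\,1_{(0,4)}U_2$ plus a compact rest, and observes that the multiplication factor $m(\,\cdot\,)\tilde n^0_{N,N/2}(\,\cdot\,)$ now lies in $C_0\big((0,\infty)\big)$ precisely because of the $\O(\kappa)$ vanishing, so that compactness follows from the $f(X_+)g(A_+)$ criterion with $f,g$ vanishing at infinity (the Cordes-type argument already invoked in Lemma \ref{lemma_compact}). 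You instead prove the stronger statement that the exceptional summand is Hilbert--Schmidt, via the pointwise bound $\|K(\mu,\lambda)\|^2\lesssim|\lambda|^{3/2}\mu^{-1/2}(\mu+|\lambda|)^{-2}$ near the touching point and the elementary convergence of $\int x^{3/2}y^{-1/2}(x+y)^{-2}\,\d x\,\d y$; your asymptotics for the $\beta^0_\bullet$ factors and the endpoint integrability away from $(0,0)$ check out, and your observation that without the half power of $|\lambda|$ coming from $n^0_{N,N/2}$ the integral diverges logarithmically correctly identifies the borderline nature of the obstruction in Remark \ref{remark_bad}. Your route is more self-contained and quantitative (it yields a Schmidt-class conclusion for those two summands), at the price of redoing the singularity bookkeeping by hand; the paper's route is shorter because it recycles the dilation-operator machinery verbatim. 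Both are valid proofs.
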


\begin{proof}
We know from the proof of Proposition \ref{prop_NUS2} that \eqref{eq_remainder} can be
rewritten as
$$
\left\langle k^\theta\xi,\zeta\right\rangle_{\oplus_{j=1}^N
\ltwo(I_j^\theta;\P_j^\theta\C^N)},\quad\xi,\zeta\in\Drond^\theta,
$$
with $k^\theta:\Hrond^\theta\to\oplus_{j=1}^N\ltwo(I_j^\theta;\P_j^\theta\C^N)$ given
by
$$
\big(k^\theta\xi\big)_j
:=\pi^{-1}\sum_{j'=1}^N(\vartheta_{j,j'}^\theta)^*
N_{j,j'}^\theta(1_{j,j'}^\theta)^*\xi_{j'}.
$$
Furthermore, we know that each operator
$(\vartheta_{j,j'}^\theta)^*N_{j,j'}^\theta(1_{j,j'}^\theta)^*$ is compact except for
the values $\theta=0$, $N\in2\N$, $(j,j')=(N,N/2)$ (first case) or $(j,j')=(N/2,N)$
(second case). So, it is sufficient to prove that the operators
$(\vartheta_{N,N/2}^0)^*N_{N,N/2}^0(1_{N,N/2}^0)^*$ and
$(\vartheta_{N/2,N}^0)^*N_{N/2,N}^0(1_{N/2,N}^0)^*$ are compact too. We give the proof
only for the first operator, since the second operator is similar.

Using the notations of the proof of Lemma \ref{lemma_compact} (with $\alpha=4$) and
the fact that $1_{N,N/2}^0$ is the identity operator, we obtain
\begin{align}
&(\vartheta_{N,N/2}^0)^*N_{N,N/2}^0(1_{N,N/2}^0)^*\nonumber\\
&=\Big\{\Big(U_2^*(1_{(0,4)})^*m(X_+)\varphi(A_+)
(4-X_+)^{-1/4}\eta^\bot(X_+)1_{(0,4)}U_1\Big)^*\otimes1_N\Big\}
N_{N,N/2}^0+K\nonumber\\
&=\Big(U_1^*(1_{(0,4)})^*(4-X_+)^{-1/4}\eta^\bot(X_+)\overline\varphi(A_+)
m(X_+)1_{(0,4)}U_2\otimes1_N\Big)N_{N,N/2}^0+K\nonumber\\
&=\Big(U_1^*(1_{(0,4)})^*(4-X_+)^{-1/4}\eta^\bot(X_+)\overline\varphi(A_+)
m(X_+)\otimes1_N\Big)\tilde n_{N,N/2}^0(X_+)\big(1_{(0,4)}U_2\otimes1_N\big)+K,
\label{eq_first_term}
\end{align}
with $K$ a compact operator and $\tilde n_{N,N/2}^0:(0,\infty)\to\B(\P^0_N\C^N)$ the
continuous function given by (see \eqref{eq_def_n})
$$
\tilde n_{N,N/2}^0(x):=
\begin{cases}
n_{N,N/2}^0(-x) & \hbox{$x\in(0,4)$,}\\
n_{N,N/2}^0(-4) & \hbox{$x\ge4$.}
\end{cases}
$$
Since $m(x)=0$ for $x\ge4$ and since
$$
\tilde n_{N,N/2}^0(\kappa^2)
=\beta^0_N(-\kappa^2)^{-2}\P^0_N\v\M^0(0,\kappa)\v\;\!\P^0_{N/2}
=\O(\kappa),\quad\kappa>0,
$$
due to \eqref{eq_Wuhan3} and Lemma \ref{lemma_independent}, the continuous function
$$
(0,\infty)\ni x\mapsto
\big(m(x)\otimes1_N\big)\tilde n_{N,N/2}^0(x)\in\P^0_N\C^N
$$
vanishes at $x=0$ and for $x\ge4$. Therefore, one can reproduce the argument at the
end of proof of Lemma \ref{lemma_compact} to conclude that the first term in
\eqref{eq_first_term} is compact.
\end{proof}

\subsection{New formula for the wave operators}\label{sec_new_formula}

Using the results obtained in the previous sections, we can finally derive a new
formula for the wave operators $W_\pm^\theta$. We recall that the case where
$\theta=0$, $N\in2\N$, and the vectors $\v\xi_N^0$ and $\v\xi^0_{N/2}$ are linearly
dependent, is referred as the degenerate case (see  Remark \ref{rem_components_V}). By
combining the results of equations \eqref{eq_form_leading}-\eqref{eq_compact_leading}
and Propositions \ref{prop_NUS2} \& \ref{prop_exceptional}, we get for any
$\theta\in[0,2\pi]$ the equality
$$
\F^\theta(W_-^\theta-1)(\F^\theta)^*
=\tfrac12(\V^{\theta})^*\big\{\big(1-\tanh(\pi D)-i\cosh(\pi D)^{-1}\tanh(X)\big)
\otimes1_N\big\}\V^\theta\big(S^\theta(X^\theta)-1\big)+K^\theta,
$$
with $K^\theta\in\K(\Hrond^\theta)$ in the nondegenerate cases, and
$K^0\in\B(\Hrond^0)$ in the degenerate case.

In order to obtain an expression for the operator $(W_-^\theta-1)$ alone, we introduce
the operators in $\h$
$$
\Xt:=(\V^{\theta}\F^\theta)^*(X\otimes1_N)\;\!\V^{\theta}\F^\theta
\quad\hbox{and}\quad
\Dt:=(\V^{\theta}\F^\theta)^*(D\otimes1_N)\;\!\V^{\theta}\F^\theta
$$
with domains $\dom(\Xt):=(\V^{\theta}\F^\theta)^*\;\!\dom(X\otimes1_N)$ and
$\dom(\Dt):=(\V^{\theta}\F^\theta)^*\;\!\dom(D\otimes1_N)$. These operators are
self-adjoint, satisfy the canonical commutation relation (because $X$ and $D$ satisfy
it) and are independent of the variable $\theta$. Namely,
$$
(\Xt\;\!g)(\omega)=\arctanh(\cos(\omega))\;\!g(\omega)
\quad\hbox{$g\in\dom(\Xt)$, a.e. $\omega\in[0,\pi)$,}
$$
and $\Dt=\Ut^*\Xt\;\!\Ut$ with
$\Ut:=(\V^{\theta}\F^\theta)^*(\F\otimes1_N)\;\!\V^{\theta}\F^\theta$ unitary and
independent of $\theta$, and $\F\in\B\big(\ltwo(\R)\big)$ the Fourier transform on
$\R$.

Using the operators $\Xt$ and $\Dt$, we thus obtain the desired formula for the wave
operator $W_-^\theta$ (and thus also for $W_+^\theta$ if we use the relation
$W_+^\theta=W_-^\theta(S^\theta)^*$)\;\!:

\begin{Theorem}\label{thm_formula_theta}
For any $\theta\in[0,2\pi]$, one has the equality
$$
W_-^\theta-1
=\tfrac12\big(1-\tanh(\pi\Dt)-i\cosh(\pi\Dt)^{-1}\tanh(\Xt)\big)(S^\theta-1)
+\Kt^\theta,
$$
with $\Kt^\theta:=(\F^\theta)^*K^\theta\F^\theta\in\K(\h)$ in the nondegenerate cases,
and $\Kt^0:=(\F^0)^*K^0\F^0\in\B(\h)$ in the degenerate case.
\end{Theorem}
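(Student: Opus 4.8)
The plan is to obtain the formula for $W_-^\theta-1$ directly from the identity for $\F^\theta(W_-^\theta-1)(\F^\theta)^*$ displayed just above, which has already been assembled from \eqref{eq_form_leading}--\eqref{eq_compact_leading} together with Propositions \ref{prop_NUS2} and \ref{prop_exceptional}. Since $\F^\theta$ is unitary, $W_-^\theta-1=(\F^\theta)^*\big(\F^\theta(W_-^\theta-1)(\F^\theta)^*\big)\F^\theta$, so the whole argument reduces to transporting the three summands of that identity from $\Hrond^\theta$ back to $\h$ by conjugation with $\F^\theta$.

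For the leading summand, I would insert $\F^\theta(\F^\theta)^*=1_{\Hrond^\theta}$ between the operator-valued multiplier and $S^\theta(X^\theta)-1$ and rewrite it as
$$
\tfrac12\,(\V^\theta\F^\theta)^*\big\{\big(1-\tanh(\pi D)-i\cosh(\pi D)^{-1}\tanh(X)\big)\otimes1_N\big\}(\V^\theta\F^\theta)\cdot(\F^\theta)^*\big(S^\theta(X^\theta)-1\big)\F^\theta.
$$
The definitions $\Xt=(\V^\theta\F^\theta)^*(X\otimes1_N)\V^\theta\F^\theta$ and $\Dt=(\V^\theta\F^\theta)^*(D\otimes1_N)\V^\theta\F^\theta$ together with the spectral theorem give $(\V^\theta\F^\theta)^*\big(\tanh(\pi D)\otimes1_N\big)\V^\theta\F^\theta=\tanh(\pi\Dt)$, and similarly $\cosh(\pi\Dt)^{-1}$ and $\tanh(\Xt)$ for the other single-variable factors; the product $\cosh(\pi D)^{-1}\tanh(X)$ I would first split as $\big(\cosh(\pi D)^{-1}\otimes1_N\big)\big(\tanh(X)\otimes1_N\big)$ and then conjugate the two bounded factors separately, using once more that $\V^\theta\F^\theta$ is unitary, to obtain $\cosh(\pi\Dt)^{-1}\tanh(\Xt)$. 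Finally $(\F^\theta)^*\big(S^\theta(X^\theta)-1\big)\F^\theta=S^\theta-1$ by the defining relation $S^\theta(X^\theta)=\F^\theta S^\theta(\F^\theta)^*$. This reproduces exactly the first term in the claimed formula.

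For the error term, I would set $\Kt^\theta:=(\F^\theta)^*K^\theta\F^\theta$; since $\F^\theta$ is unitary, $\Kt^\theta$ is compact exactly when $K^\theta$ is, and bounded exactly when $K^\theta$ is. By \eqref{eq_compact_leading} the contribution of the leading term to $K^\theta$ is compact, and Propositions \ref{prop_NUS2} and \ref{prop_exceptional} show that the contribution of the remainder term \eqref{eq_remainder} is compact in all nondegenerate cases and bounded in the degenerate case, which yields the stated dichotomy for $\Kt^\theta$. Combining the three pieces completes the proof.

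There is no genuine obstacle left at this stage, since all the analytic content has already been established in the preceding subsections and the argument above is essentially bookkeeping. The one point deserving a line of care is the functional-calculus step: one should present the bracketed multiplier as a finite algebraic combination of $\tanh(\pi D)\otimes1_N$, $\cosh(\pi D)^{-1}\otimes1_N$ and $\tanh(X)\otimes1_N$, observe that conjugation by the unitary $\V^\theta\F^\theta$ is multiplicative, and then invoke the spectral theorem for each single-variable factor. The non-commutativity of $\Xt$ and $\Dt$ causes no trouble here, since no joint functional calculus is needed and all the functions involved are bounded.
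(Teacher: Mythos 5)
Your proposal is correct and follows essentially the same route as the paper: the theorem is obtained by conjugating the identity for $\F^\theta(W_-^\theta-1)(\F^\theta)^*$ (assembled from \eqref{eq_form_leading}--\eqref{eq_compact_leading} and Propositions \ref{prop_NUS2} and \ref{prop_exceptional}) by the unitary $\F^\theta$, using the definitions of $\Xt$ and $\Dt$ and the multiplicativity of unitary conjugation to convert each single-variable factor. The bookkeeping you describe, including the separate treatment of the two factors in $\cosh(\pi D)^{-1}\tanh(X)$ and the definition $\Kt^\theta:=(\F^\theta)^*K^\theta\F^\theta$, matches the paper's argument.
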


\begin{Remark}\label{remark_future}
The result of Theorem \ref{thm_formula_theta} is weaker in the degenerate case, when
we only prove that $\Kt^0\in\B(\h)$. However, there is plenty of space left between the
set of compact operators $\K(\h)$ and the set of bounded operators $\B(\h)$. In a
second paper, we plan to show that, even in the degenerate case, the remainder
term $\Kt^0$ is small in a suitable sense compared to the leading term
$\tfrac12\big(1-\tanh(\pi\Dt)-i\cosh(\pi\Dt)^{-1}\tanh(\Xt)\big)(S^0-1)$. This will be
achieved by showing that $\Kt^0$ belongs to a $C^*$-algebra bigger than the set of
compact operators, but smaller than the set of all bounded operators.
\end{Remark}

Finally, we derive a formula for the wave operators
$W_\pm=\slim_{t\to\pm\infty}\e^{itH}\e^{-itH_0}$ for the initial pair of Hamiltonians
$(H,H_0)$. As explained in the last part of Section \ref{sec_intro}, the wave
operators $W_\pm$ exist and have same range. In addition, both the wave operators
$W_\pm$ and the scattering $S=(W_+)^*W_-$ admit direct integral decompositions
$$
\G\;\! W_\pm\;\!\G^*=\int_{[0,2\pi]}^\oplus W_\pm^\theta\;\!\tfrac{\d\theta}{2\pi}
\quad\hbox{and}\quad
\G\;\!S\;\!\G^*=\int_{[0,2\pi]}^\oplus S^\theta\;\!\tfrac{\d\theta}{2\pi}
$$
with $\G:\H\to\int_{[0,2\pi]}^\oplus\h\;\!\tfrac{\d\theta}{2\pi}$ the unitary operator
defined in Section \ref{sec_direct}. Therefore, by collecting the formulas obtained in
Theorem \ref{thm_formula_theta} for $(W_-^\theta-1)$ in each fiber Hilbert space $\h$,
we obtain a formula for $(W_--1)$ in the full direct sum Hilbert space
$\int_{[0,2\pi]}^\oplus\h\;\!\tfrac{\d\theta}{2\pi}$ (and thus also for $W_+$ if we
use the relation $W_+=W_-S^*$)\;\!:

\begin{Theorem}\label{thm_formula}
One has the equality
$$
\G(W_--1)\G^*
=\int_{[0,2\pi]}^\oplus\Big(\tfrac12\big(1-\tanh(\pi\Dt)
-i\cosh(\pi\Dt)^{-1}\tanh(\Xt)\big)(S^\theta-1)+\Kt^\theta\Big)
\tfrac{\d\theta}{2\pi},
$$
with $\Kt^\theta$ as in Theorem \ref{thm_formula_theta}.
\end{Theorem}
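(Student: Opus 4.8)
The plan is to combine the fiberwise formula of Theorem \ref{thm_formula_theta} with the direct integral decomposition of $W_-$ already recorded just above the statement. First I would recall why the equality $\G\;\!W_\pm\;\!\G^*=\int_{[0,2\pi]}^\oplus W_\pm^\theta\;\!\tfrac{\d\theta}{2\pi}$ holds: it follows from the direct integral decompositions $\G H_0\G^*=\int_{[0,2\pi]}^\oplus H_0^\theta\;\!\tfrac{\d\theta}{2\pi}$ and $\G H\G^*=\int_{[0,2\pi]}^\oplus H^\theta\;\!\tfrac{\d\theta}{2\pi}$ of Section \ref{sec_direct}, from the existence and completeness of the $W_\pm^\theta$ (each difference $H^\theta-H^\theta_0$ being finite rank), and from \cite[Sec.~2.4]{Fra03}. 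Concretely, for every $t\in\R$ one has $\G\;\!\e^{itH}\e^{-itH_0}\;\!\G^*=\int_{[0,2\pi]}^\oplus\e^{itH^\theta}\e^{-itH^\theta_0}\;\!\tfrac{\d\theta}{2\pi}$, and the strong limit $t\to-\infty$ commutes with the direct integral, which gives the claimed decomposition of $W_-$.

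Next, since the identity is decomposable with $\G\;\!1\;\!\G^*=\int_{[0,2\pi]}^\oplus 1\;\!\tfrac{\d\theta}{2\pi}$, subtracting yields $\G(W_--1)\G^*=\int_{[0,2\pi]}^\oplus(W_-^\theta-1)\;\!\tfrac{\d\theta}{2\pi}$. It then suffices to insert, in each fiber Hilbert space $\h$, the formula of Theorem \ref{thm_formula_theta}, namely $W_-^\theta-1=\tfrac12\big(1-\tanh(\pi\Dt)-i\cosh(\pi\Dt)^{-1}\tanh(\Xt)\big)(S^\theta-1)+\Kt^\theta$, to obtain exactly the integrand appearing in \eqref{eq_the_goal}.

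Finally I would verify the measurability in $\theta$ of this integrand, so that the displayed direct integral is well defined. The operators $\Xt$ and $\Dt$, and hence $\tanh(\pi\Dt)$, $\cosh(\pi\Dt)^{-1}$, $\tanh(\Xt)$ and the whole leading factor $\tfrac12\big(1-\tanh(\pi\Dt)-i\cosh(\pi\Dt)^{-1}\tanh(\Xt)\big)$, are independent of $\theta$, as established in Section \ref{sec_new_formula}; the family $\theta\mapsto S^\theta$ is measurable since $S=(W_+)^*W_-$ is decomposable with $\G\;\!S\;\!\G^*=\int_{[0,2\pi]}^\oplus S^\theta\;\!\tfrac{\d\theta}{2\pi}$; and consequently $\theta\mapsto\Kt^\theta=(W_-^\theta-1)-\tfrac12\big(1-\tanh(\pi\Dt)-i\cosh(\pi\Dt)^{-1}\tanh(\Xt)\big)(S^\theta-1)$ is measurable as a difference of measurable families of bounded operators. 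No genuine obstacle arises: the analytic content is entirely carried by Theorem \ref{thm_formula_theta}, and the present statement is merely the reassembly of the fibers, the only point deserving a comment being the decomposability of $W_-$ and the measurability of $\theta\mapsto\Kt^\theta$ just discussed.
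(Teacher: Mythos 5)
Your proposal is correct and follows exactly the paper's route: the paper likewise obtains $\G(W_-\!-\!1)\G^*=\int_{[0,2\pi]}^\oplus(W_-^\theta-1)\tfrac{\d\theta}{2\pi}$ from the decomposability of $W_\pm$ (via the fiber decompositions of $H_0$, $H$ and \cite[Sec.~2.4]{Fra03}) and then inserts Theorem \ref{thm_formula_theta} fiberwise. Your additional remark on the measurability of $\theta\mapsto\Kt^\theta$ is a harmless extra verification, not a departure from the argument.
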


\appendix
\section{Appendix}
\setcounter{equation}{0}
\renewcommand{\theequation}{A.\arabic{equation}}

In this Appendix, we provide the proofs of several results which have only been stated
in the main sections of this paper.

\begin{proof}[Proof of Lemma \ref{lemma_sandwich}]
For any $\lambda_\star\in\R$ and $z\in\C\setminus\R$, we have
\begin{align*}
\int_0^\pi\big(2\cos(\omega)+\lambda_\star-z\big)^{-1}\tfrac{\d\omega}\pi
&=\tfrac1{2\pi}\int_0^{2\pi}\big(2\cos(\omega)+\lambda_\star-z\big)^{-1}\d\omega\\
&=\tfrac1{2\pi}\int_0^{2\pi}\e^{i\omega}
\big(\e^{2i\omega}-(z-\lambda_\star)\e^{i\omega}+1\big)^{-1}\d\omega\\
&=\tfrac1{2\pi i}\int_{\S^1}(\varsigma-a_+)^{-1}(\varsigma-a_-)^{-1}\;\!\d\varsigma
\end{align*}
with $a_\pm:=\tfrac12\big(z-\lambda_\star\pm\sqrt{(z-\lambda_\star)^2-4}\big)$.
Therefore, one gets from the residue theorem that
\begin{equation}\label{eq_res_1}
\int_0^\pi\big(2\cos(\omega)+\lambda_\star-z\big)^{-1}\tfrac{\d\omega}\pi
=\begin{cases}
\frac1{a_+-a_-}=\frac1{\sqrt{(z-\lambda_\star)^2-4}} &\hbox{if $a_+\in\D$}\\
-\frac1{a_+-a_-}=-\frac1{\sqrt{(z-\lambda_\star)^2-4}} &\hbox{if $a_-\in\D$.}
\end{cases}
\end{equation}

Now, take $z=\lambda+i\varepsilon$ with $\varepsilon>0$ small enough, and set
$\alpha:=\lambda-\lambda_\star$. Then, in the case $\alpha^2>4$ we obtain that
\begin{align*}
\lim_{\varepsilon\searrow0}a_\pm
&=\tfrac12\lim_{\varepsilon\searrow0}\big(\alpha+i\varepsilon\pm
\sqrt{\alpha^2-4+2i\alpha\varepsilon-\varepsilon^2}\big)\\
&=\tfrac12\lim_{\varepsilon\searrow0}\Big(\alpha+i\varepsilon\pm\sqrt{\alpha^2-4}
\cdot\sqrt{1+\tfrac{2i\alpha\varepsilon}{\alpha^2-4}+\O(\varepsilon^2)}\Big)\\
&=\tfrac12\big(\alpha\pm\sqrt{\alpha^2-4}\cdot\sgn(\alpha)\big)\\
&=\tfrac2{\alpha\mp\sqrt{\alpha^2-4}\cdot\sgn(\alpha)},
\end{align*}
which implies that $a_-\in\D$ if $\alpha^2>4$ and $\varepsilon$ is small enough.
Similarly, in the case $\alpha^2<4$ we obtain that
\begin{align*}
a_\pm
&=\tfrac12\big(\alpha+i\varepsilon\pm
\sqrt{\alpha^2-4+2i\alpha\varepsilon-\varepsilon^2}\big)\\
&=\tfrac12\left(\alpha+i\varepsilon\pm\sqrt{4-\alpha^2}\cdot
\sqrt{-1+\tfrac{2i\alpha\varepsilon}{4-\alpha^2}+\O(\varepsilon^2)}\right)\\
&=\tfrac12\left(\alpha+i\varepsilon\pm i\sqrt{4-\alpha^2}\cdot
\left(1-\tfrac{i\alpha\varepsilon}{4-\alpha^2}+\O(\varepsilon^2)\right)\right)\\
&=\tfrac12\left(\big(\alpha\pm i\sqrt{4-\alpha^2}\big)
\left(1\pm\tfrac\varepsilon{\sqrt{4-\alpha^2}}\right)+\O(\varepsilon^2)\right)
\end{align*}
which implies that $a_-\in\D$ if $\alpha^2<4$ and $\varepsilon$ is small enough.
Putting these formulas for $a_\pm$ in \eqref{eq_res_1}, we get
$$
\lim_{\varepsilon\searrow0}\int_0^\pi
\big(2\cos(\omega)+\lambda_\star-(\lambda+i\varepsilon)\big)^{-1}\tfrac{\d\omega}\pi
=\begin{cases}
\big|(\lambda-\lambda_\star)^2-4\big|^{-1/2} &\hbox{if $\lambda<\lambda_\star-2$}\\
i\;\!\big|(\lambda-\lambda_\star)^2-4\big|^{-1/2} &
\hbox{if $\lambda\in(\lambda_\star-2,\lambda_\star+2)$}\\
-\big|(\lambda-\lambda_\star)^2-4\big|^{-1/2} &\hbox{if $\lambda>\lambda_\star+2$}.
\end{cases}
$$

Finally, the last equation and the formulas for $H_0^\theta$, $G$, $G^*$ imply that
\begin{align*}
GR_0^\theta(\lambda+i\;\!0)G^*
&=\lim_{\varepsilon\searrow0}\v\int_0^\pi
\big(2\cos(\omega)+A^\theta-\lambda-i\varepsilon\big)^{-1}\tfrac{\d\omega}\pi\;\!\v\\
&=\v\lim_{\varepsilon\searrow0}\int_0^\pi\Bigg(2\cos(\omega)
+\sum_{j=1}^N\lambda_j^\theta\;\!\P_j^\theta-\lambda
-i\varepsilon\Bigg)^{-1}\tfrac{\d\omega}\pi\;\!\v\\
&=\sum_{j=1}^N\v\lim_{\varepsilon\searrow0}\int_0^\pi
\big(2\cos(\omega)+\lambda_j^\theta-\lambda-i\varepsilon\big)^{-1}
\tfrac{\d\omega}\pi\;\!\P_j^\theta\v\\
&=\sum_{\{j\mid\lambda<\lambda_j^\theta-2\}}\tfrac{\v\;\!\P_j^\theta\v}
{\beta_j^\theta(\lambda)^2}
+i\sum_{\{j\mid\lambda\in I^\theta_j\}}\tfrac{\v\;\!\P_j^\theta\v}
{\beta_j^\theta(\lambda)^2}
-\sum_{\{j\mid\lambda>\lambda_j^\theta+2\}}\tfrac{\v\;\!\P_j^\theta\v}
{\beta_j^\theta(\lambda)^2},
\end{align*}
which proves the claim.
\end{proof}

\begin{proof}[Proof of Lemma \ref{lemme_com}]
The fact that $S_\ell$ is the orthogonal projection on the kernel of $I_\ell(0)$ and
the relations $S_mS_\ell=S_\ell=S_\ell S_m$ imply that $[S_m,S_\ell]=0$ and
$[I_m(0),S_\ell]=0$. Thus, one has in $\B(\C^N)$ the equalities
\begin{align*}
\big[S_\ell,\big(I_m(\kappa)+S_m\big)^{-1}\big]
&=\big(I_m(\kappa)+S_m\big)^{-1}\big[I_m(\kappa)+S_m,S_\ell\big]
\big(I_m(\kappa)+S_m\big)^{-1}\\
&=\big(I_m(\kappa)+S_m\big)^{-1}\big[I_m(0)+\O(\kappa)+S_m,S_\ell\big]
\big(I_m(\kappa)+S_m\big)^{-1}\\
&=\big(I_m(\kappa)+S_m\big)^{-1}\big[\O(\kappa),S_\ell\big]
\big(I_m(\kappa)+S_m\big)^{-1},
\end{align*}
which imply the claim.
\end{proof}

\begin{proof}[Proof of Lemma \ref{lemma_relations}]
(a) follows from the fact that $S_0$ is the orthogonal projection on
$\ker\big(I_0(0)\big)$ and the fact that both $\re\big(I_0(0)\big)$ and
$\im\big(I_0(0)\big)$ are sums of positive operators. Similarly, (b) follows from the
fact that $S_1$ is the orthogonal projection on the kernel of $I_1(0)$ and the fact
that $\im\big(I_1(0)\big)$ is a sum of positive operators. For (c), recall that $S_2$
is the orthogonal projection on the kernel of $I_2(0)$ and that $\im\big(I_2(0)\big)$
is positive. Thus,
$$
S_2\re\big(I_2(0)\big)S_2=0=S_2\im\big(I_2(0)\big)S_2.
$$
With the notations of \eqref{eq_AB} this implies that the range of the operator
$(A-iB^*B)^{-1}\re\big(M_1(0)\big)S_2$ belongs to both $\ker(A)$ (first equality) and
$\ker(B)$ (second equality). However, since $(A-iB^*B)$ is invertible, the only
element in $\ker(A)\cap\ker(B)$ is the vector $0$. Therefore, we have
$(A-iB^*B)^{-1}\re\big(M_1(0)\big)S_2=0$, and thus
$$
\re\big(M_1(0)\big)S_2=0=S_2\re\big(M_1(0)\big).
$$
Finally, (d) follows from (c), since we know from the proof of Proposition
\ref{Prop_asymp} that $\im\big(M_1(0)\big)S_2=0$.
\end{proof}

\begin{proof}[Proof of Theorem \ref{thm_cont}]
(a) Some lengthy, but direct, computations taking into account the expansion
\eqref{eq_grosse}, the relation $\big(I_\ell(0)+S_\ell\big)^{-1}S_\ell=S_\ell$, the
expansion
\begin{equation}\label{eq_expansion_beta}
\beta_j^\theta(\lambda-\kappa^2)^{-1}
=\beta_j^\theta(\lambda)^{-1}\Big(1+\tfrac{\kappa^2}
{2(\lambda-\lambda_j^\theta)}+\O(\kappa^4)\Big),
\quad\lambda\in I^\theta_j,
\end{equation}
and Lemma \ref{lemma_relations}(b) lead to the equality
\begin{align*}
&\lim_{\kappa\to0}\beta_j^\theta(\lambda-\kappa^2)^{-1}
\P_j^\theta\v\;\!\M^\theta(\lambda,\kappa)\v\;\!\P_{j'}^\theta\;\!
\beta_{j'}^\theta(\lambda-\kappa^2)^{-1}\\
&=\beta_j^\theta(\lambda)^{-1}\P_j^\theta\v\;\!S_0\big(I_1(0)+S_1\big)^{-1}
S_0\v\;\!\P_{j'}^\theta\beta_{j'}^\theta(\lambda)^{-1}\\
&\quad-\beta_j^\theta(\lambda)^{-1}\P_j^\theta\v\big(C_{20}'(0)+S_0C_{21}'(0)\big)
S_2I_3(0)^{-1}S_2\big(C_{20}'(0)+C_{21}'(0)S_0\big)\v\;\!\P_{j'}^\theta\;\!
\beta_{j'}^\theta(\lambda)^{-1}.
\end{align*}
Moreover, Lemmas \ref{lemma_relations}(a) \& \ref{lemma_relations}(d) imply that
\begin{align}
C_{20}(\kappa)
&=\big(I_0(\kappa)+S_0\big)^{-1}
\Bigg[-\sum_{j\in\N_\lambda}\tfrac{\v\;\!\P_j^\theta\v}{\vartheta_j(\kappa)}
+\kappa\;\!M_1(\kappa),S_2\Bigg]\big(I_0(\kappa)+S_0\big)^{-1}\nonumber\\
&=\kappa\;\!\big(I_0(\kappa)+S_0\big)^{-1}\big[M_1(0),S_2\big]
\big(I_0(\kappa)+S_0\big)^{-1}+\Oas(\kappa^3)\nonumber\\
&=\Oas(\kappa^3),\label{eq_C20}
\end{align}
and Lemma \ref{lemma_relations}(d) and the expansion \eqref{eq_23} imply that
\begin{align*}
C_{21}(\kappa)
&=\big(I_1(\kappa)+S_1\big)^{-1}\big[S_0M_1(0)S_0+\kappa\;\!M_2(\kappa),S_2\big]
\big(I_1(\kappa)+S_1\big)^{-1}\nonumber\\
&=\kappa\;\!\big(I_1(\kappa)+S_0\big)^{-1}\big[M_2(\kappa),S_2\big]
\big(I_1(\kappa)+S_0\big)^{-1}\nonumber\\
&=\kappa\;\!\big(I_1(\kappa)+S_0\big)^{-1}
\big[-S_0M_1(0)\big(I_0(0)+S_0\big)^{-1}M_1(0)S_0,S_2\big]
\big(I_1(\kappa)+S_0\big)^{-1}+\Oas(\kappa^2)\nonumber\\
&=\Oas(\kappa^2).
\end{align*}
Therefore, one has $C_{20}'(0)=C_{21}'(0)=0$, and thus
$$
\lim_{\kappa\to0}\beta_j^\theta(\lambda-\kappa^2)^{-1}\P_j^\theta\v\;\!
\M^\theta(\lambda,\kappa)\v\;\!\P_{j'}^\theta\;\!
\beta_{j'}^\theta(\lambda-\kappa^2)^{-1}
=\beta_j^\theta(\lambda)^{-1}\P_j^\theta\v\;\!S_0\big(I_1(0)+S_1\big)^{-1}
S_0\v\;\!\P_{j'}^\theta\beta_{j'}^\theta(\lambda)^{-1}.
$$
Since
\begin{equation}\label{eq_start}
S^\theta(\lambda-\kappa^2)_{jj'}-\delta_{jj'}
=-2i\;\!\beta_j^\theta(\lambda-\kappa^2)^{-1}\P_j^\theta\v\;\!\M^\theta(\lambda,\kappa)
\v\;\!\P_{j'}^\theta\beta_{j'}^\theta(\lambda-\kappa^2)^{-1},
\end{equation}
this proves the claim.

(b.1) We first consider the case $\lambda=\lambda_{j'}^\theta-2$,
$\lambda>\lambda_j^\theta-2$ (the case $\lambda=\lambda_j^\theta-2$,
$\lambda>\lambda_{j'}^\theta-2$ is not presented since it is similar). An inspection
of the expansion \eqref{eq_grosse} taking into account the relations
$
\big(I_\ell(\kappa)+S_\ell\big)^{-1}=\big(I_\ell(0)+S_\ell\big)^{-1}+\Oas(\kappa)
$
and $\big(I_\ell(0)+S_\ell\big)^{-1}S_\ell=S_\ell$ leads to the equation
\begin{align*}
&\beta_j^\theta(\lambda-\kappa^2)^{-1}\P_j^\theta\v\;\!\M^\theta(\lambda,\kappa)
\v\;\!\P_{j'}^\theta\beta_{j'}^\theta(\lambda-\kappa^2)^{-1}\\
&=\beta_j^\theta(\lambda-\kappa^2)^{-1}\P_j^\theta\v\;\!\Big\{
\Oas(\kappa)+S_0\big(I_1(\kappa)+S_1\big)^{-1}S_0\\
&\quad+\tfrac1\kappa\big(S_1+\Oas(\kappa)\big)S_1
\big(I_2(\kappa)+S_2\big)^{-1}S_1\big(S_1+\Oas(\kappa)\big)\\
&\quad+\tfrac1{\kappa^2}\Big(\Oas(\kappa^2)+S_2\big(I_0(\kappa)+ S_0\big)^{-1}
\big(I_1(\kappa)+S_1\big)^{-1}\big(I_2(\kappa)+S_2\big)^{-1}-C_{20}(\kappa)\\
&\quad-S_0C_{21}(\kappa)-S_1C_{22}(\kappa)\Big)
S_2 I_3(\kappa)^{-1}S_2\Big(\big(I_2(\kappa)+S_2\big)^{-1}
\big(I_1(\kappa)+S_1\big)^{-1}\big(I_0(\kappa)+S_0\big)^{-1}S_2\\
&\quad +C_{20}(\kappa) +C_{21}(\kappa)S_0
+C_{22}(\kappa)S_1 + \Oas(\kappa^2)\Big)\Big\}\;\!\v\;\!\P_{j'}^\theta
\;\!\beta_{j'}^\theta(\lambda-\kappa^2)^{-1}.
\end{align*}
An application of Lemma \ref{lemma_relations}(a)-(b) to the above equation gives
\begin{align*}
&\beta_j^\theta(\lambda-\kappa^2)^{-1}\P_j^\theta\v\;\!\M^\theta(\lambda,\kappa)
\v\;\!\P_{j'}^\theta\beta_{j'}^\theta(\lambda-\kappa^2)^{-1}\\
&=\beta_j^\theta(\lambda-\kappa^2)^{-1}\P_j^\theta\v
\Big(\Oas(\kappa)-\tfrac1{\kappa^2}\big(\Oas(\kappa^2)+C_{20}(\kappa)
+S_0C_{21}(\kappa)\big)\\
&\quad\cdot S_2 I_3(\kappa)^{-1}S_2\big(\Oas(\kappa^2)+C_{20}(\kappa)\big)\Big)
\v\;\!\P_{j'}^\theta\beta_{j'}^\theta(\lambda-\kappa^2)^{-1}.
\end{align*}
Finally, if one takes into account the expansion
$
\beta_j^\theta(\lambda-\kappa^2)^{-1}=\beta_j^\theta(\lambda)^{-1}+\O(\kappa^2)
$
(see \eqref{eq_expansion_beta}) and the equality
$
\beta_{j'}^\theta(\lambda-\kappa^2)^{-1}=|4\kappa^2+\kappa^4|^{-1/4}
$,
one ends up with
\begin{align*}
&\beta_j^\theta(\lambda-\kappa^2)^{-1}\P_j^\theta\v\;\!\M^\theta(\lambda,\kappa)
\v\;\!\P_{j'}^\theta\beta_{j'}^\theta(\lambda-\kappa^2)^{-1}\\
&=\big(\beta_j^\theta(\lambda)^{-1}+\O(\kappa^2)\big)\P_j^\theta\v
\Big(\Oas(\kappa)-\tfrac1{\kappa^2}\big(\Oas(\kappa^2)+C_{20}(\kappa)
+S_0C_{21}(\kappa)\big)\\
&\quad\cdot S_2 I_3(\kappa)^{-1}S_2\big(\Oas(\kappa^2)+C_{20}(\kappa)\big)\Big)
\v\;\!\P_{j'}^\theta\;\!\big|4\kappa^2+\kappa^4\big|^{-1/4}.
\end{align*}
Since $C_{20}(\kappa)=\Oas(\kappa^3)$ (see \eqref{eq_C20}), one infers that
$
\beta_j^\theta(\lambda-\kappa^2)^{-1}\P_j^\theta\v\;\!\M^\theta(\lambda,\kappa)
\v\;\!\P_{j'}^\theta\beta_{j'}^\theta(\lambda-\kappa^2)^{-1}
$
vanishes as $\kappa\to0$, and thus that
$\lim_{\kappa\to0}S^\theta(\lambda-\kappa^2)_{jj'}=0$ due to \eqref{eq_start}.

(b.2) We now consider the case $\lambda_j^\theta-2=\lambda=\lambda_{j'}^\theta-2$. An
inspection of \eqref{eq_grosse} taking into account the relation
$\big(I_\ell(\kappa)+S_\ell\big)^{-1}=\big(I_\ell(0)+S_\ell\big)^{-1}+\Oas(\kappa)$,
the relation $\big(I_\ell(0)+S_\ell\big)^{-1}S_\ell=S_\ell$ and Lemma
\ref{lemma_relations}(a) leads to the equation
\begin{align}
&\beta_j^\theta(\lambda-\kappa^2)^{-1}\P_j^\theta\v\;\!\M^\theta(\lambda,\kappa)
\v\;\!\P_{j'}^\theta\beta_{j'}^\theta(\lambda-\kappa^2)^{-1}\nonumber\\
&=\beta_j^\theta(\lambda-\kappa^2)^{-1}\P_j^\theta\v\Big(\Oas(\kappa^2)
+\kappa\big(I_0(\kappa)+S_0\big)^{-1}-\tfrac1\kappa C_{10}(\kappa)S_1
\big(I_2(\kappa)+S_2\big)^{-1}S_1C_{10}(\kappa)\nonumber\\
&\quad-\tfrac1{\kappa^2}\big(\Oas(\kappa^2)+C_{20}(\kappa)\big)S_2I_3(\kappa)^{-1}
S_2\big(\Oas(\kappa^2)+C_{20}(\kappa)\big)\Big)\v\;\!\P_{j'}^\theta
\beta_{j'}^\theta(\lambda-\kappa^2)^{-1}.\label{eq_b2}
\end{align}
Therefore, since
$
\beta_j^\theta(\lambda-\kappa^2)^{-1}
=\beta_{j'}^\theta(\lambda-\kappa^2)^{-1}
=|4\kappa^2+\kappa^4|^{-1/4},
$
$C_{20}(\kappa)\in\Oas(\kappa^3)$, and $i\kappa\in(0,\varepsilon)$, one obtains that
\begin{align*}
&\lim_{\kappa\to0}\beta_j^\theta(\lambda-\kappa^2)^{-1}\P_j^\theta\v\;\!
\M^\theta(\lambda,\kappa)\v\;\!\P_{j'}^\theta
\beta_{j'}^\theta(\lambda-\kappa^2)^{-1}\\
&=\lim_{\kappa\to0}\big|4\kappa^2+\kappa^4\big|^{-1/2}\P_j^\theta\v\Big(\Oas(\kappa^2)
+\kappa\big(I_0(\kappa)+S_0\big)^{-1}\\
&\quad-\tfrac1\kappa C_{10}(\kappa)S_1
\big(I_2(\kappa)+S_2\big)^{-1}S_1C_{10}(\kappa)\Big)\v\;\!\P_{j'}^\theta\\
&=-\tfrac i2\;\!\P_j^\theta\v\big(I_0(0)+S_0\big)^{-1}\v\;\!\P_{j'}^\theta
+\tfrac i2\;\!\P_j^\theta\v\;\!C_{10}'(0)S_1\big(I_2(0)+S_2\big)^{-1}S_1C_{10}'(0)
\v\;\!\P_{j'}^\theta,
\end{align*}
and thus that
$$
\lim_{\kappa\to0}S^\theta(\lambda-\kappa^2)_{jj'}
=\delta_{jj'}-\P_j^\theta\v\big(I_0(0)+S_0\big)^{-1}\v\;\!\P_{j'}^\theta
+\P_j^\theta\v\;\!C_{10}'(0)S_1\big(I_2(0)+S_2\big)^{-1}S_1C_{10}'(0)
\v\;\!\P_{j'}^\theta
$$
due to \eqref{eq_start}.

(c) The proof is similar to that of (b) except for the last part. Indeed, we now have
$\kappa\in(0,\varepsilon)$ instead of $i\kappa\in(0,\varepsilon)$. Thus \eqref{eq_b2}
implies that
\begin{align*}
&\lim_{\kappa\to0}\beta_j^\theta(\lambda-\kappa^2)^{-1}\P_j^\theta\v\;\!
\M^\theta(\lambda,\kappa)\v\;\!\P_{j'}^\theta
\beta_{j'}^\theta(\lambda-\kappa^2)^{-1}\\
&=\tfrac12\;\!\P_j^\theta\v\big(I_0(0)+S_0\big)^{-1}\v\;\!\P_{j'}^\theta
-\tfrac12\;\!\P_j^\theta\v\;\!C_{10}'(0)S_1\big(I_2(0)+S_2\big)^{-1}S_1C_{10}'(0)
\v\;\!\P_{j'}^\theta,
\end{align*}
and
$$
\lim_{\kappa\to0}S^\theta(\lambda-\kappa^2)_{jj'}
=\delta_{jj'}-i\;\!\P_j^\theta\v\big(I_0(0)+S_0\big)^{-1}\v\;\!\P_{j'}^\theta
+i\;\!\P_j^\theta\v\;\!C_{10}'(0)S_1\big(I_2(0)+S_2\big)^{-1}S_1C_{10}'(0)
\v\;\!\P_{j'}^\theta.\qedhere
$$
\end{proof}

\begin{proof}[Proof of Theorem \ref{thm_cont_bis}]
We know from \eqref{eq_expansion_2} that
$$
\M^\theta(\lambda,\kappa)
=\big(J_0(\kappa)+S\big)^{-1}+\tfrac1{\kappa^2}\big(J_0(\kappa)+S\big)^{-1}S
J_1(\kappa)^{-1}S\big(J_0(\kappa)+S\big)^{-1},
$$
with $S$ the orthogonal projection on the kernel of the operator
$$
T_0=\u+\sum_{\{j\mid\lambda<\lambda_j^\theta-2\}}\tfrac{\v\;\!\P_j^\theta\v}
{\beta_j^\theta(\lambda)^2}
+i\sum_{\{j\mid\lambda\in I^\theta_j\}}\tfrac{\v\;\!\P_j^\theta\v}
{\beta_j^\theta(\lambda)^2}
-\sum_{\{j\mid\lambda>\lambda_j^\theta+2\}}\tfrac{\v\;\!\P_j^\theta\v}
{\beta_j^\theta(\lambda)^2}.
$$
Now, since $J_0(\kappa)=T_0+\kappa^2T_1(\kappa)$ with $T_1(\kappa)\in\Oas(1)$,
commuting $S$ with $\big(J_0(\kappa)+S\big)^{-1}$ gives
$$
\M^\theta(\lambda,\kappa)
=\big(J_0(\kappa)+S\big)^{-1}
+\tfrac1{\kappa^2}\Big(S\big(J_0(\kappa)+S\big)^{-1}+\Oas(\kappa^2)\Big)
SJ_1(\kappa)^{-1}S\Big(\big(J_0(\kappa)+S\big)^{-1}S+\Oas(\kappa^2)\Big),
$$
and an application of \cite[Lemma~2.5]{RT16} shows that
$\P_j^\theta\v\;\!S=0=S\v\;\!\P_j^\theta$ for each $j\in\{1,\dots,N\}$ such that
$\lambda\in I^\theta_j$. These relations, together with \eqref{eq_start}, imply the
equality \eqref{eq_S_vp}.
\end{proof}

\begin{proof}[Proof of Lemma \ref{lemma_Pi}]
For any $\xi\in C^\infty_{\rm c}(I_j^\theta)$ and $\lambda\in I_j^\theta$, one has
$$
\lim_{\varepsilon\searrow0}\big(\Theta_{j,\varepsilon}^\theta\xi\big)(\lambda)
=\tfrac i{2\pi}\pv\int_{I_j^\theta}\tfrac1{\mu-\lambda}\;\!\beta^\theta_j(\lambda)
\;\!\beta_j^\theta(\mu)^{-1}\xi(\mu)\;\!\d\mu+\tfrac12\xi(\lambda)
$$
with $\pv$ the symbol for the usual principal value. With some changes of variables,
it follows that for $f\in C_{\rm c}^\infty(\R)$ and $s\in\R$
$$
\lim_{\varepsilon\searrow0}
\big(\V_j^\theta\Theta_{j,\varepsilon}^\theta(\V_j^{\theta})^*f\big)(s)
=\tfrac i{2\pi}\pv\int_\R\tfrac{\cosh(t)^{1/2}}{\cosh(s)^{1/2}\sinh(t-s)}\;\!f(t)
\;\!\d t+\tfrac12f(s).
$$
Now, one has the identity
\begin{align*}
&\pv\int_\R\tfrac{\e^{t/2}+\e^{-t/2}}{(\e^{s/2}+\e^{-s/2})\sinh(t-s)}\;\!f(t)\;\!\d t\\
&=\tfrac12\pv\int_\R\tfrac{\e^{s/2}}{\e^{s/2}+\e^{-s/2}}
\left(\tfrac1{\sinh((t-s)/2)}+\tfrac1{\cosh((t-s)/2)}\right)f(t)\;\!\d t\\
&\quad+\tfrac12\pv\int_\R\tfrac{\e^{-s/2}}{\e^{s/2}+\e^{-s/2}}
\left(\tfrac1{\sinh((t-s)/2)}-\tfrac1{\cosh((t-s)/2)}\right)f(t)\;\!\d t.
\end{align*}
Therefore, one obtains that
\begin{align*}
&\lim_{\varepsilon\searrow0}
\big(\V_j^\theta\Theta_{j,\varepsilon}^\theta(\V_j^{\theta})^*f\big)(s)\\
&=\tfrac{i\e^{s/2}b_+(s)}{4\pi(\e^{s/2}+\e^{-s/2})}\;\!\pv\int_\R
\left(\tfrac1{\sinh((t-s)/2)}+\tfrac1{\cosh((t-s)/2)}\right)
\big(b_+^{-1}f\big)(t)\;\!\d t\\
&\quad+\tfrac{i\e^{-s/2}b_+(s)}{4\pi(\e^{s/2}+\e^{-s/2})}\;\!\pv\int_\R
\left(\tfrac1{\sinh((t-s)/2)}-\tfrac1{\cosh((t-s)/2)}\right)
\big(b_+^{-1}f\big)(t)\;\!\d t+\tfrac12f(s)\\
&=\tfrac i{4\pi}\;\!b_+(s)\cdot\pv\int_\R\tfrac1{\sinh((t-s)/2)}
\big(b_+^{-1}f\big)(t)\;\!\d t\\
&\quad+\tfrac i{4\pi}\;\!b_-(s)\cdot\pv\int_\R\tfrac1{\cosh((t-s)/2)}
\big(b_+^{-1}f\big)(t)\;\!\d t+\tfrac12f(s)\\
&=-\tfrac i{4\pi}\;\!b_+(s)\cdot\pv\int_\R\csch\big((s-t)/2\big)
\big(b_+^{-1}f\big)(t)\;\!\d t\\
&\quad+\tfrac i{4\pi}\;\!b_-(s)\cdot\pv\int_\R\sech\big((s-t)/2\big)
\big(b_+^{-1}f\big)(t)\;\!\d t+\tfrac12f(s)\\
&=-\tfrac12\big(b_+(X)\tanh(\pi D)b_+(X)^{-1}f\big)(s)
+\tfrac i2\big(b_-(X)\cosh(\pi D)^{-1}b_+(X)^{-1}f\big)(s)+\tfrac12f(s),
\end{align*}
where in the last equality we have used the formulas for the Fourier transform of the
functions $s\mapsto\csch(s/2)$ and $s\mapsto\sech(s/2)$ (see
\cite[Table 20.1]{Jef04}). This concludes the proof of \eqref{eq_kernel}.
\end{proof}

\begin{proof}[Proof of Lemma \ref{lemma_continuity}]
As a first observation, we note that there are two possibilities: either
$\lambda_{j'}^\theta<\lambda_j^\theta$, or $\lambda_j^\theta<\lambda_{j'}^\theta$. If
$\lambda_{j'}^\theta<\lambda_j^\theta$, then
$I^\theta_{j'}\setminus I_j^\theta=(\lambda_{j'}^\theta-2,\lambda_j^\theta-2)$, and we
say that we are in the generic case if $\lambda_j^\theta-2<\lambda_{j'}^\theta+2$ and
in the exceptional case if $\lambda_j^\theta-2=\lambda_{j'}^\theta+2$. On the other
hand, if $\lambda_j^\theta<\lambda_{j'}^\theta$, then
$I^\theta_{j'}\setminus I_j^\theta=(\lambda_j^\theta+2,\lambda_{j'}^\theta+2)$, and we
say that we are in the generic case if $\lambda_{j'}^\theta-2<\lambda_j^\theta+2$ and
in the exceptional case if $\lambda_{j'}^\theta-2=\lambda_j^\theta+2$ (see
\eqref{eq_spec_H^theta_0}). We present below only in the case
$\lambda_{j'}^\theta<\lambda_j^\theta$, since the other case is similar.

Since the function \eqref{eq_def_function} is continuous on
$
[\lambda_{j'}^\theta-2,\lambda_j^\theta-2]
\setminus\big(\T^\theta\cup\sigma_{\rm p}(H^\theta)\big)
$,
we only have to check that the function admits limits in $\B(\C^N)$ as
$
\lambda\to\lambda_\star\in[\lambda_{j'}^\theta-2,\lambda_j^\theta-2]
\cap\big(\T^\theta\cup\sigma_{\rm p}(H^\theta)\big)
$.
However, in order to use the asymptotic expansions of Proposition \ref{Prop_asymp},
we consider values $\lambda-\kappa^2\in\C$ with
$\lambda\in\big(\T^\theta\cup\sigma_{\rm p}(H^\theta)\big)$ and $\kappa\to0$ in a
suitable domain in $\C$ of diameter $\varepsilon>0$. Namely, we treat the three
following possible cases: when $\lambda=\lambda_{j'}^\theta-2$ and
$i\kappa\in(0,\varepsilon)$ (case 1), when $\lambda=\lambda_j^\theta-2$ and
$\kappa\in(0,\varepsilon)$ (case 2), and when
$
\lambda\in(\lambda_{j'}^\theta-2,\lambda_j^\theta-2)
\cap\big(\T^\theta\cup\sigma_{\rm p}(H^\theta)\big)
$
and $\kappa\in(0,\varepsilon)$ or $i\kappa\in(0,\varepsilon)$ (case 3). In each case,
we can choose $\varepsilon>0$ small enough so that
$
\{z\in\C\mid|z-\lambda|<\varepsilon\}
\cap\big(\T^\theta\cup\sigma_{\rm p}(H^\theta)\big)=\{\lambda\}
$
because $\T^\theta$ is discrete and $\sigma_{\rm p}(H^\theta)$ has no accumulation
point (see Remark \ref{remark_no_accumu}(b)).

(i) First, assume that $\lambda\in\sigma_{\rm p}(H^\theta)\setminus\T^\theta$ and let
$\kappa\in(0,\varepsilon)$ or $i\kappa\in(0,\varepsilon)$ with $\varepsilon>0$ small
enough. Then, we know from \eqref{eq_expansion_2} that
$$
\P_j^\theta\v\;\!\M^\theta(\lambda,\kappa)\v\;\!\P_{j'}^\theta
=\P_j^\theta\v\big(J_0(\kappa)+S\big)^{-1}\v\;\!\P_{j'}^\theta
+\tfrac1{\kappa^2}\P_j^\theta\v\big(J_0(\kappa)+S\big)^{-1}SJ_1(\kappa)^{-1}S
\big(J_0(\kappa)+S\big)^{-1}\v\;\!\P_{j'}^\theta
$$
with $S$, $J_0(\kappa)$ and $J_1(\kappa)$ as in point (ii) of the proof of Proposition
\ref{Prop_asymp}. Furthermore, the definitions of $S$ and $J_0(\kappa)$ imply that
$[S,J_0(\kappa)]\in\Oas(\kappa^2)$, and Lemma \ref{lemma_relations}(b) (applied with
$S$ instead of $S_1$) implies that $S\v\;\!\P_{j'}^\theta=0$. Therefore,
\begin{align*}
& \P_j^\theta\v\;\!\M^\theta(\lambda,\kappa)\v\;\!\P_{j'}^\theta\\
&=\Oas(1)+\tfrac1{\kappa^2}\;\!\P_j^\theta\v\big(J_0(\kappa)+S\big)^{-1}
SJ_1(\kappa)^{-1}
S\;\!\big\{\big(J_0(\kappa)+S\big)^{-1}S+\Oas(\kappa^2)\big\}\v\;\!\P_{j'}^\theta\\
&=\Oas(1).
\end{align*}
Since
$
\lim_{\kappa\to0}\beta_j^\theta (\lambda-\kappa^2)^{-2}
=|(\lambda-\lambda_j^\theta)^2-4|^{-1/2}
<\infty
$
for each $\lambda\in\sigma_{\rm p}(H^\theta)\setminus\T^\theta$, we thus infer that
the function \eqref{eq_def_function} (with $\lambda$ replaced by $\lambda-\kappa^2$)
admits a limit in $\B(\C^N)$ as $\kappa\to0$.

(ii) Now, assume that
$\lambda\in[\lambda_{j'}^\theta-2,\lambda_j^\theta-2]\cap\T^\theta$, and consider the
three above cases simultaneously. For this, we recall that $i\kappa\in(0,\varepsilon)$
in case 1, $\kappa\in(0,\varepsilon)$ in case 2, and $\kappa\in(0,\varepsilon)$ or
$i\kappa\in(0,\varepsilon)$ in case 3. Also, we note that the factor
$\beta_j^\theta(\lambda-\kappa^2)^{-2}$ does not play any role in cases 1 and 3, but
gives a singularity of order $|\kappa|^{-1}$ in case 2.

In the expansion \eqref{eq_grosse}, the first term (the one with prefactor $\kappa$)
admits a limit in $\B(\C^N)$ as $\kappa\to0$, even in case 2.

For the second term (the one with no prefactor) only case 2 requires a special
attention: in this case, the existence of the limit as $\kappa\to0$ follows from the
inclusion $C_{00}(\kappa)\in\Oas(\kappa)$ and the equality $\P_j^\theta\v S_0=0$,
which holds by Lemma \ref{lemma_relations}(a).

For the third term (the one with prefactor $1/\kappa$), in case 1 it is sufficient to
observe that $C_{00}(\kappa),C_{10}(\kappa)\in\Oas(\kappa)$ and that
$S_1\v\;\!\P_{j'}^\theta=0$ by Lemma \ref{lemma_relations}(a), and in case 3 it is
sufficient to observe that $C_{00}(\kappa),C_{10}(\kappa)\in\Oas(\kappa)$ and that
$S_1\v\;\!\P_{j'}^\theta=0$ by Lemma \ref{lemma_relations}(b). On the other hand, for
case 2, one must take into account the inclusions
$C_{00}(\kappa),C_{10}(\kappa)\in\Oas(\kappa)$, the equality $\P_j^\theta\v\;\!S_1=0$
given by Lemma \ref{lemma_relations}(a), and the equality $S_1\v\;\!\P_{j'}^\theta=0$
given by Lemma \ref{lemma_relations}(b) in the generic case, or by Lemma
\ref{lemma_relations}(a) in the exceptional case.

For the fourth term (the one with prefactor $1/\kappa^2$), in cases 1 and 3, it is
sufficient to recall that $C_{20}(\kappa)\in\Oas(\kappa^3)$,
$C_{21}(\kappa)\in\Oas(\kappa^2)$, and that
$S_2\v\;\!\P_{j'}^\theta=0=S_1\v\;\!\P_{j'}^\theta$. On the other hand, in case 2, one
must take into account the inclusions $C_{20}(\kappa)\in\Oas(\kappa^3)$,
$C_{21}(\kappa)\in\Oas(\kappa^2)$, the equality $\P_j^\theta\v S_2=0$, and the
equalities $S_2\v\;\!\P_{j'}^\theta=0=S_1\v\;\!\P_{j'}^\theta$ obtained from Lemma
\ref{lemma_relations}(b) in the generic case, or from Lemma \ref{lemma_relations}(a)
in the exceptional case.
\end{proof}

\begin{proof}[Proof of Lemma \ref{lemma_compact}]
We only consider the case where $\lambda_{j'}^\theta<\lambda_j^\theta$ and
$\lambda_j^\theta-2<\lambda_{j'}^\theta+2$, the other case being similar. Let
$\alpha:=\lambda_j^\theta-\lambda_{j'}^\theta\in(0,4)$ and define the two unitary
operators
\begin{align*}
U_1:\ltwo(I_j^\theta)\to\ltwo\big((0,4)\big),
\quad\xi\mapsto\xi\big(\lambda_j^\theta-2+\;\!\cdot\;\!\big),\\
U_2:\ltwo(I_{j'}^\theta\setminus I_j^\theta)\to\ltwo\big((0,\alpha)\big),
\quad\xi\mapsto\xi\big(\lambda_j^\theta-2-\;\!\cdot\;\!\big).
\end{align*}
Then, a straightforward computation gives for $f\in C_{\rm c}^\infty\big((0,4)\big)$
and $x\in(0,\alpha)$ the equality
$$
\big(U_2\;\!\vartheta\;\!U_1^*f\big)(x)
=\int_0^4\tfrac1{x+y}\;\!x^{1/2}(4+x)^{1/2}(\alpha-x)^{-1/4}(4-\alpha+x)^{-1/4}
y^{-1/4}(4-y)^{-1/4}f(y)\;\!\d y.
$$
We will prove the claim by showing that this integral operator on
$C_{\rm c}^\infty\big((0,4)\big)$ extends continuously to a compact operator from
$\ltwo\big((0,4)\big)$ to $\ltwo\big((0,\alpha)\big)$. For simplicity, we keep the
notation $\vartheta$ for the operator $U_2\;\!\vartheta\;\!U_1^*$.

Let $\eta\in C^\infty(\R;\R)$ satisfy $\eta(x)=0$ if $x\le\alpha/4$ and $\eta(x)=1$ if
$x\ge\alpha/2$, and set $\eta^\bot:=1-\eta$. The kernel
$\vartheta(\;\!\cdot,\;\!\cdot\;\!)$ of $\vartheta$ can then be decomposed as
$$
\vartheta(x,y)=\eta^\bot(x)\vartheta(x,y)\eta^\bot(y)
+\eta(x)\vartheta(x,y)\eta^\bot(y)+\eta^\bot(x) \vartheta(x,y)\eta(y)
+\eta(x) \vartheta(x,y)\eta(y)
$$
for each $(x,y)\in(0,\alpha)\times(0,4)$. The last three terms belong to
$\ltwo\big((0,\alpha)\times(0,4)\big)$, and therefore correspond to Hilbert-Schmidt
operators. For the first term, we set
$$
m(x):=
\begin{cases}
x^{1/4}\;\!(4+x)^{1/2}\;\!(\alpha-x)^{-1/4}\;\!(4-\alpha+x)^{-1/4}\eta^\bot(x)
& \hbox{$x\in(0,\alpha)$,}\\
0 & \hbox{$x\ge\alpha$,}
\end{cases}
$$
and observe that $\lim_{\searrow0}m(x)=0$. It follows that
\begin{equation}\label{eq_rem_term}
\eta^\bot(x)\vartheta(x,y)\eta^\bot(y)
=m(x)\cdot\tfrac1{x+y}\;\!x^{1/4}y^{-1/4}\cdot(4-y)^{-1/4}\eta^\bot(y)
\end{equation}
with $m\in C_0\big((0,\infty)\big)$ and
$(0,\infty)\ni y\mapsto(4-y)^{-1/4}\eta^\bot(y)$ bounded. Now, for the central factor
above, we have for any $f\in C_{\rm c}^\infty\big((0,\infty)\big)$ and
$x\in(0,\infty)$ the equalities
\begin{align*}
\int_0^\infty\tfrac1{x+y}\;\!x^{1/4}y^{-1/4}f(y)\;\!\d y
&=\int_0^\infty\tfrac{(x/y)^{1/4}}{(x/y)^{1/2}+(x/y)^{-1/2}}
\;\!(x/y)^{1/2}f(y)\;\!\tfrac{\d y}{x}\\
&=\int_\R\tfrac{\e^{-t/4}}{\e^{t/2}+\e^{-t/2}}\big(V_tf\big)(x)\;\!\d t
\quad\big(y=\e^tx\big)\\
&=\big(\varphi(A_+)f\big)(x)
\end{align*}
with
$$
\varphi(s):=\int_\R\e^{ist}\tfrac{\e^{-t/4}}{\e^{t/2}+\e^{-t/2}}\;\!\d t,\quad s\in\R.
$$
Since $\varphi$ coincides (up to a constant) with the inverse Fourier transform of the
$\lone$-function $t\mapsto\frac{\e^{-t/4}}{\e^{t/2}+\e^{-t/2}}$, we have the inclusion
$\varphi\in C_0(\R)$. Therefore, the operator on $C_{\rm c}^\infty\big((0,4)\big)$
with kernel \eqref{eq_rem_term} extends to the bounded operator from
$\ltwo\big((0,4)\big)$ to $\ltwo\big((0,\alpha)\big):$
\begin{equation}\label{eq_product}
(1_{(0,\alpha)})^*\cdot m(X_+)\varphi(A_+)\cdot(4-X_+)^{-1/4}\eta^\bot(X_+)
\cdot1_{(0,4)},
\end{equation}
with $X_+$ the operator of multiplication by the variable in
$\ltwo\big((0,\infty)\big)$ and $1_{(0,4)}$ (resp. $1_{(0,\alpha)}$) the inclusion of
$\ltwo\big((0,4)\big)$ (resp. $\ltwo\big((0,\alpha)\big)$) into
$\ltwo\big((0,\infty)\big)$. Finally, since $m\in C_0\big((0,\infty)\big)$ and
$\varphi\in C_0(\R)$, the operator $m(X_+)\varphi(A_+)$ is compact in
$\ltwo\big((0,\infty)\big)$ (see for example \cite[Sec.~4.4]{Ric16}), and thus the
operator \eqref{eq_product} is compact from $\ltwo\big((0,4)\big)$ to
$\ltwo\big((0,\alpha)\big)$.
\end{proof}


\begin{thebibliography}{10}
\bibitem{BSB}
J.~Bellissard, H.~Schulz-Baldes.
\newblock Scattering theory for lattice operators in dimension $d\geq3$.
\newblock {\em Rev. Math. Phys.} 24(8), 1250020, 51 pp., 2012.

\bibitem{BBP03}
F.~Bentosela, Ph.~Briet, L.~Pastur.
\newblock On the spectral and wave propagation properties of the surface Maryland model.
\newblock {\em J. Math. Phys.} 44(1), 1--35, 2003.

\bibitem{Cha00}
A.~Chahrour.
\newblock On the spectrum of the Schr\"odinger operator with periodic surface potential.
\newblock {\em Lett. Math. Phys.} 52(3), 197--209, 2000.

\bibitem{CS00}
A.~Chahrour, J.~Sahbani.
\newblock On the spectral and scattering theory of the Schr\"odinger operator with
surface potential.
\newblock {\em Rev. Math. Phys.} 12(4), 561--573, 2000.

\bibitem{Cor75}
H.~O.~Cordes.
\newblock On compactness of commutators of multiplications and convolutions, and
boundedness of pseudodifferential operators.
\newblock {\em J. Funct. Anal.} 18,  115--131, 1975.

\bibitem{FK04}
N.~Filonov, F.~Klopp.
\newblock Absolute continuity of the spectrum of a Schr\"odinger operator with a
potential which is periodic in some directions and decays in others.
\newblock {\em Doc. Math.}  9, 107--121, 2004.

\bibitem{Fra03}
R.~Frank.
\newblock On the scattering theory of the {L}aplacian with a periodic boundary
condition. I. {E}xistence of wave operators.
\newblock {\em Doc. Math.} 8, 547--565, 2003.

\bibitem{Fra04}
R.~Frank, R.~Shterenberg.
\newblock On the scattering theory of the Laplacian with a periodic boundary
condition. II. Additional channels of scattering.
\newblock {\em Doc. Math.} 9, 57–77, 2004.

\bibitem{GP}
G.M.~Graf, M.~Porta,
\newblock Bulk-edge correspondence for two-dimensional topological insulators.
\newblock {\em Comm. Math. Phys.} 324(3), 851--895, 2013.

\bibitem{IR1}
H.~Inoue, S.~Richard.
\newblock Index theorems for Fredholm, semi-Fredholm, and almost periodic operators: all in one example.
\newblock {\em J. Noncommut. Geom.} 13(4), 1359--1380, 2019.

\bibitem{IR2}
H.~Inoue, S.~Richard.
\newblock Topological Levinson's theorem for inverse square potentials: complex, infinite, but not exceptional.
\newblock {\em Rev. Roumaine Math. Pures Appl.} 64(2-3), 225--250, 2019.

\bibitem{IT19}
H.~Inoue, N.~Tsuzu.
\newblock Schr\"odinger Wave Operators on the Discrete Half-Line.
\newblock {\em Integr. Equ. Oper. Theory} 91(5), 91:42, 2019.

\bibitem{JL00}
V.~Jak$\check {\mathrm s }$i\' c, Y.~Last.
\newblock Corrugated surfaces and a.c. spectrum.
\newblock {\em Rev. Math. Phys.} 12(11), 1465--1503, 2000.

\bibitem{JN01}
A.~Jensen, G.~Nenciu.
\newblock A unified approach to resolvent expansions at thresholds.
\newblock {\em Rev. Math. Phys.} 13(6), 717--754, 2001.

\bibitem{Jef04}
A.~Jeffrey.
\newblock {\em Handbook of mathematical formulas and integrals}.
\newblock Elsevier Academic Press, San Diego, CA, third edition, 2004.

\bibitem{Kat95}
T.~Kato.
\newblock {\em Perturbation theory for linear operators}.
\newblock Classics in Mathematics. Springer-Verlag, Berlin, 1995.

\bibitem{KR06}
J.~Kellendonk, S.~Richard.
\newblock Levinson's theorem for Schr\"odinger operators with point interaction: a topological approach.
\newblock {\em J. Phys. A} 39(46), 14397--14403, 2006.

\bibitem{KR07}
J.~Kellendonk, S.~Richard.
\newblock Topological boundary maps in physics.
\newblock in {\em Perspectives in operator algebras and mathematical physics}, pages 105--121, Theta Ser. Adv. Math. 8, Theta, Bucharest, 2008.

\bibitem{KR08}
J.~Kellendonk, S.~Richard.
\newblock The topological meaning of Levinson's theorem, half-bound states included.
\newblock {\em J. Phys. A} 41(29), 295207, 2008.

\bibitem{PR}
K.~Pankrashkin, S.~Richard.
\newblock One-dimensional Dirac operators with zero-range interactions: spectral, scattering, and topological results.
\newblock {\em J. Math. Phys.} 55(6), 062305, 2014.

\bibitem{Ric16}
S.~Richard.
\newblock Levinson's theorem: an index theorem in scattering theory.
\newblock In {\em Spectral Theory and Mathematical Physics}, volume 254 of
{\em Operator Theory: Advances and Applications}, pages 149--203, Basel, 2016.
Birkh\"auser.

\bibitem{RT16}
S.~Richard, R.~Tiedra~de Aldecoa.
\newblock Resolvent expansions and continuity of the scattering matrix at embedded
thresholds: the case of quantum waveguides.
\newblock {\em Bull. Soc. Math. France} 144(2), 251--277, 2016.

\bibitem{RT17}
S.~Richard, R.~Tiedra~de Aldecoa.
\newblock Spectral and scattering properties at thresholds for the {L}aplacian in a
half-space with a periodic boundary condition.
\newblock {\em J. Math. Anal. Appl.} 446(2), 1695--1722, 2017.

\bibitem{SB16}
H.~Schulz-Baldes.
\newblock The density of surface states as the total time delay.
\newblock {\em Lett. Math. Phys.} 106(4), 485--507, 2016.

\bibitem{Sim05}
B.~Simon.
\newblock {\em Trace Ideals and Their Applications},
\newblock Mathematical Surveys and Monographs 120, AMS 2005.

\bibitem{Yaf92}
D.~Yafaev.
\newblock {\em Mathematical scattering theory}, volume 105 of {\em Translations of
Mathematical Monographs}.
\newblock American Mathematical Society, Providence, RI, 1992.
\end{thebibliography}

\end{document}